\newcommand{\ignore}[1]{}%
\newcommand{\floor}[1]{\ensuremath{\lfloor{#1}\rfloor}}%
\newcommand{\threeEC}{$3$-edge-colorable}
\newcommand{\f}{\overline{\pi}}
\newtheorem{theorem}{Theorem}[]
\newtheorem{lemma}[theorem]{Lemma}
\newtheorem{corollary}[theorem]{Corollary}
\newtheorem{proposition}[theorem]{Proposition}
\newtheorem{observation}[theorem]{Observation}
\newtheorem{invariant}{Invariant}
\newtheorem{question}{Question}
\newcommand{\pred}{\textnormal{pred}}
\newcommand{\ch}{\textnormal{ch}}
\newcommand{\nfc}{\textnormal{nfc}}
\newcommand{\F}{\mathcal{F}}
\newcommand{\G}{\mathcal{G}}
\newcommand{\OPT}{{\rm OPT}}
\newcommand{\ovGamma}{\overline{\Gamma}}
\newcommand{\ovc}{\overline{c}}
\newcommand{\half}{\frac{1}2}
\newcommand{\mycase}[1]{\noindent{\bf CASE #1\ }}
\begin{document}
\title{Beyond the Vizing's bound for at most seven colors\thanks{A preliminary version of this work (with a proper subset of the results) was presented at 12th Scandinavian Symposium and Workshops on Algorithm Theory (SWAT 2010) and is published as an extended abstract~\cite{swat}.}}

\author{
  Marcin Kami\'{n}ski\footnote{D\'epartement d'Informatique, Universit\'e Libre de Bruxelles and Institute of Informatics, University of Warsaw. Email: {\tt mjk@mimuw.edu.pl} } \and \L{}ukasz Kowalik\footnote{Institute of Informatics, University of Warsaw. Email: {\tt kowalik@mimuw.edu.pl}. Supported by ERC StG project PAAl no.\ 259515.}
}

\date{}

\maketitle

{
\begin{abstract}
Let $G=(V,E)$ be a simple graph of maximum degree $\Delta$. The edges of $G$ can be colored with at most $\Delta +1$ colors by Vizing's theorem. We study lower bounds on the size of subgraphs of $G$ that can be colored with $\Delta$ colors. 

Vizing's Theorem gives a bound of $\frac{\Delta}{\Delta+1}|E|$. This is known to be tight for cliques $K_{\Delta+1}$ when $\Delta$ is even. However, for $\Delta=3$ it was improved to $\frac{26}{31}|E|$ by Albertson and Haas [\emph{Parsimonious edge colorings}, Disc. Math. 148, 1996] and later to $\frac{6}7|E|$ by Rizzi [\emph{Approximating the maximum 3-edge-colorable subgraph problem}, Disc. Math. 309, 2009]. It is tight for $B_3$, the graph isomorphic to a $K_4$ with one edge subdivided. 

We improve previously known bounds for $\Delta\in\{3,\ldots,7\}$, under the assumption that for $\Delta=3,4,6$ graph $G$ is not isomorphic to $B_3$, $K_5$ and $K_7$, respectively. For $\Delta \geq 4$ these are the first results which improve over the Vizing's bound.
We also show a new bound for subcubic multigraphs not isomorphic to $K_3$ with one edge doubled.

In the second part, we give approximation algorithms for the Maximum $k$-Edge-Colorable Subgraph problem, where given a graph $G$ (without any bound on its maximum degree or other restrictions) one has to find a $k$-edge-colorable subgraph with maximum number of edges. 
In particular, when $G$ is simple for $k=3,4,5,6,7$ we obtain approximation ratios of $\frac{13}{15},\frac{9}{11}$, $\frac{19}{22}$, $\frac{23}{27}$ and $\frac{22}{25}$, respectively. We also present a $\frac{7}{9}$-approximation for $k=3$ when $G$ is a multigraph. The approximation algorithms follow from a new general framework  that can be used for any value of~$k$.

\end{abstract}}


\newcommand{\heading}[1]{\medskip\noindent\textbf{#1}.\ } 
\newcommand{\headingnsp}[1]{\noindent\textbf{#1}.\ } 

\section{Introduction}

A graph is said to be $k$-edge-colorable if there exists an assignment of colors from the set $\{1,\ldots,k\}$ to the edges of the graph, such that every two incident edges receive different colors. 
For a graph $G$, let $\Delta(G)$ denote the maximum degree of $G$.
Clearly, we need at least $\Delta(G)$ colors to color all edges of graph $G$.
On the other hand, the celebrated Vizing's Theorem~\cite{vizing} states that for simple
graphs $\Delta+1$ colors always suffice.
However, if $k<\Delta+1$ it is an interesting question how many edges of $G$ can be colored in $k$ colors.
The maximum $k$-edge-colorable subgraph of $G$ (maximum $k$-ECS in short) is a $k$-edge-colorable subgraph $H$ of $G$ with maximum number of edges.
By $\gamma_k(G)$ we denote the ratio $|E(H)|/|E(G)|$; when $|E(G)|=0$ we define $\gamma_k(G)=1$.
The {\sc Maximum $k$-Edge-Colorable Subgraph} problem (aka Maximum Edge $k$-coloring~\cite{FOW02}) is to compute a maximum $k$-ECS of a given graph.
It is known to be APX-hard when $k \geq 2$ \cite{CP80, H81, LG83}. 

The research on approximation algorithms for max $k$-ECS problem was initiated by Feige, Ofek and Wieder~\cite{FOW02}. Among other results, they suggested the following simple strategy.
Begin with finding a maximum $k$-matching $F$ of the input graph, i.e.\ a subgraph of maximum degree $k$ which has maximum number of edges. This can be done in polynomial time (see e.g.~\cite{schrijver}). Since a $k$-ECS is a $k$-matching itself, $F$ has at least as many edges as the maximum $k$-ECS. Hence, if we color $\rho|E(F)|$ edges of $F$ we get a $\rho$-approximation. It follows that studying large $k$-edge-colorable subgraphs of graphs of maximum degree $k$ is particularly interesting. Let us conclude this paragraph by the following proposition.

\begin{proposition}[Feige, Ofek and Wieder~\cite{FOW02}]
 \label{prop:approx}
 If every graph $G=(V,E)$ of maximum degree $k$ has a $k$-edge-colorable subgraph with at least $\rho|E|$ edges,
 and such a subgraph can be found in polynomial-time, then there is a $\rho$-approximation algorithm for the maximum $k$-ECS problem.\qed
\end{proposition}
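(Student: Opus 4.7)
The plan is to follow exactly the strategy outlined informally in the paragraph preceding the proposition, making each step precise. The algorithm takes the input graph $G$ and first computes a maximum $k$-matching $F$ of $G$, meaning a spanning subgraph of $G$ of maximum degree at most $k$ with the largest possible number of edges. As recalled in the excerpt, this can be done in polynomial time (see~\cite{schrijver}). Then, since $F$ has maximum degree at most $k$, the hypothesis of the proposition applies to $F$: there is a $k$-edge-colorable subgraph $H \subseteq F$ with $|E(H)| \geq \rho |E(F)|$, and it can be found in polynomial time. The algorithm returns $H$.

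The two ingredients needed to verify correctness are (i) that $H$ is a feasible solution, i.e.\ a $k$-edge-colorable subgraph of $G$, and (ii) that $|E(H)|$ compares well with the optimum. Item (i) is immediate because $H \subseteq F \subseteq G$ and $H$ is $k$-edge-colorable by choice. For (ii), let $H^\ast$ be a maximum $k$-ECS of $G$. Since $H^\ast$ is itself $k$-edge-colorable, every vertex of $H^\ast$ is incident to at most $k$ edges, so $H^\ast$ is in particular a $k$-matching of $G$. By the maximality of $F$ we get $|E(F)| \geq |E(H^\ast)| = \OPT$, and therefore
\[
|E(H)| \;\geq\; \rho |E(F)| \;\geq\; \rho \cdot \OPT,
\]
which is the desired approximation guarantee.

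There is no real obstacle here; the only thing to be a little careful about is to make sure the chain of inequalities uses the correct direction of comparison between $k$-matchings and $k$-ECSs, namely that every $k$-ECS is a $k$-matching (the converse is of course false, which is exactly why the problem is nontrivial). Polynomial running time is immediate from the two polynomial subroutines (maximum $k$-matching and the hypothesized subroutine finding the $\rho$-fraction $k$-edge-colorable subgraph), composed in sequence.
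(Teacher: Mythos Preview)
Your proof is correct and matches the paper's approach exactly: the paper states the proposition with a \qed\ and no formal proof, relying on the preceding paragraph, which is precisely the argument you have written out (compute a maximum $k$-matching $F$, note that $\OPT \le |E(F)|$ since every $k$-ECS is a $k$-matching, and apply the hypothesized subroutine to $F$).
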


\subsection{Large $\Delta$-edge-colorable subgraphs of graphs of maximum degree $\Delta$}
\label{intro:combin}
As observed in~\cite{FOW02}, if we have a simple graph $G$ of maximum degree $\Delta(G)$, and we find its $(\Delta+1)$-edge-coloring by the algorithm which follows from the proof of Vizing's Theorem, we can simply choose the $\Delta$ largest color classes to $\Delta$-color at least $\frac{\Delta}{\Delta+1}$ edges of $G$.
Can we do better? In general we cannot, and the tight examples are the graphs $K_{\Delta+1}$, for even values of $\Delta$ (see Lemma~\ref{lem:even-clique}).
However, for odd values of $\Delta$ the best upper bound is $\frac{\Delta+1}{\Delta+2-1/\Delta}$ which is attained by graph $B_\Delta$ (see Lemma~\ref{lem:upper-odd}). This raises two natural questions.
 
\begin{question}
\label{q-odd}
When $\Delta$ is odd, can we obtain a better lower bound than $\frac{\Delta}{\Delta+1}$ for simple graphs?
\end{question}

\begin{question}
\label{q-even}
When $\Delta$ is even and $G\ne K_{\Delta+1}$, can we obtain a better lower bound than $\frac{\Delta}{\Delta+1}$  for simple graphs?
\end{question}

\headingnsp{Previous Work}
Question~\ref{q-odd} has been answered in affirmative for $\Delta=3$ by Albertson and Haas~\cite{AH96}, namely they showed that $\gamma_3(G)\ge\frac{26}{31}$ for simple graphs.
They also showed that $\gamma_3(G)\ge\frac{13}{15}$ when $G$ is cubic simple graph.
Recently, Rizzi~\cite{R09} showed that $\gamma_3(G)\ge\frac{6}{7}$ when $G$ is a simple subcubic graph. The bound is tight by a $K_4$ with an arbitrary edge subdivided (we denote it by $B_3$). Rizzi also showed that when $G$ is a multigraph with no cycles of length 3, then $\gamma_3(G)\ge\frac{13}{15}$, which is tight by the Petersen graph. We are not aware on any results for $\Delta$ bigger than $3$.

\heading{Our Contribution}
In the view of the result of Rizzi it is natural to ask whether $B_3$ is the only subcubic simple graph $G$ with $\gamma_3(G)=\frac{6}{7}$.
We answer this question in affirmative, namely we show that $\gamma_3(G)\ge\frac{13}{15}$ when $G$ is a simple subcubic graph different from $B_3$.
This generalizes both the bound of Rizzi for triangle-free graphs and the bound of Albertson and Haas~\cite{AH96} for cubic graphs, and is tight by the Petersen graph.
For a subcubic multigraph, the bound $\gamma_3(G)\ge\frac{3}{4}$ (Vizing's Theorem holds for subcubic multigraphs) is tight by the $K_3$ with an arbitrary edge doubled (we denote it by $G_3$).
Again, we show that $G_3$ is the only tight example: $\gamma_3(G)\ge\frac{7}{9}$ when $G$ is a subcubic multigraph different from $G_3$.

The two results mentioned above follow relatively fast from the work of Rizzi~\cite{R09}.
Our main technical contribution is the positive answer to Questions~\ref{q-odd} and~\ref{q-even} for $\Delta\in\{4,\ldots,7\}$.
Namely, we show that 
\begin{itemize}
 \item 
$\gamma_4(G)\ge\frac{5}{6}$ when $G$ is a simple graph of maximum degree 4 different from $K_5$,
 \item 
$\gamma_5(G)\ge\frac{23}{27}$ when $G$ is a simple graph of maximum degree 5,
 \item 
$\gamma_6(G)\ge\frac{19}{22}$ when $G$ is a simple graph of maximum degree 6 different from $K_7$,
 \item 
$\gamma_7(G)\ge\frac{22}{25}$ when $G$ is a simple graph of maximum degree 7.
\end{itemize}
In order to achieve the above bounds we develop a mini-theory describing the structure of maximum $\Delta$-edge-colorable subgraphs and their colorings, which may be useful for further research. 

Very recently Mkrtchyan and Steffen~\cite{steffen} showed that every simple graph $G$ has a maximum $\Delta(G)$-edge-colorable subgraph $H$ such that $E(G)\setminus E(H)$ is a matching. Hence, our bounds combined with this result can be seen as a strengthening of Vizing's theorem: e.g.\ we show that every graph of maximum degree 4 distinct from $K_5$ has a $5$-edge-coloring such that the $4$ largest color classes contain at least $\frac{5}{6}|E|$ edges.

\subsection{Approximation algorithms for the max $k$-ECS problem}

\noindent\textbf{Previous work}. 
As observed in~\cite{FOW02}, the $k$-matching technique mentioned in the beginning of this section together with the bound $\gamma_k(G)\ge\frac{k}{k+1}$ of Vizing's Theorem gives a $\frac{k}{k+1}$-approximation algorithm for simple graphs and every $k\ge 2$.
Note that the approximation ratio approaches 1 as $k$ approaches $\infty$.
For multigraphs, we get a $\frac{k}{k+\mu(G)}$-approximation by Vizing's Theorem and a 
$k/\lfloor\frac{3}{2}k\rfloor$-approximation by the Shannon's Theorem on edge-colorings~\cite{shannon}.

Feige et al.~\cite{FOW02} show a polynomial-time algorithm which, for a given multigraph and an integer $k$, finds a subgraph $H$ such that $|E(H)|\ge {\rm OPT}$, $\Delta(H)\le k+1$ and $\Gamma(H)\le k+\sqrt{k+1}+2$, where ${\rm OPT}$ is the number of edges in the maximum $k$-edge colorable sugraph of $G$, and $\Gamma(H)$ is the odd density of $H$, defined as $\Gamma(H)=\max_{S\subseteq V(H), |S|\ge 2}\frac{|E(S)|}{\lfloor |S|/2\rfloor}$. The subgraph $H$ can be edge-colored with at most $\max\{\Delta+\sqrt{\Delta/2},\lceil \Gamma(H)\rceil\}\le\lceil k + \sqrt{k+1} + 2 \rceil$ colors in $n^{O(\sqrt{k})}$-time by an algorithm of Chen, Yu and Zang~\cite{chen-jco}. By choosing the $k$ largest color classes as a solution this gives a $k/\lceil k + \sqrt{k+1} + 2 \rceil$-approximation. One can get a slightly worse $k/(k+(1+3/\sqrt{2})\sqrt{k} + o(\sqrt{k}))$-approximation by replacing the algorithm of Chen et al. by an algorithm of Sanders and Steurer~\cite{ss} which takes only $O(nk(n+k))$-time.
Note that in both cases the approximation ratio approaches 1 when $k$ approaches $\infty$, similarly as in the case of simple graphs.

The results above work for all values of $k$. However, for small values of $k$ tailor-made algorithms are known, with much better approximation ratios.
The most intensively studied case is $k=2$.
The research of this basic variant was initiated by Feige et al.~\cite{FOW02}, who proposed an algorithm for multigraphs based on an LP relaxation with an approximation ratio of $\frac{10}{13} \approx 0.7692$.
They also pointed out a simple $\frac{4}{5}$-approximation for simple graphs.
This was later improved several times~\cite{CTW08,CT09}. 
In 2009 Kosowski~\cite{K09} achieved a $\frac{5}{6}$-approximation by a very interesting extension of the $k$-matching technique (see Section~\ref{sec:OTW}). Finally, Chen, Konno and Matsushita~\cite{chen-0-842} got a $0.842$-approximation, essentially by a very careful analysis of the structure of the $k=2$ case. 
 
Kosowski~\cite{K09} studied also  the case of $k=3$ and obtained a $\frac{4}{5}$-approximation for simple graphs, which was later improved by a $\frac{6}{7}$-approximation resulting from the mentioned result of Rizzi~\cite{R09}.

Finally, there is a simple greedy algorithm by Feige et al.~\cite{FOW02} with approximation ratio $1-(1-\frac{1}k)^k$, which is still the best result for the case $k=4$ in multigraphs.

{
  
\linespread{1.1}
\begin{table}[t]
       \begin{center}
\begin{tabular}{|c||c|c||c|c|}
                       \hline
                       $k$ & simple graphs & reference & multigraphs & reference  \\
                       \hline                  \hline
                       2 & $0.842$ & \cite{chen-0-842} & $\frac{10}{13}$ & \cite{FOW02} \\
                       3 & $\frac{13}{15}$ & {\bf this work} & $\frac{7}{9}$ & {\bf this work} \\
                       4 & $\frac{9}{11}$ & {\bf this work} & $1-(\frac{3}{4})^4> 0.683$ & \cite{FOW02} \\
                       5 & $\frac{23}{27}$ & {\bf this work} & $\frac{5}{7}$ & \cite{shannon,FOW02} \\
                       6 & $\frac{19}{22}$ & {\bf this work} & $\max\{\frac{2}{3},\frac{6}{6+\mu}\}$ & \cite{shannon,vizing,FOW02} \\
                       7 & $\frac{22}{25}$ & {\bf this work} & $\max\{\frac{7}{10},\frac{7}{7+\mu}\}$ & \cite{shannon,vizing,FOW02} \\
                       $8, \ldots, 13$ & $\frac{k}{k+1}$ & \cite{vizing,FOW02} &  $\max\{\frac{k}{\lfloor 3k/2\rfloor},\frac{k}{k+\mu}\}$ & \cite{shannon,vizing,FOW02} \\
                       $\ge 14$ & $\frac{k}{k+1}$ & \cite{vizing,FOW02} &  $\max\{\frac{k}{\lceil k+\sqrt{k+1}+2\rceil},\frac{k}{k+\mu}\}$ & \cite{chen-jco,vizing,FOW02} \\
                       \hline
               \end{tabular}
      \end{center}
       \caption{\label{piekna-tabelka}Best approximation ratios for the Maximum $k$-Edge-Colorable Subgraph problem}
\end{table}
 
}

\heading{Our contribution}
We generalize the technique that Kosowski used in his algorithm for the max 2-ECS problem so that it may be applied for an arbitrary number of colors.
Roughly, we deal with the situation when for a graph $G$ of maximum degree $k$ one can find in polynomial time a $k$-edge colorable subgraph $H$ with at least $\alpha|E(G)|$ edges, unless $G$ belongs to a family $\F$ of ``exception graphs'', i.e. $\gamma(G)<\alpha$.
As we have seen in the case of $k=3,4,6$ the set of exception graphs is small and in the case of $k=2$ the exceptions form a very simple family of graphs (odd cycles). 
The exception graphs are the only obstacles which prevent us from obtaining an $\alpha$-approximation algorithm (for general graphs) by using the $k$-matching approach.
In such situation we provide a general framework, which allows to obtain approximation algorithms with approximation ratio better than $\min_{A\in\F}\gamma_k(A)$.
See Theorem~\ref{th:meta-algorithm} for the precise description of our general framework.

By combining the framework and our combinatorial results described in Section~\ref{intro:combin} we get the following new results (see Table~\ref{piekna-tabelka}):
a $\frac{7}{9}$-approximation of the max-$3$-ECS problem for multigraphs,
a $\frac{13}{15}$-approximation of the max-$3$-ECS problem for simple graphs, 
a $\frac{9}{11}$-approximation of the max-$4$-ECS problem for simple graphs,
a $\frac{23}{27}$-approximation of the max-$5$-ECS problem for simple graphs,
a $\frac{19}{22}$-approximation of the max-$6$-ECS problem for simple graphs, and
a $\frac{22}{25}$-approximation of the max-$7$-ECS problem for simple graphs.
Note that for 4 up to 7 colors our algorithms are the first which break the barrier of Vizing's Theorem.
Although we were able to get improved approximation ratios only for at most seven colors, note that these are the most important cases, since the approximation ratio of the algorithm based on Vizing's theorem is very close to 1 for large number of colors.

\subsection{Notation} 
We use standard terminology; for notions not defined here, we refer the reader to \cite{Diestel}. 
Let $G=(V,E)$ be an undirected graph.
For a vertex $x$ by $N_G(x)$ we denote the set of neighbors of $x$ and $N_G[x]=N_G(x)\cup\{x\}$. 
For a set of vertices $S$ we denote $N_G(S)=\bigcup_{x\in S}N_G(x)\setminus S$ and $N_G[S]=\bigcup_{x\in S}N_G[x]$.
Moreover, $d_G(S)=\{uv \in E\ :\ u\in S, v\not\in S\}$. 
For two sets $X,Y\subseteq V$ we define $E_G(X,Y)=\{xy \in E\ :\ x\in X\setminus Y, y\in Y\setminus X\}$.
In all of the above denotations we omit the subscripts when it is clear what graph we refer to.
A graph with maximum degree 3 is called \emph{subcubic}.
Following~\cite{AH96}, let $c_k(G)$ be the maximum number of edges of a $k$-edge-colorable subgraph of $G$. We also denote $\overline{c}_k(G)=|E(G)|-c_k(G)$, 
$c(G)=c_{\Delta(G)}(G)$ and $\overline{c}(G)=\overline{c}_{\Delta(G)}(G)$.

\section{Large 3-edge-colorable subgraphs of graphs maximum degree 3}

In this section we will work with multigraphs. We will also need the following result on triangle-free multigraphs from Rizzi \cite{R09}. 

\begin{lemma}[Rizzi \cite{R09}] \label{lemma-Rizzi}
Every subcubic, triangle-free multigraph $G$ has a 3-edge-colorable subgraph with at least $\frac{13}{15}|E(G)|$ edges. Moreover, this subgraph can be found in polynomial time.
\end{lemma}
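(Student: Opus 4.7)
The plan is to start from a maximum 3-edge-colorable subgraph $H$ of $G$, fix a 3-edge-coloring $c$ of $H$, and set $F = E(G) \setminus E(H)$. The bound $|E(H)| \ge \frac{13}{15}|E(G)|$ is equivalent to $|F| \le \frac{2}{13}|E(H)|$, which I would prove by a charging scheme: each $F$-edge is assigned a fractional ``witness'' set of colored edges in its local neighborhood of total weight at least $\frac{13}{2}$, while each $H$-edge is loaded with total weight at most $1$, so that $\frac{13}{2}|F| \le |E(H)|$.

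The first step is to extract strong structural information around each $e = uv \in F$. Maximality of $H$ implies that the colors missing at $u$ in $H$ are disjoint from those missing at $v$, and standard Kempe-chain swaps show that whenever color $i$ is missing at $u$ and color $j$ at $v$ with $i \ne j$, $H$ must contain an $(i,j)$-alternating path from $u$ to $v$, starting with a $j$-edge and ending with an $i$-edge. This pins down the ``deficit pattern'' at the endpoints of every $F$-edge; in the generic case both $u$ and $v$ have degree $3$ in $G$ and degree $2$ in $H$, and the mandatory Kempe paths supply a fixed skeleton of colored edges near $e$. Triangle-freeness enters crucially here: without triangles, two distinct neighbors of $u$ cannot themselves be joined by an $H$-edge, so the Kempe paths leaving the neighborhood of $e$ cannot close up after a single step and must therefore be genuinely long.

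Given this local structure, I would choose the witness set of each $e \in F$ to be a prescribed portion of the Kempe skeleton around $e$, calibrated so that its total fractional size is at least $\frac{13}{2}$. The main obstacle is the overlap analysis: two nearby $F$-edges can compete for the same Kempe path, and the Petersen graph (with $|E(G)| = 15$, $|E(H)| = 13$, $|F| = 2$) shows that the bound $\frac{2}{13}$ is tight and leaves no slack. The crux is to use triangle-freeness once more to show that any $H$-edge carries total weight at most $1$; if an $H$-edge were over-claimed, the shared Kempe paths on the two conflicting $F$-edges could be combined into an augmenting structure that extends $H$ by one more colored edge, contradicting its maximality. Polynomial-time computability follows because all of the underlying operations are local Kempe swaps and augmenting exchanges computable in polynomial time, so iterating them from any initial 3-edge-coloring of $G$ terminates at a maximum 3-ECS in polynomial time.
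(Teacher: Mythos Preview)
The paper does not prove this lemma: it is quoted from Rizzi~\cite{R09} and used as a black box, so there is no ``paper's own proof'' to compare against. Your proposal must therefore be assessed on its own merits.

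As written, it is a plan rather than a proof, and the plan has a genuine error in the algorithmic part. You begin from a \emph{maximum} 3-edge-colorable subgraph $H$ and, at the end, claim that iterating local Kempe swaps and augmenting exchanges from an arbitrary initial 3-edge-coloring ``terminates at a maximum 3-ECS in polynomial time.'' This is false unless $\mathrm{P}=\mathrm{NP}$: deciding whether a cubic graph is 3-edge-colorable is NP-complete, and that decision is equivalent to asking whether the maximum 3-ECS has $|E(G)|$ edges. What one can hope for is that an $H$ which is merely \emph{locally} maximal with respect to the specific exchanges you actually use in the charging argument already satisfies $|F|\le\frac{2}{13}|E(H)|$; then those exchanges give a polynomial-time algorithm. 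But you never isolate a finite list of exchange rules, so at present you have no polynomial-time claim.

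On the combinatorial side, the plan is plausible in spirit but the hard part is entirely deferred. You assert that a witness set of weight $\ge\tfrac{13}{2}$ can be assigned to each $F$-edge and that every $H$-edge is loaded with total weight $\le 1$, but you neither specify the witness sets nor carry out the overlap analysis. Because the Petersen graph makes $\tfrac{13}{15}$ tight, there is no slack: any concrete scheme must be designed so that the Petersen configuration saturates both inequalities exactly, and showing that no other overlap pattern does worse is the entire content of the argument. Until the witness sets are written down explicitly and the overlap bound is verified case by case, this remains an outline, not a proof.
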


We need one more definition. Let  $G^*_5$ be the multigraph on 5 vertices obtained from the four-vertex cycle by doubling one edge of the cycle and adding a vertex of degree two adjacent to the two vertices of the cycle not incident with the double edge.

\begin{theorem} \label{theorem-13-15}
Let $G$ be a biconnected subcubic multigraph different from $G_3$, $B_3$ and $G^*_5$. There exists a 3-edge-colorable subgraph of $G$ with at least $\frac{13}{15} |E(G)|$ edges. Moreover, this subgraph and its coloring can be found in polynomial time.
\end{theorem}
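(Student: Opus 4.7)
The plan is induction on $|V(G)|$. If $G$ is triangle-free, Lemma~\ref{lemma-Rizzi} directly provides a 3-edge-colorable subgraph with at least $\frac{13}{15}|E(G)|$ edges. Otherwise, pick any triangle $T=abc$ of $G$. A short case analysis based on biconnectedness shows that (after dispensing with a finite list of small graphs such as $K_3$ or $K_4$, each of which is 3-edge-colorable outright and not among the exceptions) each of $a,b,c$ has its third incident edge leaving $T$, to external neighbors $a',b',c'$ (not necessarily distinct, since we work with multigraphs).

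The reduction forms $G'$ from $G$ by deleting $\{a,b,c\}$ with all their incident edges and inserting a single new vertex $v$ joined to $a',b',c'$ with multiplicities respecting any coincidences. Then $G'$ is subcubic, $|V(G')|=|V(G)|-2$ and $|E(G')|=|E(G)|-3$. Provided that $G'$ is biconnected and not one of $G_3, B_3, G^*_5$, the inductive hypothesis yields a 3-edge-colorable subgraph $H'\subseteq G'$ with $|E(H')|\ge\frac{13}{15}|E(G')|$.

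The key extension step is that $H'$ always lifts to a 3-edge-colorable subgraph $H\subseteq G$ with exactly three more edges. Edges of $H'$ not incident to $v$ are identified with their counterparts in $G$ and keep their colors; each edge $vx\in H'$ (for $x\in\{a',b',c'\}$) corresponds to a unique external edge of $G$ that inherits its color. It remains to add the three triangle edges of $T$ with a proper 3-coloring: since the colors of the $v$-edges present in $H'$ are pairwise distinct, each of $a,b,c$ enters this step with at most one pre-committed color, and a short check over the four cases of $0,1,2,3$ pre-colored external edges shows that a proper coloring of the three triangle edges respecting all pre-commitments always exists. This gives $|E(H)|=|E(H')|+3\ge\frac{13}{15}(|E(G)|-3)+3=\frac{13}{15}|E(G)|+\frac{2}{5}$, as required.

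The main obstacle is handling the cases where the inductive hypothesis cannot be applied to $G'$. First, $G'$ may coincide with one of the three exception graphs: each such coincidence pins $G$ down to a small explicit family. For instance $G'=G_3$ forces $G\in\{B_3, G^*_5\}$, both excluded by hypothesis; and $G'=B_3$ or $G'=G^*_5$ leaves only $O(1)$ possibilities for $G$, each of which is dispatched by exhibiting a concrete 3-edge-colorable subgraph (typically of size $\tfrac{9}{10}|E(G)|>\frac{13}{15}|E(G)|$). Second, $G'$ may fail to be biconnected, precisely when $v$ becomes a cut vertex of $G'$; this is resolved either by choosing a different triangle, or by applying the inductive hypothesis blockwise to the nontrivial blocks of $G'$ and stitching the resulting colorings together at $v$ via a permutation of the three colors to make them agree. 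Proving that these two failure modes together exhaust every obstruction, and that the resulting list of irreducible graphs is exactly $\{G_3, B_3, G^*_5\}$, is the crux of the technical work.
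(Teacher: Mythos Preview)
Your approach is essentially the paper's: your ``delete $\{a,b,c\}$ and add $v$ joined to $a',b',c'$'' is exactly the triangle contraction operation the paper uses, and the rest of the induction (Rizzi for the triangle-free case, the extension of the coloring to the three triangle edges, and the finite case analysis when $G'\in\{G_3,B_3,G^*_5\}$) matches.

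The one place you diverge is your ``second failure mode'': you worry that $v$ might be a cut vertex of $G'$ and propose a blockwise workaround. In fact this never happens. For a biconnected subcubic $G$ on at least five vertices, contracting a triangle $T$ always yields a biconnected $G'$. The reason is short: each vertex of $T$ has at most one edge leaving $T$ (degree at most $3$), so if $G'-v=G-T$ were disconnected, some component $C$ of $G-T$ would be attached to $T$ by a single edge, and the $T$-endpoint of that edge would already be a cut vertex of $G$. Any other vertex $u\ne v$ cannot be a cut vertex of $G'$ either, since $G-u$ is connected and contraction preserves connectedness. The paper simply states this fact and moves on; once you observe it, your proof collapses to exactly the paper's argument, and the ``stitching at $v$'' machinery is unnecessary.
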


\begin{proof}
We will prove the theorem by induction on the number of vertices of the multigraph. We introduce the operation of \emph{triangle contraction} which is to contract the three edges of a triangle (order of contracting is inessential) keeping multiple edges that appear. 
Note that since $G$ is biconnected and $G\ne G_3$, no triangle in $G$ has
a double edge, so loops do not appear after the triangle contraction operation.
If a multigraph is subcubic, then it will be subcubic after a triangle contraction. Notice that if a multigraph has at least five vertices, the operation of triangle contraction in subcubic multigraphs preserves biconnectivity. It is easy to check that that all subcubic multigraphs on at most 4 vertices, different from $G_3$, have a 3-edge-colorable subgraph with least $\frac{13}{15} |E(G)|$ edges.

Let $G$ be a biconnected subcubic multigraph with at least 5 vertices and different from $B_3$. If $G$ is triangle-free, then the theorem follows from Lemma \ref{lemma-Rizzi}. Let us assume that $G$ has at least one triangle $T$ and let $G'$ be the multigraph obtained from $G$ by contracting $T$.

We can assume that $G'$ is subcubic and biconnected. First, let us assume that $G'$ is not isomorphic to $G_3$, $B_3$, or $G^*_5$. $G'$ has less vertices than $G$ so by the induction hypothesis it has a 3-edge-colorable subgraph with at least $\frac{13}{15} |E(G')|$. Notice that it can always be extended to contain all three edges of $T$. Hence, $G$ has a 3-edge-colorable subgraph with at least $\frac{13}{15} |E(G')| + 3 \geq \frac{13}{15} |E(G)|$ edges.

Now we consider the case when $G'$ is isomorphic to $G_3$, $B_3$ or $G^*_5$. In fact, $G'$ cannot be isomorphic to $G_3$, because then $G$ would be $B_3$ or $G^*_5$. There are only three multigraphs from which $B_3$ can be obtained after triangle contraction; they all have 10 edges and a \threeEC~subgraph with $9 > \frac{13}{15} \cdot 10$ edges. Similarly, there are only three multigraphs from which $G^*_5$ can be obtained after triangle contraction; they all have 10 edges and a \threeEC~subgraph with $9 > \frac{13}{15} \cdot 10$ edges. 
\end{proof}

\begin{corollary}\label{corollary-13-15}
Let $G$ be a connected subcubic multigraph not containing $G_3$ as a subgraph and different from $B_3$ and $G^*_5$. There exists a 3-edge-colorable subgraph of $G$ with at least $\frac{13}{15} |E(G)|$ edges. Moreover, this subgraph and its coloring can be found in polynomial time.
\end{corollary}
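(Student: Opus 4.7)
I plan to prove this corollary by induction on $|V(G)|$ via block decomposition. If $G$ is biconnected, the hypotheses ($G_3\not\subseteq G$ and $G\notin\{B_3,G^*_5\}$) imply $G\notin\{G_3,B_3,G^*_5\}$, so Theorem~\ref{theorem-13-15} applies directly. Otherwise $G$ has a cut vertex and therefore a leaf block $B$ of the block tree with unique cut vertex $v$. Since $G$ is subcubic and $v$ has an edge outside $B$, we have $d_B(v)\le 2$; thus $B$ is either a bridge (single edge) or a biconnected graph on $\ge 3$ vertices with $d_B(v)=2$, in which case by subcubicity $v$ has degree $1$ in $H:=G-(V(B)\setminus\{v\})$.

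For the easy cases, if $B$ is a pendant edge $vu$ then I remove $u$ and apply induction to $G-u$ (connected, subcubic, $G_3$-free), extending by coloring $vu$ with a color free at $v$, which exists since $d_{G-u}(v)\le 2$; the small subcases $G-u\in\{B_3,G^*_5\}$ (with $|E(G)|=8$) are handled by a direct computation producing a $7$-edge $3$-ECS. If $B$ is biconnected with $B\notin\{B_3,G^*_5\}$, I apply Theorem~\ref{theorem-13-15} to $B$ and induction to $H$; since $v$ is a degree-$1$ vertex of $H$, $H$ cannot equal $B_3$ or $G^*_5$ (both have minimum degree $2$). The colorings of $B$ and $H$ are combined at $v$ by independently permuting color names in each piece: because $G$ is subcubic, the total number of colors used at $v$ across the two pieces is at most $d_G(v)\le 3$, so distinct colors for the edges incident to $v$ always fit.

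The hard case is $B\in\{B_3,G^*_5\}$, where $\gamma_3(B)=6/7<13/15$. The key structural observation is that $v$ must be the unique degree-$2$ vertex of $B$ (the only possible cut vertex of such a block), and since $v$'s remaining edge lies in a block where $v$ has degree $1$, that block must be a single-edge bridge $vw$ (any biconnected graph on $\ge 3$ vertices has minimum degree $2$). The trick is to recurse on $H-v$ rather than $H$: I apply induction to $H-v$ (connected, subcubic, $G_3$-free) to get $\ge\tfrac{13}{15}(|E(H)|-1)$ colored edges, use a $6$-edge $3$-ECS of $B$ in which both edges at $v$ are colored (exhibited concretely for both $B_3$ and $G^*_5$ by excluding a single edge not incident to $v$), and color the bridge $vw$ with a color chosen by permutation to be free at both endpoints. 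Such a color exists because $v$'s unique free color in $B$'s coloring can be set to any of the three colors by permuting $B$, and then by permuting $(H-v)$'s coloring we can make that same color free at $w$ (whose degree in $H-v$ is at most $2$). The count is $6+1+\tfrac{13}{15}(|E(H)|-1)=\tfrac{13}{15}|E(G)|+\tfrac{1}{15}\ge\tfrac{13}{15}|E(G)|$. In the degenerate subcases $H-v\in\{B_3,G^*_5\}$, $G$ is two exception blocks joined by the bridge $vw$ with $|E(G)|=15$, and a direct check yields a $3$-ECS with $6+1+6=13$ edges. The principal obstacle — covering the $1/15$ shortfall of each exception block — is overcome by this structural fact: every exception block is a leaf of the block tree adjoined via a bridge, and the bridge's trivially-colorable edge provides exactly the missing $1/15$.
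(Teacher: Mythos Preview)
Your proof is correct and follows essentially the same approach as the paper's: both argue via block decomposition, invoke Theorem~\ref{theorem-13-15} on non-exceptional blocks, observe that any exceptional block $B\in\{B_3,G^*_5\}$ must be a leaf block attached at its unique degree-$2$ vertex via a bridge, and absorb that bridge to cover the $1/15$ deficit (the paper groups the bridge with $B$ to get ratio $7/8\ge 13/15$, you group it with the recursive piece, but the arithmetic is the same). Your write-up is somewhat more explicit about the permutation arguments and the small direct-check cases, but the underlying idea is identical.
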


\begin{proof}
Suppose that the theorem is not true. Let $G$ be a counter-example with the least number of vertices.

It is easy to check that if every biconnected component of $G$ has a 3-edge-colorable subgraph with at least $\frac{13}{15}$ of its edges, then so does $G$. Thus, by Theorem \ref{theorem-13-15} we can assume that there exists a biconnected component $C$ of $G$ which is isomorphic to $B_3$ or $G^*_5$. Since $C$ is not the whole multigraph, there is an edge $vw$ with $v \in V(C)$ and $w\not\in V(C)$. 
If $C \cup vw$ is the whole multigraph, it does have a 3-edge-colorable subgraph with at least $\frac{13}{15} |E(G)|$ edges. Hence, $H := G[V \setminus (V(C) \cup \{w\})]$ is not empty. 

Notice that $vw$ is a bridge.  Since $C \cup \{vw\}$ has a 3-edge-colorable subgraph with at least $\frac{13}{15}$ of its edges, and $w$ is a cut-vertex, then -- by a similar reasoning as above -- $G[V(H) \cup \{w\}]$ does not have a 3-edge-colorable subgraph with at least $\frac{13}{15}$ of its edges. By minimality of $G$, $G[V(H) \cup \{w\}]$ is isomorphic to $B_3$ or $G^*_5$. However, then, $G$ is a cubic multigraph with 15 edges and it has a 3-edge-colorable subgraph with at least 13 edges; a contradiction.
\end{proof}

\begin{corollary}\label{corollary-7-9}
Every connected subcubic multigraph $G$ different from $G_3$ has a 3-edge-colorable subgraph with at least $\frac{7}{9} |E(G)|$ edges. Moreover, this subgraph and its coloring can be found in polynomial time.
\end{corollary}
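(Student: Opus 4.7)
The plan is to prove the statement by induction on $|V(G)|$, with small $G$ handled directly. In the inductive step, split into two cases based on whether $G$ contains $G_3$ as a subgraph. If $G$ does not contain $G_3$, then Corollary~\ref{corollary-13-15} yields a $\tfrac{13}{15}$-coloring (and $\tfrac{13}{15} > \tfrac{7}{9}$), unless $G \in \{B_3, G^*_5\}$. Both $B_3$ and $G^*_5$ have $7$ edges and admit a 3-edge-colorable subgraph with $6 > \tfrac{7}{9}\cdot 7$ edges, which is immediate by inspection.

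Now suppose $G$ contains a $G_3$ subgraph on vertices $\{a,b,c\}$, with $ab$ the double edge and $ac, bc$ single edges. Vertices $a$ and $b$ already have degree $3$ inside $G_3$, so they have no further incident edges in $G$. Since $G$ is connected and $G \neq G_3$, vertex $c$ must have a unique third neighbor $w \notin \{a,b\}$, and the edge $cw$ is a bridge. Let $G'' := G - \{a,b,c\}$; it is connected, subcubic, and $|E(G'')| = |E(G)| - 5$. If $G'' \neq G_3$, induction yields a 3-edge-coloring of $G''$ with at least $\tfrac{7}{9}|E(G'')|$ colored edges; in the exceptional subcase $G'' = G_3$, I would color $G''$ directly with $3$ of its $4$ edges. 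In either case, the plan is to extend the coloring of $G''$ to $G$ by coloring $4$ of the $5$ edges in $E_0 := \{ab,\, ab,\, ac,\, bc,\, cw\}$.

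The crux is the following extension claim: for every color $\alpha \in \{1,2,3\}$, there is a proper 3-edge-coloring of $E_0$ that leaves exactly one edge uncolored and assigns color $\alpha$ to $cw$. Concretely, set $cw = \alpha$, choose any $\beta \ne \alpha$ and set $ac = \beta$, leave $bc$ uncolored, and color the two copies of $ab$ with the two elements of $\{1,2,3\} \setminus \{\beta\}$. Because $w$ has degree at most $2$ in $G''$, its coloring uses at most two colors at $w$, so $\alpha$ may be chosen to avoid them, making the extension proper. The counting then closes: when $G'' \neq G_3$ we obtain $\tfrac{7}{9}|E(G'')| + 4 \ge \tfrac{7}{9}(|E(G'')|+5) = \tfrac{7}{9}|E(G)|$ colored edges (since $4 \ge 35/9$); when $G'' = G_3$ we obtain exactly $3 + 4 = 7 = \tfrac{7}{9}\cdot 9$ colored edges. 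The main obstacle I anticipate is the boundary case $G'' = G_3$, where induction is not applicable and the $\tfrac{7}{9}$ bound is met with no slack, so the gain of $4$ colored edges from $E_0$ must exactly compensate the deficit of $G''$ relative to $\tfrac{7}{9}$; verifying that $4$-out-of-$5$ is always attainable regardless of the color forced at $cw$ by the colors at $w$ in $G''$ is exactly what the explicit extension above provides.
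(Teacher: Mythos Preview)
Your proof is correct. The paper's argument also proceeds by induction on $|V(G)|$ but organizes the case split differently: it first checks whether $G$ is biconnected (applying Theorem~\ref{theorem-13-15} if so, or handling $B_3$ and $G^*_5$ directly), and otherwise takes an \emph{arbitrary} cut-edge, applies induction to both components, and deals with the situation where one side is $G_3$ by coloring four of the five edges of that $G_3$ together with the bridge. Your split on whether $G$ contains a copy of $G_3$ is a bit more streamlined: by invoking Corollary~\ref{corollary-13-15} (which already covers non-biconnected graphs), you avoid re-doing the cut-edge reduction in the $G_3$-free case, and in the $G_3$-containing case your observation that the copy of $G_3$ must hang off the rest of $G$ by a single bridge $cw$ immediately identifies the right cut. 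Both proofs converge on the same extension step (color the bridge with a color free at $w$, then four of the five edges on the $G_3$ side) and the same tight terminal configuration of two copies of $G_3$ joined by an edge.
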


\begin{proof}
Let $G$ be a connected multigraph different from $G_3$. We use induction on $|V(G)|$. First, assume that $G$ is also biconnected. If $G$ is isomorphic to $B_3$ or $G^*_5$, then it has a 3-edge-colorable subgraph with at least $\frac{6}{7} |E(G)| \geq \frac{7}{9} |E(G)|$ edges. Otherwise, from Theorem~\ref{theorem-13-15}, it has a 3-edge-colorable subgraph with at least $\frac{13}{15} |E(G)| \geq \frac{7}{9} |E(G)|$ edges. 

Now, let us assume that $G$ has a cut-vertex $v$. Since $G$ is subcubic, it has also a cut-edge $vw$. 
Let $C'$ and $C''$ be the connected components of $G-vw$. If both $C'$ and $C''$ have 3-edge-colorable subgraphs with at least $\frac{7}{9}$ of its edges, then so does $G$. Hence, by the induction hypothesis we can assume that at least one component (say, $C'$) is isomorphic to $G_3$.
If $C''$ is not isomorphic to $G_3$, then by the induction hypothesis we can color $\frac{7}9$ of the edges of $C''$. Next, we can color four out of the five edges of $C'\cup\{vw\}$ and thus we obtain a 3-edge-colorable subgraph of $G$ with more than $\frac{7}{9} |E(G)|$ edges.
In the remaining case the whole multigraph consists of two copies of $G_3$ with the degree 2 vertices connected by the edge $vw$. It has 9 edges and a 3-edge-colorable subgraph with 7 edges.
\end{proof}

\section{Large $\Delta$-edge-colorable subgraphs in simple graphs with maximum degree $\Delta$ from four to seven}

In this section by a graph we mean a simple graph. We prove the following theorem.

\begin{theorem}
\label{thm:main}
Let $G$ be a connected simple graph of maximum degree $\Delta\in\{4, 5, 6, 7\}$. Then $G$ has a $\Delta$-edge-colorable subgraph with at least 
\begin{enumerate}[a)]
 \item $\frac{5}{6}|E|$ edges when $\Delta=4$ and $G\ne K_5$,
 \item $\frac{23}{27}|E|$ edges when $\Delta=5$,
 \item $\frac{19}{22}|E|$ edges when $\Delta=6$ and $G\ne K_7$,
 \item $\frac{22}{25}|E|$ edges when $\Delta=7$.
\end{enumerate}
Moreover, the subgraph can be found in polynomial time.
\end{theorem}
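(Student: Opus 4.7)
The plan is to fix a maximum $\Delta$-edge-colorable subgraph $H$ of $G$ together with a $\Delta$-edge-coloring $c$ of $H$, choosing the pair $(H,c)$ to also optimize a suitable secondary criterion (for instance, lexicographic maximization of color-class sizes, or, after fixing $|E(H)|$, maximization of some ``density'' of the coloring near uncolored edges). Write $F=E(G)\setminus E(H)$, so that $|F|=\overline c(G)$, and the goal is to establish $|F|\le(1-\rho)|E(G)|$ for the appropriate $\rho\in\{5/6,\,23/27,\,19/22,\,22/25\}$. A first reduction, via the Mkrtchyan--Steffen result cited in the introduction (or by a direct local augmenting argument), lets us assume $F$ is a matching. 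Consequently for every $uv\in F$ both $\deg_H(u)$ and $\deg_H(v)$ are at most $\Delta-1$, and in particular at least one color is missing at each endpoint.

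The heart of the proof is a Kempe-chain analysis around each $uv\in F$. If the same color $\alpha$ were missing at both $u$ and $v$, we could extend $c$ to $uv$ and contradict maximality of $H$; hence for every pair $(\alpha,\beta)$ of distinct colors with $\alpha$ missing at $u$ and $\beta$ missing at $v$, the $(\alpha,\beta)$-Kempe chain starting at $u$ must end at $v$ (otherwise a Kempe swap followed by coloring $uv$ with $\alpha$ again contradicts maximality). Combined with Vizing-type fan arguments around $u$ and $v$, this forces many edges of $H$ to be present in a bounded-size neighborhood of $uv$ and to carry prescribed colors. I would crystallize these observations into a ``mini-theory'' of lemmas describing the forced local configurations around each uncolored edge -- the structural backbone promised in the introduction.

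With the local rigidity in hand, each of the four bounds would follow from a discharging/counting argument tailored to the corresponding value of $\Delta$. Assign every edge of $G$ an initial charge of $1$ and transfer charge from colored to uncolored edges according to rules derived from the forced local structure, aiming to leave each uncolored edge with final charge at least $1/\rho$. The exceptional graphs $K_5$ (when $\Delta=4$) and $K_7$ (when $\Delta=6$) surface precisely as the configurations in which the local rigidity propagates globally, so that the tight $\tfrac{\Delta}{\Delta+1}$ bound cannot be beaten; excluding them in the hypothesis is what allows the discharging to close.

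The principal technical obstacle I anticipate is the coupling between distinct uncolored edges whose local neighborhoods overlap: the same colored edge may be claimed as donor by several members of $F$, invalidating any purely local per-edge inequality. Resolving this will likely require either an amortized discharging that treats clusters of close uncolored edges together, or extra structural lemmas that limit how members of $F$ can be packed around a common vertex. I also expect genuinely separate arguments for the four values of $\Delta$: the deficit patterns at endpoints of an uncolored edge diversify quickly with $\Delta$, and the four ratios $\tfrac{5}{6},\tfrac{23}{27},\tfrac{19}{22},\tfrac{22}{25}$ do not fit a uniform formula, so each case in $\Delta\in\{4,5,6,7\}$ will be handled by its own dedicated discharging schedule, all built on the common Kempe-chain/fan framework.
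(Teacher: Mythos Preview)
Your general framework---Kempe chains, Vizing fans, discharging---is exactly right and matches the paper. But your first concrete step, invoking Mkrtchyan--Steffen to reduce $F$ to a matching, is a misstep: the paper does precisely the opposite. Rather than breaking $F$ into isolated edges, the paper introduces a potential $\Psi_0=(c,n_{\lfloor\Delta/2\rfloor},\ldots,n_1)$ that, after maximizing the number $c$ of colored edges, lexicographically \emph{maximizes} the number of large free components (connected components of the uncolored-edge graph). The reason is that a free component $Q$ with $|E(Q)|=\lfloor\Delta/2\rfloor$ has $|\f(Q)|\ge 2|E(Q)|\ge\Delta-1$ free colors, and this forces (via Lemma~\ref{lem:A} and Corollary~\ref{cor:C1}) that colored edges incident with $Q$ are essentially never incident with another nontrivial free component. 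If instead $F$ is a matching, a single uncolored edge $uv$ may have $|\f(u)|=|\f(v)|=1$, only two free colors total, and then nothing prevents many other uncolored edges from sitting in its immediate neighborhood---so the overlap problem you flag at the end becomes far worse, not better.

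In other words, the ``amortized discharging that treats clusters of close uncolored edges together'' that you anticipate needing is exactly what the paper implements, but the clusters are manufactured in advance by the potential function, not discovered after the fact. The paper then assigns to each free component $Q$ a vertex set $A(Q)=V(Q)\cup A_1(Q)$, where $A_1(Q)$ consists of endpoints of full edges in carefully chosen \emph{stable} fans (fans whose rotation does not decrease $\Psi_0$); charge flows from colored edges to vertices of $A(Q)$, with an asymmetric rule (R2) governed by constants $\epsilon_\Delta$ that depend on the \emph{sizes} of the two free components involved. This size-dependence is meaningless if all components are single edges, so the matching reduction would erase the very mechanism that makes the discharging close. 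Your plan would need to replace that first reduction with the potential-function idea (or something equivalent) before the rest of the outline can be carried out.
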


We will work with partially colored graphs. A {\em partial $k$-coloring} of a graph $G=(V,E)$ is a function $\pi : E \rightarrow \{1, \ldots, k\} \cup \{\bot\}$ such that if two edges $e_1, e_2\in E$ are incident then $\pi(e_1) \ne \pi(e_2)$, or $\pi(e_1) = \bot$, or $\pi(e_2) = \bot$. We will call the pair $(G,\pi)$ a {\em colored graph}.
We say an edge $e$ is {\em uncolored} if $\pi(e)=\bot$; otherwise, we say that $e$ is {\em colored}. For a vertex $v$, ${\pi}(v)$ is the set of colors of edges incident with $v$, i.e.\ $\pi(v)=\{\pi(e)\ :\ \text{$e$ is incident with $v$}\}\setminus\{\bot\}$, while $\overline{\pi}(v)=\{1, \ldots, k\}\setminus \pi(v)$ is the set of free colors at $v$. 

Our plan for proving Theorem~\ref{thm:main} is the following. 
We introduce a notion of the potential function $\Psi$, which measures ``the quality'' of a partial $\Delta$-coloring $\pi$ of a given graph $G$. It turns out that if we are unable to improve the potential of a partial coloring $\pi$ then the pair $(G,\pi)$ exhibits certain structure. 
We are going to determine this structure in a series of lemmas so that we are able to show that $\pi$ has few uncolored edges. 
In the proofs of the structural lemmas we show that if the claim of the lemma does not hold, one can find in polynomial time a new coloring so that the potential increases. Hence, in order to find a partial coloring which satisfies the claimed lower bound on the number of colored edges it suffices to start with an empty coloring and then, as long as the claim of some of the structural lemmas does not hold, find a new coloring with improved potential, as described in the relevant proof.
Since, as we will see, the potential can be increased only polynomial number of times, the whole procedure works in polynomial time.

\subsection{The structure of maximum $\Delta$-edge-colorable subgraphs}
\label{sec:structure}

Let $G$ be an arbitrary connected graph and let $\Delta$ denote its maximum degree.
In this section we study the structure of a partial edge-coloring $\pi$ of $G$, such that the number of colored edges cannot be increased.
We defer the choice of our potential $\Psi$ until we show the full motivation for its definition. 
However, the potential $\Psi$ grows with the number of colored edges, so the structure of $(G, \pi)$ described in this section applies also when $\Psi$ cannot be increased.
Another reason for deferring its full description is that we prefer to state the claims of this section under weaker assumptions since we believe they might be useful in further research.

Let $a$ and $b$ be two distinct colors and $x$ and $y$ be two distinct vertices.
An {\em $(ab, xy)$-path} is a path $P=x_1x_2\ldots x_t$ for some $t>0$, such that:
\begin{itemize}
 \item $x=x_1$ and $y=x_t$,
 \item the edges of $P$ are colored alternately with $a$ and $b$, i.e.\ $\pi(x_{i}x_{i+1})\in\{a,b\}$ and if $\pi(x_{i}x_{i+1})=a$ and $\pi(x_{j}x_{j+1})=b$ then $i \not\equiv j \bmod 2$,
 \item $P$ is maximal, i.e.\ $|\f(x)\cap\{a,b\}|=|\f(y)\cap\{a,b\}|=1$.
\end{itemize}
We also say that $P$ is an alternating path, $(ab,\cdot)$-path, $(ab,x)$-path, $(\cdot, xy)$-path or $(a,xy)$-path.

The idea of alternating paths dates back to Kempe~\cite{kempe1879geographical} and his first attempts to prove the Four Color Theorem. 
The basic property of an alternating path $P$ is that we can recolor the graph along $P$ so that
all edges of $P$ colored with $a$ get color $b$ and vice versa.
Note that as a result, if $a$ (resp. $b$) was free in one end of the path $P$, say in $x$ then in $\f(x)$ the color $a$ is replaced by $b$ (resp. $b$ is replaced by $a$), and for every vertex $v\not\in\{ x,y\}$ the set of free colors $\f(v)$ stays the same. 
We will often use this operation, called {\em swapping}.

Let $V_{\bot} = \{v \in V\ :\ \f(v)\ne \emptyset\}$. In what follows, $\bot(G,\pi)=(V_{\bot},\pi^{-1}(\bot))$ is called {\em the graph of free edges}.
Every connected component of the graph $\bot(G,\pi)$ is called a {\em free component}. If a free component has only one vertex, it is called {\em trivial}.
The set of all nontrivial free components of colored graph $(G,\pi)$ is denoted by $\nfc(G,\pi)$. 

\begin{lemma}
\label{lem:distinct-free}
Let $G$ be a graph and let $\pi$ be a partial coloring of $G$ which maximizes the number of colored edges.
For any free component $Q$ of $(G,\pi)$ and for every two distinct vertices $v,w\in V(Q)$
\begin{enumerate}[$(i)$]
 \item $\f(v) \cap \f(w) = \emptyset$,
  \item for every $a\in \f(v)$, $b\in\f(w)$ there is an $(ab, vw)$-path. 
\end{enumerate}
\end{lemma}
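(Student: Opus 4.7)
My plan is to derive (ii) from (i) via a Kempe-chain swap and to prove (i) by taking a counterexample $(\pi,v,w,a)$ that minimizes the $\bot$-distance $s$ between $v$ and $w$ in $\bot(G,\pi)$, over all maximum colorings $\pi$ and all bad triples $(v,w,a)$.

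For (i), if $s=1$ then the edge $vw$ is uncolored with $a$ free at both ends, and coloring it with $a$ contradicts the maximality of $\pi$. For $s\ge 2$, let $v=y_0,y_1,\ldots,y_s=w$ be a shortest $\bot$-path. Maximality applied to the uncolored edge $vy_1$ forces $\f(v)\cap\f(y_1)=\emptyset$, so $a\in\pi(y_1)$; since $y_1\in V_\bot$, I may pick any $b\in\f(y_1)$. Let $P'$ be the $(ab,y_1)$-path; it starts with the unique $a$-colored edge at $y_1$, and I call its other endpoint $u$. If $u\ne v$ then $v\notin V(P')$ (since $a\in\f(v)$ rules out $v$ being internal or equal to $u$), so after swapping $P'$ the color $a$ becomes free at $y_1$ while remaining free at $v$, letting me color $vy_1$ with $a$ and contradict maximality. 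Hence $u=v$. Swap $P'$; in the resulting maximum coloring $\pi'$ we have $a\in\f(y_1)$, and because $w\notin V(P')$ (its endpoints are $y_1,v$, and $a\in\f(w)$ forbids being internal) the color $a$ is still free at $w$. So $(\pi',y_1,w,a)$ is another counterexample, and since swapping does not alter $\bot(G,\cdot)$, the $y_1$-to-$w$ subpath of $y_1,y_2,\ldots,y_s$ has length at most $s-1$, contradicting the minimality of $s$.

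For (ii), fix $a\in\f(v)$ and $b\in\f(w)$; by (i), $a\ne b$ and $b\notin\f(v)$, so $v$ has a $b$-edge and the $(ab,v)$-path $P$ begins with it, with some other endpoint $u$. If $u\ne w$, then $w\notin V(P)$ (an internal $ab$-vertex uses both colors, contradicting $b\in\f(w)$), and swapping $P$ produces a maximum coloring in which $b\in\f(v)\cap\f(w)$, violating (i). Therefore $u=w$, giving the required $(ab,vw)$-path; the endpoint-maximality conditions $|\f(\cdot)\cap\{a,b\}|=1$ are automatic from (i).

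The main obstacle is the case $u=v$ in the proof of (i): the naive swap does not directly uncover a colorable edge, because swapping puts $a$ at $v$. The resolution is to re-anchor the counterexample from $(v,w)$ to $(y_1,w)$ in the post-swap coloring, where $a$ has reappeared as a common free color while the $\bot$-distance has strictly decreased; this is exactly the descent needed to close the minimal-counterexample induction.
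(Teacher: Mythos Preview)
Your proof is correct and follows essentially the same Kempe-chain descent as the paper: both argue (i) by induction/minimal-counterexample on the distance in $\bot(G,\pi)$, swapping an $(ab,\cdot)$-path to either color an uncolored edge or push the common free color one step closer, and both deduce (ii) from (i) by swapping the $(ab,v)$-path if it fails to end at $w$. The only cosmetic difference is that the paper anchors the swap at the $w$-end of the shortest $\bot$-path (using the penultimate vertex $x$), whereas you anchor it at the $v$-end (using $y_1$); the two are symmetric.
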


\begin{proof}
First we prove $(i)$ and we use induction on the length $d$ of the shortest path $P$ in $Q$ from $v$ to $w$.
The proof is by contradiction, i.e.\ we show that if $\f(v)\cap\f(w)\ne\emptyset$ then one can increase the number of colored edges. 
If $d=1$ just color $vw$ with a color from $\f(v)\cap\f(w)$.
Now we consider $d>1$.
Assume there is a color $a \in \f(v) \cap \f(w)$.
Let $x$ be the second to last vertex on $P$, i.e. $xw\in E(P)$. 
Since $x$ is incident with an uncolored edge, there is a free color at $x$, say $b$.
Since we have already proved the claim for $d=1$, we infer that $a\ne b$ and $b\not\in\f(w)$.
Let $R$ be the $(ab,w)$-path. We swap $R$. If $x$ is not incident with $R$ then $b$ is free at both $x$ and $w$ and
we just color $xw$ with $b$ and we increase the number of colored edges; a contradiction.
If $x$ is incident with $R$ it means that $R$ is an $(ab, wx)$-path. Hence after swapping, $a\in\f(v)\cap \f(x)$.
Since $v$ and $x$ are at distance $d-1$ in $Q$ we get a contradiction with the induction hypothesis.

To see $(ii)$, just consider the $(ab,v)$-path and note that by $b\not\in\f(v)$ by $(i)$ so the path has length at least one. If this path does not end in $w$ we can swap it and get $b\in\f(v)\cap \f(w)$, contradicting $(i)$. Also, by $(i)$, we have $v\ne w$.
\end{proof}

For a free component $Q$, by $\f(Q)$ we denote the set of free colors at the vertices of $Q$, i.e.\ $\f(Q)=\bigcup_{v\in V(Q)}\f(v)$.

\begin{corollary}
\label{cor:num-uncolored-0}
Let $G$ be a graph and let $\pi$ be a partial coloring of $G$ which maximizes the number of colored edges.
For any free component $Q$ of $(G,\pi)$ we have $|\f(Q)|\ge 2|E(Q)|$.
\end{corollary}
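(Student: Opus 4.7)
The plan is to combine the disjointness of free color sets within a free component (given by part $(i)$ of Lemma~\ref{lem:distinct-free}) with a per-vertex lower bound on the number of free colors in terms of the local degree inside $Q$, and then close via the handshake lemma on $Q$.

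More concretely, I would first observe that since $Q$ is a free component of $(G,\pi)$, by part $(i)$ of Lemma~\ref{lem:distinct-free} the sets $\f(v)$ for $v \in V(Q)$ are pairwise disjoint. Consequently
\[
|\f(Q)| \;=\; \Bigl|\bigcup_{v \in V(Q)} \f(v)\Bigr| \;=\; \sum_{v \in V(Q)} |\f(v)|.
\]
This step is the only place where the hypothesis that $\pi$ maximizes the number of colored edges enters the argument.

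Next I would lower bound $|\f(v)|$ for each $v \in V(Q)$ by $d_Q(v)$, the degree of $v$ in the subgraph $Q$ of uncolored edges. Indeed, the number of colored edges at $v$ is $d_G(v) - d_Q(v)$, each of these edges uses a distinct color, and $v$ has at most $\Delta$ incident edges in total, so the number of free colors at $v$ satisfies
\[
|\f(v)| \;=\; \Delta - (d_G(v) - d_Q(v)) \;\ge\; d_Q(v),
\]
using $d_G(v) \le \Delta$.

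Finally, summing the previous inequality over $V(Q)$ and applying the handshake lemma inside the graph $Q$ yields
\[
|\f(Q)| \;=\; \sum_{v \in V(Q)} |\f(v)| \;\ge\; \sum_{v \in V(Q)} d_Q(v) \;=\; 2|E(Q)|,
\]
which is the claimed inequality. There is no real obstacle here: the whole content of the statement is the disjointness given by Lemma~\ref{lem:distinct-free}$(i)$, and once this is granted the rest is a one-line degree count.
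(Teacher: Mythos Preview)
Your proof is correct and follows exactly the same approach as the paper's: the paper's one-line argument is $|\f(Q)| = \sum_{v\in V(Q)}|\f(v)| \ge \sum_{v\in V(Q)}\deg_Q(v) \ge 2|E(Q)|$, invoking Lemma~\ref{lem:distinct-free}$(i)$ for the first equality. You have simply made explicit the justification $|\f(v)| = \Delta - (d_G(v) - d_Q(v)) \ge d_Q(v)$ for the middle inequality, which the paper leaves implicit.
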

\begin{proof}
We have $|\f(Q)| = \sum_{v\in V(Q)}|\f(v)| \ge \sum_{v\in V(Q)}\deg_Q(v) \ge  2|E(Q)|$, where the first equality follows from Lemma~\ref{lem:distinct-free}$(i)$. 
\end{proof}

Since $|\f(Q)| \le \Delta$ we immediately get the following.

\begin{corollary}
\label{cor:num-uncolored}
Let $G$ be a graph and let $\pi$ be a partial coloring of $G$ which maximizes the number of colored edges.
Every free component $Q$ of $(G,\pi)$ has at most $\lfloor\frac{\Delta}{2}\rfloor$ edges.\qed
\end{corollary}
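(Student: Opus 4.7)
The plan is immediate: this corollary should follow from Corollary~\ref{cor:num-uncolored-0} by the trivial observation that the set of free colors of any free component is a subset of the $\Delta$ available colors, together with integrality of $|E(Q)|$.

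First I would invoke Corollary~\ref{cor:num-uncolored-0} directly on the free component $Q$, giving $|\f(Q)|\ge 2|E(Q)|$. Next I would note that since $\pi$ uses colors from $\{1,\ldots,\Delta\}$ we have $\f(Q)\subseteq\{1,\ldots,\Delta\}$ and therefore $|\f(Q)|\le \Delta$. Combining the two inequalities yields $2|E(Q)|\le \Delta$, i.e.\ $|E(Q)|\le \Delta/2$. Finally, since $|E(Q)|$ is a nonnegative integer, this sharpens to $|E(Q)|\le \lfloor \Delta/2\rfloor$, which is exactly the claim.

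There is no real obstacle here; the work has been done in Lemma~\ref{lem:distinct-free} (distinctness of free color sets within a component) and Corollary~\ref{cor:num-uncolored-0}. The only tiny subtlety is the floor, which comes for free from integrality. Hence I would simply write the corollary's proof as a two-line chain of inequalities followed by the floor remark, which is presumably why the authors mark the statement with \qed and omit a separate proof block.
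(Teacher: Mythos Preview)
Your proposal is correct and matches the paper's approach exactly: the authors write ``Since $|\f(Q)|\le\Delta$ we immediately get the following'' just before the corollary and mark it with \qed, which is precisely the chain $2|E(Q)|\le|\f(Q)|\le\Delta$ from Corollary~\ref{cor:num-uncolored-0} plus integrality that you spell out.
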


Let $Q_1,Q_2$ be two distinct free components of $(G,\pi)$ and assume that for some pair of vertices $x\in V(Q_1)$ and $y\in V(Q_2)$, there is an edge $xy\in E$ such that $\pi(xy)\in\f(Q_1)$. Then we say that {\em $Q_1$  sees $Q_2$ with $xy$}, or shortly {\em $Q_1$ sees $Q_2$}.

\begin{lemma}
\label{lem:A}
Let $G$ be a graph and let $\pi$ be a partial coloring of $G$ which maximizes the number of colored edges.
If $Q_1,Q_2$ are two distinct free components of $(G,\pi)$ such that $Q_1$ sees $Q_2$ then
$\f(Q_1) \cap \f(Q_2) = \emptyset$.
\end{lemma}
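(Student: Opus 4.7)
The plan is a proof by contradiction: assume $a \in \overline{\pi}(Q_1) \cap \overline{\pi}(Q_2)$ and, via at most two Kempe swaps followed by one local recoloring of $xy$, produce a new maximum partial coloring $\pi^{(3)}$ in which two distinct vertices of $V(Q_1)$ share a free color, contradicting Lemma~\ref{lem:distinct-free}(i). Fix $v_1 \in V(Q_1)$ and $v_2 \in V(Q_2)$ with $a \in \overline{\pi}(v_1) \cap \overline{\pi}(v_2)$ (each unique in its component by part~(i)); let $xy$ witness that $Q_1$ sees $Q_2$, with $x \in V(Q_1)$, $y \in V(Q_2)$, and $c = \pi(xy) \in \overline{\pi}(Q_1)$; and let $z$ be the unique vertex of $V(Q_1)$ with $c \in \overline{\pi}(z)$. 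Note that $z \ne x$, and, whenever $v_2 \ne y$, also $a \notin \overline{\pi}(y)$ by~(i) applied in $Q_2$.

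Step~1: if $v_1 \ne x$, pick any $b \in \overline{\pi}(x)$ (nonempty since $x \in V_\bot$, and $b \ne a$ by~(i)) and swap the $(ab, v_1 x)$-path supplied by Lemma~\ref{lem:distinct-free}(ii); call the result $\pi^{(1)}$. Step~2: if $v_2 \ne y$, pick any $b' \in \overline{\pi}^{(1)}(y)$ (automatically $b' \ne a$) and swap the $(ab', v_2 y)$-path in $\pi^{(1)}$; call the result $\pi^{(2)}$. Because every Kempe swap preserves $|\overline{\pi}(v)|$ at each vertex, neither $V_\bot$ nor the graph of free edges $\bot(G,\cdot)$ changes, so the vertex sets of $Q_1$ and $Q_2$ survive all operations; a routine check shows that after both swaps $a$ is free at both $x$ and $y$. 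Step~3: setting $a' = \pi^{(2)}(xy) \ne a$, form $\pi^{(3)}$ by uncoloring $xy$ and recoloring it with $a$; this leaves the number of colored edges invariant and adds $a'$ to $\overline{\pi}^{(3)}(x)$ (and to $\overline{\pi}^{(3)}(y)$).

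The main step is to exhibit a second vertex of $V(Q_1)$ at which $a'$ is free. A clean dichotomy drives the analysis: the colors on the Step~1 path $P$ are $a$ and $b$, and since $b \in \overline{\pi}(x)$ while $c \in \pi(x)$ we have $c \ne b$ automatically, so $xy$ lies on $P$ iff $a = c$, in which case $xy$ must be the last edge of $P$ and $\pi^{(1)}(xy) = b$. In the generic branch $a \ne c$ (which includes the case where both swaps are skipped), the three colors $a, b, b'$ are all distinct from $c$ (the last because $\pi^{(1)}(xy) = c$ is used at $y$, so $c \notin \overline{\pi}^{(1)}(y) \ni b'$), and hence no $c$-coloured edge is moved by either swap; thus $a' = c$, and $c$ remains free at $z$, giving $c \in \overline{\pi}^{(3)}(x) \cap \overline{\pi}^{(3)}(z)$ with $x, z \in V(Q_1)$. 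In the remaining branch $a = c$ (which forces both $v_1 \ne x$ and $v_2 \ne y$, so both swaps execute and $a' = b$), the value $b$ that Step~1 installs into $\overline{\pi}^{(1)}(v_1)$ survives Steps~2 and~3; here one checks that $b \notin \overline{\pi}^{(1)}(y)$ because $b$ is the color of the second-to-last edge of $P$ at $y$, which forces $b' \ne b$ and prevents the Step~2 swap from touching $xy$. This yields $b \in \overline{\pi}^{(3)}(x) \cap \overline{\pi}^{(3)}(v_1)$ with $x \ne v_1$ in $V(Q_1)$. Either way, Lemma~\ref{lem:distinct-free}(i) applied to the maximum coloring $\pi^{(3)}$ is contradicted. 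The main technical obstacle is isolating the $a = c$ sub-case and recognizing that the auxiliary color $b$ takes over the role of ``conflict witness'' that $c$ plays in the generic case.
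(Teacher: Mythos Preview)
Your proof is correct, and the overall architecture---move the shared free color to $x$ and to $y$ by at most two Kempe swaps, recolor $xy$, and contradict Lemma~\ref{lem:distinct-free}(i)---is the same as the paper's. The difference lies in how the sub-case ``shared color equals $\pi(xy)$'' (your $a=c$) is handled. The paper treats this as its first case and runs an induction on the distance inside $Q_2$ from a neighbour of $y$ to the vertex where $a$ is free, tracking at each step whether the inductive Kempe chain passes through $xy$ and, if so, uncoloring $xy$ and patching. You fold this case into the very same two-swap scheme: when $a=c$ the Step~1 chain necessarily ends with the edge $xy$ (it is the unique $a$-edge at $x$), so the swap recolours $xy$ to $b$; then $b$ takes over the role of conflict witness, since Step~1 deposits $b$ into $\overline{\pi}^{(1)}(v_1)$ and Step~3 deposits $b=a'$ into $\overline{\pi}^{(3)}(x)$. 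This is tidier and sidesteps the auxiliary induction entirely.

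One cosmetic remark: your justification for $b\notin\overline{\pi}^{(1)}(y)$ via ``the second-to-last edge of $P$'' is slightly off---after the swap that edge carries colour $a$, not $b$. The correct (and simpler) reason is that $\pi^{(1)}(xy)=b$, so $b$ is used at $y$ through $xy$ itself. The conclusion $b'\ne b$ and hence $\pi^{(2)}(xy)=b$ is unaffected.
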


\begin{proof}
Let $x\in V(Q_1)$, $y\in V(Q_2)$ be vertices such that $Q_1$ sees $Q_2$ with $xy$.
Denote $a=\pi(xy)$. Let $v$ be a vertex of $Q_1$ such that $a\in\f(v)$. The proof is by contradiction.

First assume $a\in \f(Q_2)$. Since $a\not\in \f(y)$ it follows that $|E(Q_2)|>0$ and, in particular, $y$ has a neighbor in $Q_2$, say $y'$. By Lemma~\ref{lem:distinct-free}$(i)$ there is exactly one vertex $z\in V(Q_2)$ such that $a\in\f(z)$. 
Now we use induction on the length $d$ of the shortest path $P$ in $Q_2$ from $y'$ to $z$.
If $d=0$, i.e. $z=y'$ we uncolor $xy$ and we color $yy'$ with $a$. As a result, the number of colored edges has not changed and we get a free component in which two vertices (namely, $v$ and $x$) share the same free color $a$, which is a contradiction with Lemma~\ref{lem:distinct-free}$(i)$.
Now assume $d>0$ and let $z'$ be the second to last vertex on $P$, i.e. $z'z\in E(P)$. 
Let $c$ be any color of $\f(z')$. Consider the $(ac,zz')$ path $R$ described in Lemma~\ref{lem:distinct-free}$(ii)$. 
If $R$ does not contain $xy$, we just swap $R$ (note that after the swapping we still have $a\in\f(Q_1)$) and proceed by induction hypothesis.
Otherwise let $R'$ be the maximal subpath of $R$ which starts in $z$ and does not contain $xy$. We uncolor $xy$, swap $R'$ and color $zz'$ with $c$. Again, the number of colored edges has not changed and we get a free component with two vertices (namely, $v$ and the endpoint of $xy$ which is not incident with $R'$) that share the same free color $a$.

Now assume that for some color $b\ne a$ we have $b\in\f(x')\cap\f(y')$ for some $x'\in V(Q_1)$ and $y'\in V(Q_2)$. If $x'\ne x$, choose any color $c \in \f(x)$ and swap the $(bc,x'x)$-path described in Lemma~\ref{lem:distinct-free}$(ii)$. 
We proceed analogously when $y'\ne y$. 
Hence we can assume that $b\in\f(x)\cap\f(y)$. 
Then we recolor $xy$ to $b$. 
As a result, $a\in \f(v)\cap\f(x)$ and $v$ and $x$ still belong to the same free component, which is a contradiction with Lemma~\ref{lem:distinct-free}$(i)$. 
\end{proof}

\ignore{
\begin{lemma}[(not used?)]
\label{lem:B}
Let $G$ be a graph and let $\pi$ be a partial coloring of $G$ which maximizes the number of colored edges.
Let $P$, $Q$ and $R$ be three distinct free components of $(G,\pi)$.
Assume that $|\f(P)|\ge 3$, and $P$ sees both $Q$ and $R$.
Then the sets $\f(P)$, $\f(Q)$ and $\f(R)$ are pairwise disjoint.  
\end{lemma}

\begin{proof}
Let $p_1\in V(P)$, $q\in V(Q)$ be the vertices such that $P$ sees $Q$ with $p_1q$.
Let $p_2\in V(P)$, $r\in V(R)$ be the vertices such that $P$ sees $R$ with $p_2r$.
Let $a$ be any color from $\f(P)\setminus\{\pi(p_1q),\pi(p_2r)\}$.

The proof is by contradiction: we assume that the sets $\f(P)$, $\f(Q)$ and $\f(R)$ are not pairwise disjoint.  
By Lemma~\ref{lem:A}, $\f(P)\cap f(Q)=\emptyset$ and $\f(P)\cap f(R)=\emptyset$. 
Hence there exists a color $b\in \f(Q)\cap f(R)$ such that $b\not\in\f(P)$ (in particular $b\not\in\{a,\pi(p_1q),\pi(p_2r)\}$).

Let $v$ be the vertex of $P$ such that $a\in\f(v)$. We swap the $(ab,v)$-path. Then $b\in\f(P)$ and $P$ still sees $Q$ and $R$. Moreover $b\in\f(Q)$ or $b\in\f(R)$. This is a contradiction with Lemma~\ref{lem:A}.
\end{proof}
}

\begin{lemma}
\label{lem:C}
Let $G$ be a graph and let $\pi$ be a partial coloring of $G$ which maximizes the number of colored edges.
Let $P$, $Q$ and $R$ be free components of $(G,\pi)$, $P\ne Q$ and $P\ne R$.
Assume that for some $x\in P$ and $y\in Q$ there is an edge $xy\in E(G)$ and for some $u\in P$ and $v\in R$ there is an edge $uv\in E(G)$, $xy\ne uv$.
If $\pi(xy)=\pi(uv)$ then there are no two distinct colors $a,b\in \f(P)$ such that $a\in \f(Q)$ and $b\in \f(R)$.
\end{lemma}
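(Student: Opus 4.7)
My plan is to assume for contradiction that distinct $a, b \in \f(P)$ exist with $a \in \f(Q)$ and $b \in \f(R)$, and drive the setup into a violation of Lemma~\ref{lem:A} via a Kempe-chain swap. Set $c := \pi(xy) = \pi(uv)$. A preliminary observation is that $c \notin \f(P)$: if $c$ were free at some vertex of $P$ then the edge $xy$ would witness that $P$ sees $Q$, and together with $a \in \f(P) \cap \f(Q)$ this would contradict Lemma~\ref{lem:A}. The symmetric argument with $uv$ and color $b$ confirms $c \notin \f(P)$.

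By Lemma~\ref{lem:distinct-free}$(i)$, let $p_a, p_b \in V(P)$, $q_a \in V(Q)$, and $r_b \in V(R)$ be the unique vertices carrying the indicated free colors. Since $a \in \f(p_a)$ and $c \notin \f(p_a)$, the $(ac, p_a)$-path $\alpha$ is well-defined and has length at least one. Swap $\alpha$ to obtain a new coloring $\pi_1$: it has the same number of colored edges and the same set of free components, so Lemmas~\ref{lem:distinct-free} and~\ref{lem:A} apply to it. The only free-set change inside $P$ is at $p_a$, where $a$ is replaced by $c$; since an $(ac)$-swap leaves color $b$ untouched, in $\pi_1$ the color $b$ remains free at $p_b$ and at $r_b$, and $c$ is now free at $p_a$. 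If $uv \notin E(\alpha)$, then $\pi_1(uv) = c$, so $uv$ witnesses that $P$ sees $R$ in $\pi_1$, and the common free color $b$ of $P$ and $R$ contradicts Lemma~\ref{lem:A} (applied to the still-maximum $\pi_1$). Hence $uv \in E(\alpha)$, and a symmetric argument with the $(bc, p_b)$-path $\beta$ forces $xy \in E(\beta)$.

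The main obstacle is the remaining case $uv \in E(\alpha)$ and $xy \in E(\beta)$. My plan is to analyze the other endpoint $z$ of $\alpha$. Since $z$ has exactly one of $\{a, c\}$ free, and $a$ is free in $P$ only at $p_a$ while $c \notin \f(P)$, one cannot have $z \in V(P)$. If $z \in V(Q) \setminus \{q_a\}$, then $z$ must initially have $c$ free; after the swap, $a$ becomes free at $z$ while remaining free at $q_a$ (which, not being on $\alpha$, is unaffected), giving two distinct vertices of the free component $Q$ sharing the free color $a$ in $\pi_1$, contradicting Lemma~\ref{lem:distinct-free}$(i)$. If $z = q_a$, the swap makes $c$ free at $q_a$, so $c$ is a common free color of $P$ and $Q$ in $\pi_1$; provided $xy \notin E(\alpha)$, $P$ still sees $Q$ via $xy$ in $\pi_1$ and Lemma~\ref{lem:A} is again violated. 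The truly delicate subcases are when $z$ lies outside $V(P) \cup V(Q)$, or when $xy$ lies on $\alpha$ as well; here I would combine the structural information obtained from the analogous analysis of $\beta$'s endpoint to force the final contradiction. This bookkeeping of interacting Kempe chains will be the hardest step of the proof.
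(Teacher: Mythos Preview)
Your proposal is incomplete: you explicitly defer the cases where the other endpoint $z$ of $\alpha$ lies outside $V(P)\cup V(Q)$, and where $xy\in E(\alpha)$, saying only that ``bookkeeping of interacting Kempe chains'' will handle them. That bookkeeping is not just tedious, it is genuinely unclear how to finish it, because once both $xy$ and $uv$ can sit on your Kempe chains you lose control of which free colors survive in which component.

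The reason you are stuck is your choice of swap color. You set $c:=\pi(xy)=\pi(uv)$ and swap an $(ac)$-chain; since $xy$ and $uv$ are themselves $c$-edges, they can lie on the chain, and then the swap changes their colors and destroys the very structure you are trying to exploit. The paper sidesteps this entirely. Instead of swapping with $\pi(xy)$, it swaps \emph{inside} $P$ using a free color at $x$: pick any $c'\in\f(x)$ and swap the $(ac',x p_a)$-path guaranteed by Lemma~\ref{lem:distinct-free}$(ii)$. Since $a\in\f(P)$ and $c'\in\f(x)$, you have $\pi(xy)=\pi(uv)\notin\{a,c'\}$, so neither $xy$ nor $uv$ lies on this path and their colors are untouched; moreover the path has both endpoints in $P$, so $\f(P)$ is unchanged as a set. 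After this swap $a\in\f(x)$. An analogous swap inside $Q$ (using any $d\in\f(y)$) puts $a\in\f(y)$. Now simply recolor $xy$ with $a$: the old color $\pi(xy)=\pi(uv)$ becomes free at $x$, so $P$ sees $R$ via $uv$, while $b$ is still in $\f(P)\cap\f(R)$, contradicting Lemma~\ref{lem:A}.

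So the missing idea is: do not swap with $\pi(xy)$; swap with a free color at $x$ (and at $y$), which keeps $xy$ and $uv$ off the Kempe chains, and then recolor $xy$ directly.
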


\begin{proof}
The proof is by contradiction.

Let $x'$ be the vertex of $P$ such that $a\in\f(x')$ and let $c$ be any color of $\f(x)$. 
By Lemma~\ref{lem:distinct-free}$(i)$, $a\ne c$.
Note that by Lemma~\ref{lem:A} we have $\pi(xy)\ne a$. In particular, $\pi(xy)=\pi(uv)\ne a,c$.
If $x'\ne x$ we swap the $(ac,xx')$ path described in Lemma~\ref{lem:distinct-free}$(ii)$. Note that the colors of $xy$ and $uv$ do not change.
Similarly, let $y'$ be the vertex of $Q$ such that $a\in\f(y')$ and let $d$ be any color of $\f(y)$. 
Again, $\pi(xy)=\pi(uv)\ne a,d$. If $y'\ne y$ we swap the $(ad,yy')$ path described in Lemma~\ref{lem:distinct-free}$(ii)$ and again this does not change the colors of $xy$ and $uv$. Observe also that the sets of free colors of $P$ and $Q$ have not changed.
Then we recolor $xy$ to $a$. After this operation, $\pi(uv)$ becomes free in $P$ and $P$ sees $R$ with $uv$. However, $b \in \f(P)\cap\f(R)$; a contradiction with Lemma~\ref{lem:A}.
\end{proof}

\begin{corollary}
\label{cor:C1}
Let $G$ be a graph and let $\pi$ be a partial coloring of $G$ which maximizes the number of colored edges.
 Let $Q$ be a free component of $(G,\pi)$ such that $\Delta-1 \le |\f(Q)| \le \Delta$. Then there are at most $\Delta-|\f(Q)|$ edges incident both with $Q$ and other nontrivial free components.
 Moreover, each such an edge is colored with a color from $\{1,\ldots,\Delta\}\setminus \f(Q)$.
\end{corollary}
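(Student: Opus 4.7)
The plan is to handle both claims together by focusing on an arbitrary edge $e = xy$ with $x \in V(Q)$ and $y \in V(Q')$ for some nontrivial free component $Q' \ne Q$. Such an edge must be colored, since otherwise $xy$ would place $y$ in the same free component as $x$.

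I would first establish the color constraint via Lemma~\ref{lem:A}: suppose for contradiction $\pi(e) \in \f(Q)$. Then $Q$ sees $Q'$ with $xy$, so Lemma~\ref{lem:A} gives $\f(Q) \cap \f(Q') = \emptyset$ and hence $|\f(Q')| \le \Delta - |\f(Q)| \le 1$. This contradicts Corollary~\ref{cor:num-uncolored-0}, which yields $|\f(Q')| \ge 2|E(Q')| \ge 2$ because $Q'$ is nontrivial. The same argument with the roles of $Q$ and $Q'$ swapped (which still goes through since $|\f(Q)| \ge \Delta - 1$ and $|\f(Q')| \ge 2$) shows moreover that $\pi(e) \notin \f(Q')$, a strengthening I will use in the count.

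For the count, the case $|\f(Q)| = \Delta$ is immediate since $\{1,\ldots,\Delta\} \setminus \f(Q) = \emptyset$ and the color constraint leaves no room for such edges. In the case $|\f(Q)| = \Delta - 1$, there is a unique color $c$ outside $\f(Q)$, and every edge of the described type is colored $c$. Suppose toward contradiction there are two such edges $e_1 = x_1 y_1$ and $e_2 = x_2 y_2$ going to nontrivial components $Q_1'$ and $Q_2'$; properness of $\pi$ forces $x_1 \ne x_2$. The main obstacle is to derive a contradiction, for which I would invoke Lemma~\ref{lem:C} with $P = Q$ and the two target components. By the strengthened color constraint, $c \notin \f(Q_1') \cup \f(Q_2')$, so $\f(Q_1'), \f(Q_2') \subseteq \f(Q)$, and both sets have size at least $2$ by Corollary~\ref{cor:num-uncolored-0}. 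I can then produce distinct $a \in \f(Q_1')$ and $b \in \f(Q_2')$: if the two sets differ, take $a$ from the symmetric difference and any $b$ in the other; if they coincide, pick any two distinct elements. This contradicts Lemma~\ref{lem:C}, which forbids such a pair. A subtle point worth noting is that the subcase $Q_1' = Q_2'$ is permitted by Lemma~\ref{lem:C}, whose hypothesis requires only $P \ne Q_1'$ and $P \ne Q_2'$, so no separate argument is needed.
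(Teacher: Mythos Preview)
Your proof is correct and follows essentially the same route as the paper's: both use Lemma~\ref{lem:A} together with the fact that nontrivial free components have at least two free colors to force any such edge to be colored with the unique color outside $\f(Q)$, and then invoke Lemma~\ref{lem:C} (after observing $\f(Q_1'),\f(Q_2')\subseteq\f(Q)$) to rule out two such edges. The only cosmetic differences are that you cite Corollary~\ref{cor:num-uncolored-0} where the paper cites Lemma~\ref{lem:distinct-free}$(i)$, and you package the observation $\pi(e)\notin\f(Q')$ as a preliminary ``strengthened constraint'' rather than deriving it inline.
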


\begin{proof}
 We can assume that $|\f(Q)|=\Delta-1$ for otherwise by Lemma~\ref{lem:A} there are no edges incident with $Q$ and other free components and the claim follows.
 We infer that there is exactly one color $c\not\in \f(Q)$.
 
 Assume to the contrary, that there are two edges $xy$ and $uv$ with the property described in the statement, with $x,u\in V(Q)$.
 Let $P$ and $R$ be the nontrivial free components such that $y\in V(P)$ and $v\in V(R)$, possibly $P=R$.
 Any nontrivial free component has at least two free colors by Lemma~\ref{lem:distinct-free}$(i)$, so in particular it has a color from $\f(Q)$, and hence by Lemma~\ref{lem:A} both $xy$ and $uv$ are colored with $c$ (this, in particular, proves the second part of the claim). 
 Then $c\not\in \f(P)\cup \f(R)$ for otherwise $P$ or $R$ sees $Q$; a contradiction with Lemma~\ref{lem:A}. It follows that both $\f(P)$ and $\f(R)$ are subsets of $\f(Q)$, both of cardinality at least 2, which is a contradiction with Lemma~\ref{lem:C}.
\end{proof}

Now we need another classical notion in the area of edge-colorings: the notion of a fan. We use a somewhat relaxed definition, due to Favrholdt, Stiebitz and Toft~\cite{stiebitz}, adapted to our setting of partially colored graphs.
Let $(G,\pi)$ be a partially edge-colored graph and let $xy$ be an uncolored edge of $G$.
An {\em $(x,y)$-fan} is a sequence of edges $F =
(xy_1, \ldots, xy_{\ell})$, where $y_1=y$ and for each $i=2,\ldots,\ell$ there is an index $\pred_F(i)<i$ such that the edge $xy_i$ is colored with a color $\pi(xy_i)\in \f(y_{\pred_F(i)})$. We say that a fan is {\em maximal} when it is not a proper prefix of another fan. The vertices $y_2,\ldots,y_{\ell}$ are called {\em ends} of $F$.
A proof of the following fact can be found in~\cite{stiebitz}; see Theorem 2.1: point (a) below appears explicitly while point (b) can be found in the proof.

\begin{lemma}[Favrholdt et al.~\cite{stiebitz}]
\label{lem:fan1}
 Let $F = (xy_1, \ldots, xy_{\ell})$ be a maximal fan in a partial $\Delta$-edge-coloring $(G,\pi)$ such that the number of colored edges cannot be increased. Then 
 \begin{enumerate}[(a)]
  \item if $1\le i < j \le \ell$ then $\f(y_i) \cap \f(y_j)=\emptyset$ and 
  \item $\{\pi(xy_2),\ldots,\pi(xy_{\ell})\} = \bigcup_{i=1}^{\ell}\f(y_i)$, 
 \end{enumerate}
\end{lemma}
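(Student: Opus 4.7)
The plan is to reduce both parts to a \emph{shifting} operation on the fan, combined with Kempe chain swaps. Given an index $k\in\{1,\ldots,\ell\}$, let $k_0=k$ and $k_{t+1}=\pred_F(k_t)$ until $k_s=1$; note that $k_0>k_1>\cdots>k_s$ by the strict inequality $\pred_F(i)<i$. \emph{Shifting} $F$ to $y_k$ means, for $t=0,\ldots,s-1$, reassigning $\pi(xy_{k_{t+1}})\leftarrow\pi(xy_{k_t})$ and finally setting $\pi(xy_{k_0})\leftarrow\bot$. Each reassigned color is free at the relevant $y$-endpoint by the definition of $\pred_F$, and the multiset of colors used at $x$ is preserved, so this is a legal partial coloring with the same number of colored edges in which $xy_k$ is the unique uncolored fan edge.

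For part $(b)$, the inclusion $\subseteq$ is immediate from $\pi(xy_i)\in\f(y_{\pred_F(i)})$. For $\supseteq$, fix $c\in\f(y_i)$. If $c\in\f(x)$, shifting $F$ to $y_i$ preserves $c\in\f(x)$ (shifting only rearranges colors of fan edges at $x$, and $c$ is not one of them) and $c\in\f(y_i)$ (shifting only adds $\pi(xy_i)$ to $\f(y_i)$); coloring $xy_i$ with $c$ would then contradict the maximality of $\pi$. Hence $c=\pi(xz)$ for some edge $xz$. If $z\notin\{y_1,\ldots,y_\ell\}$, then appending $xz$ to $F$ with $\pred_F(\ell+1):=i$ would extend $F$, contradicting the maximality of $F$. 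So $z=y_j$ for some $j$.

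Part $(a)$ is the main obstacle. Assume $c\in\f(y_i)\cap\f(y_j)$ with $i<j$ and pick $d\in\f(x)$. The case $c=d$ is already handled by the argument above, so assume $c\ne d$. Consider the Kempe $(cd)$-subgraph of $(G,\pi)$: each of $y_i,y_j$ is either isolated (both $c,d$ free) or an endpoint of a $(cd)$-alternating path, while $x$ is an endpoint of at most one such path. Therefore one of $y_i,y_j$, call it $y_r$, lies on a $(cd)$-alternating path (possibly trivial) that avoids $x$. Swapping colors along this path leaves $x$ and every fan edge at $x$ untouched and places $d$ in $\f(y_r)$. Shifting to $y_r$ and coloring $xy_r$ with $d$ then contradicts the maximality of $\pi$.

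The main subtlety is verifying that the shift performed after the Kempe swap is still valid, i.e.\ that the pred-chain relations $\pi(xy_{k_t})\in\f(y_{k_{t+1}})$ continue to hold in the updated coloring. Since the swap only modifies free color sets at the two endpoints of the swapped path (exchanging $c$ and $d$ there), and since $d\in\f(x)$ rules out $\pi(xy_{k_t})=d$, the only problematic situation is when the other endpoint $w$ of the path coincides with some $y_{k_{t+1}}$ and $\pi(xy_{k_t})=c$. This is handled by choosing $r$ to be the smaller of $i,j$ (so that the strict inequality $k_{t+1}<r$ excludes $w=y_j$) or, alternatively, by performing the shift before the swap; the resulting verification follows the standard line of Vizing fan arguments as in~\cite{stiebitz} and is routine once the shifting operation is in place.
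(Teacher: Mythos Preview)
The paper does not give its own proof of this lemma; it simply cites Favrholdt, Stiebitz and Toft~\cite{stiebitz}. Your shift-plus-Kempe argument is exactly the standard one from that source, and your treatment of part~(b) is complete and correct.

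In part~(a), however, the final paragraph does not close the case. Two issues: First, you cannot in general ``choose $r$ to be the smaller of $i,j$''; $r$ is \emph{forced} to be whichever of $y_i,y_j$ lies off the $(cd)$-path through $x$, and that may well be $y_j$. Second, even when $r=i$ is available, excluding $w=y_j$ is the wrong target. The problematic event is $w=y_{k_{t+1}}$ for some $t$ with $\pi(xy_{k_t})=c$; the pred relation then forces $c\in\f(y_{k_{t+1}})$, so $y_{k_{t+1}}$ is itself a fan vertex with $c$ free and index $k_{t+1}<r$. Nothing you wrote rules out such a third vertex---and ruling it out is precisely part~(a).

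The clean fix is to take the counterexample pair $(i,j)$ to be lexicographically minimal. If $r=i$ (chosen whenever possible), any $m=k_{t+1}<i$ with $c\in\f(y_m)$ yields a smaller pair $(m,i)$, contradiction. If $r=j$ is forced, then $y_i$ lies on the $x$-component and hence $y_i\ne w$; any other $m=k_{t+1}<j$ with $c\in\f(y_m)$ yields a smaller pair $(m,i)$ or $(i,m)$. With this minimality hypothesis your swap-then-shift goes through verbatim; as written, the last paragraph defers the essential step to the reference rather than carrying it out.
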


For a fan $F = (xy_1, \ldots, xy_{\ell})$, if $\f(y_i)=\emptyset$ then we say $y_i$ is a {\em full vertex} and $xy_i$ is a {\em full edge}. 

\begin{corollary}
\label{cor:num-full}
Let $G$ be a graph and let $\pi$ be a partial coloring of $G$ which maximizes the number of colored edges.
 Any maximal $(x,y)$-fan $F$ in $(G,\pi)$ has at least $|\f(y)|$ full edges.
\end{corollary}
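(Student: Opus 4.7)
The plan is to combine the two parts of Lemma~\ref{lem:fan1} with a simple double-counting argument. Write $F = (xy_1, \ldots, xy_\ell)$ with $y_1 = y$.

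First I would compute $\sum_{i=1}^{\ell}|\f(y_i)|$ in two different ways. Since the edges $xy_2, \ldots, xy_\ell$ are pairwise incident at $x$ and colored, the colors $\pi(xy_2), \ldots, \pi(xy_\ell)$ are $\ell-1$ distinct values, so $|\{\pi(xy_2), \ldots, \pi(xy_\ell)\}| = \ell - 1$. Then Lemma~\ref{lem:fan1}(b) gives $|\bigcup_{i=1}^{\ell}\f(y_i)| = \ell - 1$, and by the disjointness asserted in Lemma~\ref{lem:fan1}(a) this union has cardinality $\sum_{i=1}^{\ell}|\f(y_i)|$. Hence
\[
\sum_{i=1}^{\ell}|\f(y_i)| \;=\; \ell - 1.
\]

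Next I would single out the contribution of $y_1 = y$ and note that a full edge in $F$ is precisely an edge $xy_i$ (necessarily with $i \geq 2$, because $xy$ is uncolored and so $\f(y) \neq \emptyset$) such that $|\f(y_i)| = 0$. Among the $\ell - 1$ indices $i \in \{2, \ldots, \ell\}$, those that are not full contribute at least $1$ each to the sum $\sum_{i=2}^{\ell}|\f(y_i)| = (\ell-1) - |\f(y)|$. Therefore the number of non-full ends in $\{y_2, \ldots, y_\ell\}$ is at most $(\ell - 1) - |\f(y)|$, and consequently the number of full edges is at least $(\ell - 1) - ((\ell - 1) - |\f(y)|) = |\f(y)|$.

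There is essentially no obstacle here: everything is an immediate consequence of Lemma~\ref{lem:fan1}, and the only thing to be careful about is to check that $\pi(xy_2), \ldots, \pi(xy_\ell)$ really are $\ell - 1$ distinct colors (true because they are the colors of distinct edges all meeting at $x$), and to exclude the index $i = 1$ from the count of full edges (correct because $xy_1 = xy$ is uncolored by assumption and $\f(y_1) = \f(y) \neq \emptyset$).
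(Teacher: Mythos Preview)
Your proof is correct and follows essentially the same approach as the paper's: both derive $\sum_{i=1}^{\ell}|\f(y_i)| = \ell - 1$ from Lemma~\ref{lem:fan1} and then bound the number of full edges by observing that each non-full end among $y_2,\ldots,y_\ell$ contributes at least $1$ to this sum. You spell out explicitly why the colors $\pi(xy_2),\ldots,\pi(xy_\ell)$ are distinct and why $y_1$ cannot be full, which the paper leaves implicit, but the argument is the same.
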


\begin{proof}
By Lemma~\ref{lem:fan1}, $\sum_{i=1}^{\ell}|\f(y_i)| = \ell - 1$. 
Let $f$ be the number of full edges of $F$. 
Then clearly, $\sum_{i=1}^{\ell}|\f(y_i)|\ge |\f(y)| + \ell - 1 - f$. Hence, $f \ge |\f(y)|$, as required. 
\end{proof}

Let $F=(xy_1,\ldots,xy_{\ell})$ be a fan in a partially colored graph $(G,\pi)$. Fix a vertex $y_i$, $i>1$. Define $\pred_F(1)=1$. Consider the following sequence of indices: $a_1=i$, and for every $j>1$, $a_j=\pred_F(a_{j-1})$.
Let $d=\min\{j\ :\ a_j = 1\}$. Consider the following recoloring procedure which transforms the coloring $\pi$ into a new coloring $\pi'$: begin with $\pi'=\pi$ and for every $j=2,\ldots,d$ put $\pi'(xy_{a_j})=\pi(xy_{a_{j-1}})$. Finally, uncolor $xy_i$. 
Note that $\pi'$ is a proper partial coloring with the same number of colored edges as $\pi$. 
This procedure is called {\em rotating $F$ at $y_i$}.

\begin{lemma}
\label{lem:fans-edge-disjoint}
Let $G$ be a graph and let $\pi$ be a partial coloring of $G$ which maximizes the number of colored edges.
Let $F_1$ be an $(x,y)$-fan in $(G,\pi)$ and let $F_2$ be an $(x,z)$-fan in $(G,\pi)$, for some $y\ne z$. Then $F_1$ and $F_2$ do not share an edge.
\end{lemma}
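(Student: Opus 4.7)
The plan is to argue by contradiction. Suppose $F_1=(xy_1,\ldots,xy_\ell)$ and $F_2=(xz_1,\ldots,xz_m)$ share an edge (we may WLOG extend both to maximal fans so that Lemma~\ref{lem:fan1} applies, since sharing is preserved). Among the shared edges, I would choose $xw=xy_i=xz_j$ minimizing the $F_1$-index $i$; since $xy_1=xy$ and $xz_1=xz$ are distinct uncolored edges while every other fan edge is colored, we have $i,j\ge 2$. Let $c=\pi(xw)$, and let $y_p$, $z_q$ be the predecessors of $xw$ in $F_1$, $F_2$, so $c\in\f(y_p)\cap\f(z_q)$. The minimality of $i$ forces each edge $xy_{a_2}=xy_p,xy_{a_3},\ldots,xy_{a_d}=xy$ along the $F_1$-predecessor chain of $w$ to lie outside $F_2$, and consequently none of the vertices $y_p,y_{a_3},\ldots,y$ lies in $V(F_2)$; in particular $y_p\ne z_q$.

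Next, I would rotate $F_1$ at $w$, producing a coloring $\pi'$ with the same (maximum) number of colored edges in which $xw$ becomes uncolored, $xy$ becomes colored, and the color $c$ migrates to $xy_p$. Because the rotation only alters edges along its chain, $\f'(z_q)=\f(z_q)\ni c$ is unchanged, while at the same time $c\in\f'(w)=\f(w)\cup\{c\}$. In $\pi'$ the vertex $x$ has two uncolored incident edges $xw$ and $xz$, so $w$ and $z$ both lie in the free component $Q\ni x$. If $q=1$, then $z_q=z\in V(Q)$, and Lemma~\ref{lem:distinct-free}$(i)$ applied to the maximizing coloring $\pi'$ forces $\f'(w)\cap\f'(z)=\emptyset$, contradicting $c\in\f'(w)\cap\f'(z)$.

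If $q\ge 2$, I would perform a second rotation to bring $z_q$ into the same free component as $x$. Let $F_2^-$ be the sub-fan of $F_2$ obtained by deleting $xw$ together with its $F_2$-descendants; by the minimality of $i$ none of the edges of $F_2^-$ were disturbed by the first rotation, so $F_2^-$ is still a valid fan in $\pi'$. Rotating it at $z_q$ in $\pi'$ yields $\pi''$ in which $xz_q$ is uncolored and $xz$ is colored. This places $z_q$ into the free component of $x$, and since the $F_2^-$-rotation chain runs from $z_q$ to $z_1=z$ and contains neither $xw$ nor any $w$-incident edge, we still have $c\in\f''(w)\cap\f''(z_q)$ — the same contradiction with Lemma~\ref{lem:distinct-free}$(i)$. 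The main obstacle will be verifying carefully that the two rotations do not interfere: specifically, that the $F_1$-rotation chain and the $F_2^-$-rotation chain are edge-disjoint and touch disjoint sets of vertices (apart from the expected ones), and that $c$ really remains free at both $w$ and $z_q$ after both rotations are performed; this all hinges on the minimality of $i$ together with the fact that $\pi(xz_q)\ne c$ (which follows from Lemma~\ref{lem:fan1}$(a)$ applied to $F_2$).
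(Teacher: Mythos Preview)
Your argument is correct and follows the same core idea as the paper: pick the shared edge with smallest $F_1$-index, perform two rotations so that two distinct vertices with the common free color $c$ land in the free component of $x$, and invoke Lemma~\ref{lem:distinct-free}$(i)$.

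The paper's execution is slightly cleaner, though. Instead of rotating $F_1$ at the shared vertex $w=y_i$ and $F_2$ at $z_q=z_{\pred_{F_2}(j)}$, the paper rotates $F_1$ at $y_{\pred_{F_1}(i)}$ and $F_2$ at $z_{\pred_{F_2}(j)}$---symmetrically, one step short of the collision on each side. This leaves $xw$ untouched (so $c=\pi(xw)$ remains a genuine colored edge witnessing $c\in\f(y_{\pred_{F_1}(i)})\cap\f(z_{\pred_{F_2}(j)})$), and it eliminates your case split on $q=1$ versus $q\ge 2$, since rotating at $y_1$ or $z_1$ is simply the identity. The non-interference check is also immediate: the lexicographic minimality of $(i,j)$ forces the $F_1$-chain (indices $<i$) to be disjoint from all of $F_2$, so the two rotations commute. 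Your appeal to Lemma~\ref{lem:fan1} (and the extension to maximal fans) is in fact unnecessary: $\pi(xz_q)\ne c$ already follows from $c\in\f(z_q)$.
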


\begin{proof}
Let $F_1=(xy_1,\ldots,xy_{\ell})$ and $F_2=(xz_1,\ldots,xz_t)$.
 Assume $(i,j)$ is the lexicographically first pair of indices such that $y_i = z_j$.
 Let $c=\pi(xy_i)$.
 Rotate $F_1$ at $y_{\pred_{F_1}(i)}$ and rotate $F_2$ at $z_{\pred_{F_2}(j)}$.
 Note that by our choice of $(i,j)$ it is possible to perform both rotations.
 As a result, the number of colored edges does not change and we get a free component with color $c$ free at two vertices, namely $y_{\pred_{F_1}(i)}$ and $z_{\pred_{F_2}(j)}$; a contradiction with Lemma~\ref{lem:distinct-free}.
\end{proof}

\subsection{The structure of $\Psi_0$-maximal partial $\Delta$-edge-colorings}
\label{sec:struct-psi0}

Now we are ready to define a potential function $\Psi_0$ for a partial coloring $(G,\pi)$. 
Let $c$ be the number of colored edges, i.e.\ $c=|\pi^{-1}(\{1,\ldots,\Delta\})|$.
For every $i=1,\ldots,\floor{\Delta/2}$, let $n_i$ be the number of free components with $i$ edges.
Then
\[\Psi_0(G,\pi)=(c,n_{\floor{\Delta/2}},n_{\floor{\Delta/2}-1},\ldots,n_1). \]

We use the lexicographic order on tuples to compare values of $\Psi_0$. 
In what follows we study the structure of a partial $\Delta$-edge-coloring $\pi$ of a graph $G$ which is $\Psi_0$-maximal, i.e.\ there is no partial $\Delta$-edge-coloring $\pi'$ with $\Psi_0(G,\pi') > \Psi_0(G,\pi)$.
Note that the claims of the lemmas in Section~\ref{sec:structure} also hold for $(G,\pi)$. 

The intuition behind the choice of the potential $\Psi_0$ is as follows. Our goal is to find a partial coloring so that we can injectively assign many colored edges to every uncolored edge. 
As we will see, to maintain the injectiveness of the assignment, edges of a free component $Q$ are assigned only edges that are {\em close} to $Q$ (mostly edges incident with $Q$). In particular, if a colored edge is incident with two free components, we assign {\em half} of it to each of them. 
Assume $\Delta$ is even and consider a free component with $\Delta/2$ edges. Such a component will be called {\em maximal}. Observe that Corollary~\ref{cor:num-uncolored-0}  and Lemma~\ref{lem:A} imply that a colored edge is incident with at most one maximal component. 
Hence it seems that maximal components are good for us: they get assigned {\em the whole} incident edges, not just halves. This is why if we increase the number of maximal free components, our potential will increase, even if the number of colored edges stays the same. Our choice of $\Psi_0$ will also help when considering smaller components: for a smaller free component we will be able to argue that some (but not all) edges incident with it cannot be incident with another free component for otherwise by fan rotations we can ``merge'' the two components to form a bigger one.  
However, a rotation can increase the number of free components, and in particular it can decrease the potential. Hence we use rotations only for very special fans. 
Consider an $(x,y)$-fan $F$ and let $Q$ be the free component that contains $xy$. We say that $F$ is {\em stable}, if $Q-xy$ has no edges or $Q-xy$ has exactly one nontrivial (i.e.\ with at least one edge) connected component and this component contains $x$. (Note that even if every $(x,y)$-fan is stable it does not mean that a $(y,x)$-fan is stable).

\begin{proposition}
\label{prop:rotate}
Rotating a stable fan does not decrease $\Psi_0$. \qed 
\end{proposition}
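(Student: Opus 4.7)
The plan is to track how the rotation affects each coordinate of $\Psi_0=(c,n_{\floor{\Delta/2}},\ldots,n_1)$. First I would observe that the rotation preserves $c$: the only fan edges whose colors change are $xy_{a_1},\ldots,xy_{a_d}$, and among these only $xy_{a_d}=xy_1$ goes from uncolored to colored while $xy_{a_1}=xy_i$ goes the other way, so the total number of colored edges is unchanged. It therefore suffices to control the tail $(n_{\floor{\Delta/2}},\ldots,n_1)$.

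Next I would pin down the local effects on free colors. Since the multiset of colors on $x$'s fan edges is merely permuted along the chain $a_1,\ldots,a_d$, the set $\f(x)$ is unchanged; $\f(y)=\f(y_1)$ loses one color (the one now sitting on $xy_1$), $\f(y_i)$ gains one color (namely $\pi(xy_i)$, which becomes free at $y_i$), and for each intermediate $y_{a_j}$ with $1<j<d$ one free color is swapped for another, leaving $|\f(y_{a_j})|$ unchanged. Hence the set of uncolored edges changes only by removing $xy_1$ and adding $xy_i$, and $V_\bot$ changes only by possibly losing $y$ (if $|\f(y)|=1$ before) and possibly gaining $y_i$. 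In particular every free component not meeting $\{x,y,y_i\}$ is untouched.

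The core of the argument is a case split according to the stability clause. Let $Q$ be the free component containing $xy_1$ and let $R$ denote the free component of $y_i$ before the rotation (empty if $y_i\notin V_\bot$). In Case A, $Q-xy$ has no edges, so $Q$ is just the single edge $xy$; after rotation $Q$ vanishes and a new free component $Q'$ forms, containing $x$, $y_i$, the new edge $xy_i$ and all of $R$, so $|E(Q')|=|E(R)|+1$. This replaces the multiset $\{1,|E(R)|\}$ (or just $\{1\}$ when $R$ is trivial or absent) by $\{|E(R)|+1\}$, which is either a no-op (if $|E(R)|=0$) or moves a unit of mass to a strictly larger index, i.e.\ a lex increase. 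In Case B, $Q-xy$ has a single nontrivial component $C_x\ni x$ and $y$ is isolated in $Q-xy$; the new free component $Q'$ is built from $C_x$, the new edge $xy_i$ and $R$. If $y_i\in C_x$ then $R=Q$ and $|E(Q')|=|E(C_x)|+1=|E(Q)|$, so $Q$ is replaced by a component of the same size; otherwise $R\ne Q$ and $|E(Q')|=|E(Q)|+|E(R)|$, which either lex-increases (when $|E(R)|\ge 1$) or is size-preserving (when $|E(R)|=0$). Throughout, Corollary~\ref{cor:num-uncolored} applied to $\pi'$ (which inherits $c$-maximality from $\pi$ since $c$ is preserved) keeps $|E(Q')|\le\floor{\Delta/2}$, so the new component is actually tracked by $\Psi_0$.

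The main obstacle I anticipate is Case B with $R=Q$: one must carefully check that the edges of $R$ other than $xy$ already belong to $C_x$, so that inserting $xy_i$ yields exactly $|E(Q)|$ edges — neither inflating the component nor losing any contribution. The stability hypothesis, which forbids $Q-xy$ from splitting off any nontrivial component not containing $x$, is exactly what prevents the removal of $xy_1$ from fragmenting $Q$ into pieces of smaller size that would drag $\Psi_0$ down.
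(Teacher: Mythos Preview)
Your argument is correct and is exactly the explicit verification the paper omits (the proposition carries only a \qed\ there). One small slip: in Case~B you assert that $y$ is isolated in $Q-xy$, but this need not hold---if $xy$ lies on a cycle of $Q$ then $y\in C_x$. This is harmless, since your size count $|E(Q')|=|E(C_x)|+1$ relies only on $E(C_x)=E(Q)\setminus\{xy\}$, which follows from stability regardless of where $y$ lands. Your appeal to Corollary~\ref{cor:num-uncolored} for $\pi'$ is also the right move: it uses the section's standing assumption that $c$ is maximal, without which the merged component could exceed $\lfloor\Delta/2\rfloor$ edges and the lex-increase argument would fail.
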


\begin{lemma}
 \label{lem:usually-fans-do-not-meet}
  Let $G$ be a graph and let $\pi$ be a partial coloring of $G$ which maximizes the potential $\Psi_0$.
 Let $P$ and $Q$ be two distinct free components of $(G,\pi)$ and let $xy \in E(P)$, $zu \in E(Q)$.
 Assume $F_1=(xy_1,\ldots,xy_{\ell})$ is a stable $(x,y)$-fan.
 Let $F_2=(zu_1,\ldots,zu_t)$ be a $(z,u)$-fan.
 If $|E(Q)|\le|E(P)|$ or $F_2$ is stable then the ends of $F_1$ and $F_2$ are distinct, i.e.\ for every $i=1,\ldots,\ell$ and $j=1,\ldots,t$ we have $y_i \ne u_j$.
\end{lemma}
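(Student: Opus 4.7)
The strategy is to argue by contradiction: I suppose that some ends coincide, say $y_i = u_j = w$, and then build a new partial coloring $\pi^*$ with $\Psi_0(\pi^*) > \Psi_0(\pi)$. Both the construction of $\pi^*$ and the comparison of $\Psi_0$-values rely on the two operations whose effect on $\Psi_0$ is controlled by the hypotheses of the lemma: rotating $F_1$ (which is stable, so by Proposition~\ref{prop:rotate} it never decreases $\Psi_0$) and, when appropriate, rotating $F_2$. Before anything else I would fix $(i,j)$ to be lexicographically minimal among pairs with $y_i=u_j$; this is the technical hook guaranteeing that the rotation of $F_1$ at $y_i$ only touches vertices $y_{i'}$ with $i'\le i$, none of which (for $i'<i$) is an end of $F_2$, so that $F_2$ remains a valid fan in the post-rotation coloring. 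Note also that the root case is excluded: $(i,j)=(1,1)$ would force $y=u$, contradicting $V(P)\cap V(Q)=\emptyset$.

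The first substantive step is to rotate $F_1$ at $w$. By stability, Proposition~\ref{prop:rotate} gives $\Psi_0(\pi')\ge\Psi_0(\pi)$. In the new coloring $\pi'$, the edge $xy$ becomes colored, $xw$ becomes uncolored, and $c:=\pi(xy_i)$ becomes free at $w$. Stability of $F_1$ implies that $P-xy$ has a single nontrivial component, containing $x$, of size $|E(P)|-1$, so the new free component $P^{\mathrm{new}}$ carrying both $x$ and $w$ has at least $|E(P)|$ edges. I would then split into cases according to whether $w$ already lay in $V(Q)$.

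If $w\in V(Q)$, this single rotation already does the job. The new uncolored edge $xw$ joins $P^{\mathrm{new}}$ with $Q$, producing a single free component of size $(|E(P)|-1)+1+|E(Q)|=|E(P)|+|E(Q)|$, strictly larger than both $|E(P)|$ and $|E(Q)|$. Compared lexicographically, the $\Psi_0$-vector loses one unit at coordinates $|E(P)|$ and $|E(Q)|$ but gains a unit at the strictly higher coordinate $|E(P)|+|E(Q)|$, so $\Psi_0$ strictly grows. (If $|E(P)|+|E(Q)|>\lfloor\Delta/2\rfloor$, then Corollary~\ref{cor:num-uncolored} actually yields an even stronger contradiction, namely that the number of colored edges can be further increased, bumping the first coordinate of $\Psi_0$.) Either way the maximality of $\Psi_0(\pi)$ is violated.

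If $w\notin V(Q)$, I would additionally rotate $F_2$ at $w$. In the subcase where $F_2$ is stable, Proposition~\ref{prop:rotate} again keeps $\Psi_0$ non-decreasing, and the analysis mirrors the previous paragraph: both $xw$ and $zw$ are now uncolored, so the $x$-nontrivial part of $P-xy$ (size $|E(P)|-1$) and the $z$-nontrivial part of $Q-zu$ (size $|E(Q)|-1$, by stability of $F_2$) merge through $w$ into a single component of size $|E(P)|+|E(Q)|$, and the same lexicographic argument closes the case. In the remaining subcase where $F_2$ is not stable but $|E(Q)|\le|E(P)|$, the rotation of $F_2$ may split $Q-zu$ into several pieces, each of size at most $|E(Q)|-1\le|E(P)|-1$. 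The merged component $R^{\mathrm{new}}$ containing $w$ still has at least $(|E(P)|-1)+1+1=|E(P)|+1$ edges, strictly more than any post-split $Q$-piece. Hence in the $\Psi_0$-vector the uppermost coordinate that changes is the one at index $|E(R^{\mathrm{new}})|\ge|E(P)|+1$, where the count goes up by one, so $\Psi_0$ strictly increases in lex order despite the fragmentation. The main obstacle I expect is precisely this last case: one must verify that the non-stable rotation cannot produce a free component on the $Q$-side rivaling $R^{\mathrm{new}}$ in size, and that is exactly where the hypothesis $|E(Q)|\le|E(P)|$ is used.
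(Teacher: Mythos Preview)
Your approach is essentially the paper's: pick the lexicographically first $(i,j)$ with $y_i=u_j$, rotate both fans at that common end, and observe that the merged free component has size at least $|E(P)|+1$ (exactly $|E(P)|+|E(Q)|$ when $F_2$ is stable), so $\Psi_0$ strictly increases. Your case split on $w\in V(Q)$ and the invocation of Proposition~\ref{prop:rotate} are extra scaffolding the paper omits; the one wrinkle is that your sentence ``the edge $xy$ becomes colored, $xw$ becomes uncolored'' is literally false when $i=1$ (the $F_1$-rotation is then a no-op), but since $xw=xy$ is already uncolored your subsequent edge-count and conclusion still go through.
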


\begin{proof}
  Assume $(i,j)$ is the lexicographically first pair of indices such that $y_i = u_j$.
 Then we rotate $F_1$ at $y_i$ and we rotate $F_2$ at $u_j$ (note that because of the choice of $i$ and $j$, the free colors at $u_1,\ldots,u_j$ do not change during  the rotation of $F_1$ so the rotation of $F_2$ is still possible). 
 In the graph $\bot(G,\pi)$ it corresponds to removing edges $xy$ and $zu$ and adding edges $xy_i$ and $zu_j$ (note that $zu_j=zy_i$). Both when $|E(Q)|\le|E(P)|$ and when $F_2$ is stable the potential $\Psi_0$ increases (we get a new component of size at least $|E(P)|+1$ in the former case and of size exactly $|E(P)|+|E(Q)|$ in the latter case); a contradiction.
\end{proof}
 
Let $Q$ be a free component.
Then $S_1(Q)$ is the set of all vertices $v$ such that for some edge $xy\in E(Q)$ there is a stable $(x,y)$-fan which contains $xv$ as a full edge.
For any $v\in S_1(Q)$ the stable fan from the definition above is denoted by $F(v)$; if there are many such fans then we choose an arbitrary one as $F(v)$.
We also define $S(Q)=V(Q)\cup S_1(Q)$.
Note that by Lemma~\ref{lem:usually-fans-do-not-meet} for two distinct free components $Q$ and $R$ the sets $S(Q)$ and $S(R)$ are disjoint.
For two sets $A$ and $B$, any edge $ab$ with $a\in A$ and $b\in B$  will be called an {\em $AB$-edge}.

\begin{lemma}
 \label{lem:QR-edges-matching}
 Let $G$ be a graph  of maximum degree $\Delta$ and let $\pi$ be a partial coloring of $G$ which maximizes the potential $\Psi_0$.
Assume $\Delta$ is odd and let $Q$ be a free component of $(G,\pi)$ such that $|E(Q)|=(\Delta-1)/2$ and $|\f(Q)| = \Delta - 1$. 
Let $R$ be a free component, $R\ne Q$.
Then the set of all $S(Q)S(R)$-edges is a matching.
\end{lemma}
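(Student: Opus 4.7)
The plan is to argue by contradiction: assume there are two distinct $S(Q)S(R)$-edges $e_1=a_1b_1$ and $e_2=a_2b_2$ (with $a_i\in S(Q)$, $b_i\in S(R)$) sharing a common endpoint $w$. Since $S(Q)$ and $S(R)$ are disjoint (by the remark following Lemma~\ref{lem:usually-fans-do-not-meet}), $w$ lies in exactly one of them. The key lever will be Corollary~\ref{cor:C1}: because $|\f(Q)|=\Delta-1$, at most $\Delta-|\f(Q)|=1$ edge can be incident both with $V(Q)$ and other nontrivial free components. I plan to reach, possibly after a few stable fan rotations, a configuration in which two distinct edges leave $V(Q)$ (or a successor $V(Q^*)$ of the same ``shape'') toward vertex sets of nontrivial free components different from $Q$, contradicting Corollary~\ref{cor:C1}.

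A preliminary step will establish that stable fan rotations in $Q$ preserve the invariants $|E(Q)|=(\Delta-1)/2$ and $|\f(Q)|=\Delta-1$. From Corollary~\ref{cor:num-uncolored-0} and $|\f(Q)|=2|E(Q)|=\Delta-1$ one gets the tight equality $|\f(v)|=\deg_Q(v)$ for every $v\in V(Q)$. Using this, I would verify by a direct count how the free colors shift under a rotation of a stable fan $F=(xu_1,\ldots,xu_t)$ at a full end $u_j$: $x$ and the intermediate $u_{a_\ell}$ preserve their free-color sizes, $y=u_1$ loses exactly one free color (possibly leaving $V_\bot$), and $u_j$ gains exactly one. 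Summing, the resulting free component $Q'$ containing $x$ and $u_j$ again satisfies $|E(Q')|=(\Delta-1)/2$ and $|\f(Q')|=\Delta-1$, so Corollary~\ref{cor:C1} continues to apply.

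In the case $w\in S(Q)$ I would first, if $w\in S_1(Q)$, rotate $F(w)$ at $w$ and rename $Q'$ as $Q$ so that $w\in V(Q)$. Then for each $b_i\in S_1(R)$ I rotate $F(b_i)$ in $R$; these rotations only modify edges incident with $V(R)$, which is disjoint from $V(Q)$, so $Q$ is untouched and after the rotations each $b_i$ lies in the vertex set of a nontrivial free component (the one containing the base of $F(b_i)$). Then $w\in V(Q)$ sends two distinct edges to nontrivial free components other than $Q$; contradiction. In the case $w\in S(R)$ I similarly reduce to $w\in V(R')$ by rotating $F(w)$ in $R$ (keeping $Q$ intact), and for each $a_i\in S_1(Q)$ I rotate $F(a_i)$ in $Q$, obtaining by the preliminary step a successor $Q_i^\prime$ of the same shape. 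Once both rotations are orchestrated consistently so that $a_1,a_2$ land in a common $V(Q^*)$, applying Corollary~\ref{cor:C1} to $Q^*$ finishes the argument.

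The main obstacle will be the bookkeeping in the case $w\in S(R)$: one has to verify that the successive rotations in $Q$ can be composed without breaking the tight invariants and that $a_1,a_2$ end up in a common successor $V(Q^*)$. A technical subcase remains when the free component of $w$ degenerates to a trivial one (just $\{w\}$), since then Corollary~\ref{cor:C1} does not directly apply. That subcase I would settle via Lemma~\ref{lem:A}: if $V(Q^*)$ sees $\{w\}$ through $e_1$, then $\f(w)\cap\f(Q^*)=\emptyset$ forces $\f(w)$ to be the single color missing from $\f(Q^*)$; a short case analysis on the colors of $e_1,e_2$ then either produces a direct coloring conflict at $w$ or exposes a Kempe swap along an $(ab,\cdot)$-path that strictly increases $\Psi_0$, contradicting $\Psi_0$-maximality.
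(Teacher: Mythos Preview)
Your contradiction strategy is workable in spirit but substantially more complicated than necessary, and the gaps you flag are real. The paper's argument is a one-liner that you are missing: rather than assuming two $S(Q)S(R)$-edges share an endpoint and chasing a contradiction, it proves the stronger fact that \emph{every} $S(Q)S(R)$-edge is colored with the unique color $c\in\{1,\ldots,\Delta\}\setminus\f(Q)$. Since edges of a fixed color form a matching, the lemma follows immediately. To see that an arbitrary $S(Q)S(R)$-edge $vw$ ($v\in S(Q)$) has color $c$, rotate $F(v)$ if $v\in S_1(Q)$ to arrange $v\in V(Q)$ (the successor still has $(\Delta-1)/2$ edges and hence at least $\Delta-1$ free colors by Corollary~\ref{cor:num-uncolored-0}), and rotate $F(w)$ if $w\in S_1(R)$ to arrange $w\in V(R)$; the two fans are disjoint by Lemma~\ref{lem:usually-fans-do-not-meet}. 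Now the second clause of Corollary~\ref{cor:C1} forces $\pi(vw)=c$.

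Concerning your plan: the case $w\in S(R)$ with $a_1,a_2\in S_1(Q)$ is genuinely problematic as you suspect. After rotating $F(a_1)$ at $a_1$, the successor $Q'$ has lost the leaf $y_1$ of $F(a_1)$, and there is no reason $F(a_2)$ survives as a stable fan in $Q'$ (its uncolored base edge may no longer exist, or $a_2$ may simply fail to lie in $S_1(Q')$). Likewise in the case $w\in S(Q)$ with $b_1,b_2\in S_1(R)$: after rotating $F(b_1)$, the leaf $y$ of that fan may drop out of $V_\bot$ entirely, and if $b_2=y$ the second edge no longer reaches any free component, so Corollary~\ref{cor:C1} yields nothing. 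These are not just bookkeeping nuisances; they obstruct the ``land in a common $V(Q^*)$'' step, and your suggested fallback via Lemma~\ref{lem:A} and Kempe swaps is not fleshed out enough to close them. The color-$c$ argument bypasses all of this: it handles one edge at a time and never needs two rotations inside the same component.
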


\begin{proof}
Let $c$ be the only color in $\{1,\ldots,\Delta\}\setminus\f(Q)$.
Consider an arbitrary $S(Q)S(R)$-edge $vw$, $v\in S(Q)$. 
We can assume that $v\in V(Q)$ for otherwise we rotate the stable fan $F(v)$ at $v$; note that then the component which replaces $Q$ has also $(\Delta-1)/2$ edges so by Corollary~\ref{cor:num-uncolored-0} it has at least $\Delta-1$ free colors.
Then $\pi(vw)=c$, because if $w\in V(R)$ this follows from Corollary~\ref{cor:C1} and otherwise, i.e.\ when $w\in S_1(R)$, we can rotate the fan $F(w)$ at $w$ and get $w\in V(R)$. (Note that rotating both $F(v)$ and $F(w)$ is possible because they are disjoint by Lemma~\ref{lem:usually-fans-do-not-meet}.) We have just proved that an arbitrary $S(Q)S(R)$-edge is colored by $c$, so the claim follows.
\end{proof}

\begin{lemma}
 \label{lem:QR-edges}
 Let $G$ be a graph  of maximum degree $\Delta$ and let $\pi$ be a partial coloring of $G$ which maximizes the potential $\Psi_0$.
Assume $\Delta$ is odd and let $Q$ be a free component of $(G,\pi)$ such that $|E(Q)|=(\Delta-1)/2$ and $|\f(Q)| = \Delta - 1$. 
Let $R$ be a free component, $R\ne Q$.
If there are at least two $V(Q)S(R)$-edges and at least two $S(Q)V(R)$-edges then
there is no $V(Q)V(R)$-edge.
\end{lemma}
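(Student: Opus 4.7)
My plan is to argue by contradiction. Suppose there is a $V(Q)V(R)$-edge $vw$ with $v\in V(Q)$ and $w\in V(R)$, and let $c$ denote the unique color in $\{1,\ldots,\Delta\}\setminus\f(Q)$. Repeating verbatim the argument of Lemma~\ref{lem:QR-edges-matching}, every $S(Q)S(R)$-edge is $c$-colored and these edges form a matching. Hence $\pi(vw)=c$, and the two $V(Q)S(R)$-edges and the two $S(Q)V(R)$-edges from the hypothesis are $c$-colored and, by the matching property, pairwise vertex-disjoint (except where they share $v$ or $w$ with $vw$ itself). Let $v'w'$ with $v'\in V(Q)\setminus\{v\}$, $w'\in S(R)\setminus\{w\}$ denote the second $V(Q)S(R)$-edge, and $u'z'$ with $u'\in S(Q)\setminus\{v\}$, $z'\in V(R)\setminus\{w\}$ the second $S(Q)V(R)$-edge.

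The next step is to produce a second $c$-colored $V(Q)V(R)$-edge, possibly after a stable-fan rotation preserving $\Psi_0$. If $w'\in S_1(R)$, I rotate the stable fan $F(w')$ at $w'$; Proposition~\ref{prop:rotate} guarantees this does not decrease $\Psi_0$, and afterwards $w'$ lies in the free component taking the place of $R$. A short verification using $V(Q)\cap S(R)=\emptyset$ (from Lemma~\ref{lem:usually-fans-do-not-meet}) shows that the rotation does not disturb $vw$. After this preparation I have two distinct $c$-colored $V(Q)V(R)$-edges, $vw$ and $v'w'$.

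The contradiction is obtained by a Kempe-chain swap. Pick $a\in\f(v)$, which is nonempty since $|\f(v)|=\deg_Q(v)\geq 1$, and consider the $(ac,v)$-path $P$. It begins with the $c$-edge $vw$ because $a\in\f(v)$ and $c\notin\f(v)$. Assuming $v'w'\notin E(P)$, swap $P$. Among $V(Q)$-vertices only $\f(v)$ changes (losing $a$ and gaining $c$), so the set of free colors of $Q$ becomes $\{1,\ldots,\Delta\}\setminus\{a\}$. The edge $v'w'$ is untouched and still $c$-colored, so $Q$ sees $R$ via $v'w'$, and Lemma~\ref{lem:A} forces the new $\f(Q)$ and $\f(R)$ to be disjoint. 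However $R$ is nontrivial (otherwise the two $c$-colored $S(Q)V(R)$-edges would meet at its sole vertex), so $|\f(R)|\geq 2$ by Corollary~\ref{cor:num-uncolored-0}; the swap changes $\f(R)$ by at most two elements (swapping $a\leftrightarrow c$ at an endpoint of $P$ in $V(R)$, if any), so the new $\f(R)$ still contains at least one color different from $a$, and any such color lies in $\{1,\ldots,\Delta\}\setminus\{a\}=\f(Q)$. This contradicts the disjointness forced by Lemma~\ref{lem:A}.

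The main obstacle I foresee is ensuring $v'w'\notin E(P)$. Should $v'w'$ lie on the chain for every choice of $a\in\f(v)$, the remedy is to exchange roles and attack symmetrically using $u'z'$: rotate $F(u')$ at $u'$ if $u'\in S_1(Q)$ (again preserving $\Psi_0$ and leaving $vw$ undisturbed by the same argument) and launch the $(bc,z')$-chain from $z'$ for some $b\in\f(z')$. That at least one of the two attacks succeeds follows from the observation that each vertex of $V(Q)$ carries a unique incident $c$-edge, so a Kempe chain can traverse a given $V(Q)V(R)$-edge only by passing through its $V(Q)$-endpoint; carefully tracking how the two candidate chains interact with the two $c$-colored edges $v'w'$ and $u'z'$ is the technical heart of the proof.
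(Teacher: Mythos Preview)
Your approach diverges from the paper's, and it has a genuine gap at the rotation step.

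\textbf{The gap.} After rotating the stable fan $F(w')$ at $w'$ you assert that you obtain two $c$-coloured $V(Q)V(R)$-edges, namely $vw$ and $v'w'$. But the rotation replaces the uncoloured edge $xy$ (where $F(w')$ is an $(x,y)$-fan) by $xw'$; since the fan is stable, $y$ is a leaf of $R$ (or $R=\{xy\}$), so after the rotation $y$ drops out of the new free component $R'$. If it happens that $w=y$, then $vw$ is no longer a $V(Q)V(R')$-edge and your two-edge configuration is lost. Your ``short verification'' that the rotation ``does not disturb $vw$'' only addresses whether $vw$ gets recoloured, not whether $w$ stays in the component; and the argument $V(Q)\cap S(R)=\emptyset$ tells you $v$ is not a \emph{full} fan end, which is not the issue here. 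The same problem recurs symmetrically when you fall back to rotating $F(u')$: you need $v$ to remain in the resulting $Q'$, which again fails exactly when $v$ is the leaf of $F(u')$.

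\textbf{How the paper handles it.} The paper introduces the notion ``$v_1$ is safe for $v_2$'' (meaning: after rotating $F(v_1)$ at $v_1$, the vertices $v_1$ and $v_2$ land in the same free component) and observes that $v_1\in S_1(P)$ can be unsafe for at most one vertex of $P$, namely the leaf of $F(v_1)$. This gives a short three-case analysis: if $r'$ is safe for $r$, rotate on the $R$ side; if $q''$ is safe for $q$, rotate on the $Q$ side; otherwise $r'$ is safe for $r''$ and $q''$ is safe for $q'$, so rotate both. In every case one reaches two $V(Q')V(R')$-edges with $|\f(Q')|\ge\Delta-1$, and Corollary~\ref{cor:C1} finishes immediately.

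\textbf{A secondary point.} Once you \emph{do} have two $V(Q)V(R)$-edges with $|\f(Q)|=\Delta-1$, the Kempe-chain argument is unnecessary: Corollary~\ref{cor:C1} already forbids more than one such edge. So the ``technical heart'' you locate in tracking the $(ac)$-chains is a detour; the real work is the safety case analysis above, which your sketch does not supply.
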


\begin{proof}
In this proof we use the following definition.
Let $v_1\in S_1(P)$ and $v_2\in V(P)$ for some free component $P$.
We say that $v_1$ is {\em safe for $v_2$} if after rotating $F(v_1)$ at $v_1$ the vertices $v_1$ and $v_2$ are in the same free component.

Now we proceed with the proof. Assume on the contrary that there is an edge $qr$ such that $q\in V(Q)$ and $r\in V(R)$.
Let $q'r'$ be another $V(Q)S(R)$-edge, $q'\in V(Q)$, and let $q''r''$ be another $S(Q)V(R)$-edge, $r''\in V(R)$; both edges exist by our assumption.
Note that $r' \in S_1(R)$ and $q'' \in S_1(Q)$ for otherwise we get a contradiction with Corollary~\ref{cor:C1}, so in particular $q'r'\ne q''r''$.
By Lemma~\ref{lem:QR-edges-matching} we see that $q$, $q'$ and $q''$ are pairwise distinct, and so are $r$, $r'$ and $r''$.

If $r'$ is safe for $r$ then we rotate $F(r')$ at $r'$ and we get a new component $R'$ with two $V(Q)V(R')$-edges; a contradiction with Corollary~\ref{cor:C1}.

If $q''$ is safe for $q$ then we rotate $F(q'')$ at $q''$ and we get a new component $Q'$.
Since $|E(Q')|=|E(Q)|$ by Corollary~\ref{cor:num-uncolored-0} we have $|\f(Q')|\ge\Delta-1$.
However, there are two $V(Q')V(R)$-edges; a contradiction with Corollary~\ref{cor:C1}.

Now assume that $r'$ is not safe for $r$ and $q''$ is not safe for $q$.
Observe that any vertex $v\in S_1(P)$ can be not safe for at most one vertex, namely if $F(v)$ is a $(x,y)$-fan then $v$ can be not safe only for $y$.
Hence $r'$ is safe for $r''$ and $q''$ is safe for $q'$.
We rotate both $F(r')$ at $r'$ and $F(q'')$ at $q''$. As a result we get two new components $Q'$ and $R'$ where $q'r'$ and $q''r''$ are  $V(Q')V(R')$-edges.
By the same argument as before, $|\f(Q')|\ge\Delta-1$ so we get a contradiction with Corollary~\ref{cor:C1}.
\end{proof}
 
\subsection{The structure of a $\Psi$-maximal partial $\Delta$-edge-coloring}

Now we define our final potential function $\Psi$ for a partial coloring $(G,\pi)$.
Let $\#_c$ be the number of cycles in all free components. 
Recall that $\nfc(G,\pi)$ denotes the set of nontrivial (i.e., with at least one edge) free components of $(G,\pi)$.
Then 
\[\Psi(G,\pi)=(\Psi_0(G,\pi), \#_c, \Delta|V| - \sum_{Q\in\nfc(G,\pi)}|\f(Q)|). \]

Again assume that $(G,\pi)$ maximizes $\Psi$.
Note that all the  results from Sections~\ref{sec:structure} and~\ref{sec:struct-psi0} apply. 

\begin{lemma}
 \label{lem:cycles}
 Let $G$ be a graph and let $\pi$ be a partial coloring of $G$ which maximizes the potential $\Psi$.
 Let $F_1=(xy_1,\ldots,xy_{\ell})$ be a stable $(x,y)$-fan and $F_2=(zu_1,\ldots,zu_t)$ be a stable $(z,u)$-fan, where $xy$ and $zu$ are distinct edges of the same free component $Q$ of $(G,\pi)$. If $Q$ is a tree, then the ends of $F_1$ and $F_2$ are distinct, i.e.\ for every $i=1,\ldots,\ell$ and $j=1,\ldots,t$ we have $y_i \ne u_j$.
\end{lemma}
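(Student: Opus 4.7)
The plan is to argue by contradiction: assume $y_i=u_j$ for some $i,j$, pick the lexicographically smallest such pair, and construct a coloring $\pi'$ with $\Psi(G,\pi')>\Psi(G,\pi)$. Two quick observations frame the proof. If $x=z$ then $xy_i=zu_j$ is an edge common to $F_1$ and $F_2$, contradicting Lemma~\ref{lem:fans-edge-disjoint}; so I assume $x\ne z$. Stability of the two fans combined with $Q$ being a tree forces $y$ and $u$ to be leaves of $Q$, and a short check shows $x,y,z,u$ are four distinct vertices of $Q$ (in particular the hypothesis $xy\ne zu$ rules out $(i,j)=(1,1)$).

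The main case is $i,j\ge 2$. Here I would rotate $F_1$ at $y_i$ and $F_2$ at $u_j$ simultaneously; the lex-minimality of $(i,j)$ guarantees that the two rotations do not interfere, exactly as in the proof of Lemma~\ref{lem:usually-fans-do-not-meet}. The net effect in the graph of free edges is to delete $xy$ and $zu$ and to insert $xy_i$ and $zy_i$ (recall $y_i=u_j$). Deleting $xy,zu$ from the tree $Q$ isolates the leaves $\{y\}$ and $\{u\}$ and leaves a subtree $T$ containing $x$ and $z$; since fan ends are distinct and $y_i=u_j\notin\{y,u\}$, the vertex $y_i$ lies either in $V(T)$, or outside $V(Q)$ (possibly as a trivial component or a vertex fresh to the free graph), or in a different nontrivial free component $Q'$. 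In the first two sub-cases the replacement nontrivial component has exactly $|E(Q)|$ edges but contains a cycle through $x,y_i,z$; in the third sub-case the merged component has $|E(Q)|+|E(Q')|>|E(Q)|$ edges, increasing $\Psi_0$ outright.

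In the remaining cases one of $i,j$ equals $1$. Say $i=1$, so $y=u_j$ with $j\ge 2$; I rotate only $F_2$ at $u_j$, which replaces the free edge $zu$ by the new free edge $zy$. Then $\{u\}$ becomes a trivial component while inserting $zy$ into the remaining subtree $Q-u$ closes a cycle, because both $z$ and $y$ lie in it. The case $j=1$ is symmetric.

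A final bookkeeping step finishes the argument: in every sub-case the number of coloured edges is preserved, and either $\Psi_0$ is unchanged while the cycle count $\#_c$ of the free components strictly grows (from the newly created cycle), or $\Psi_0$ itself strictly grows (when a strictly larger nontrivial component appears). Either way $\Psi(G,\pi')>\Psi(G,\pi)$, contradicting $\Psi$-maximality. The hardest part will be the sub-case $i,j\ge 2$: ruling out $y_i\in\{y,u\}$ and verifying that the replacement nontrivial component has exactly $|E(Q)|$ edges, so that $\Psi_0$ is held fixed and $\#_c$ is forced to do the work; the merged-component sub-sub-case also wants a sanity check via Corollary~\ref{cor:num-uncolored} that the new component still fits under the $\lfloor\Delta/2\rfloor$ bound.
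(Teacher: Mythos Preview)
Your proposal is correct and follows essentially the same approach as the paper's proof: reduce $x=z$ to Lemma~\ref{lem:fans-edge-disjoint}, otherwise rotate both fans (using the lex-first pair to avoid interference) so that in $\bot(G,\pi)$ the edges $xy,zu$ are replaced by $xy_i,zy_i$, yielding either a strictly larger component (so $\Psi_0$ goes up) or a same-size component containing a new cycle (so $\#_c$ goes up). Your write-up is in fact more careful than the paper's: you separate out the $i=1$ and $j=1$ boundary cases and you enumerate the possible locations of $y_i$, whereas the paper compresses all of this into the single formula $Q'=Q-\{xy,zu\}\cup\{xy_i,y_iz\}$ and the blanket statement ``$\Psi_0$ does not decrease and $\#_c$ increases''. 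The sanity check against Corollary~\ref{cor:num-uncolored} in the merged sub-case is unnecessary (if $\Psi_0$ strictly increases you already have your contradiction, regardless of component sizes), but it does no harm.
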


\begin{proof}
 Assume $y_i=u_j$ for some $i=1,\ldots,\ell$ and $j=1,\ldots,t$. 
 Since $F_1$ and $F_2$ are stable, $y$ and $u$ are leaves of $Q$. 
 Hence if $xy$ and $zu$ are incident then $x=z$ and the claim follows from Lemma~\ref{lem:fans-edge-disjoint}.
 Otherwise we perform the two rotations described in the proof of Lemma~\ref{lem:usually-fans-do-not-meet}.
 As a result we get a new component $Q'=Q-\{xy,zu\}\cup\{xy_i,y_iz\}$.
 Then not only $\Psi_0$ does not decrease but also $\#_c$ increases, so $\Psi$ increases; a contradiction.
\end{proof}

\begin{proposition}
 \label{prop:free-colors}
 Let $G$ be a graph and let $\pi$ be a partial coloring of $G$ which maximizes the potential $\Psi$.
 Let $Q$ be a free component of $(G,\pi)$ and let $xy$ be an edge of $Q$.
 Let $F$ be a stable $(x,y)$-fan and let $Q'$ be the free component that replaces $Q$ after rotating $F$.
 Then $|\f(Q')|\ge|\f(Q)|$.
\end{proposition}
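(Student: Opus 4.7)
The plan is to analyze how the rotation affects the individual sets $\f(v)$ and then convert the lexicographic $\Psi$-maximality of $\pi$ into the desired inequality. Let $F=(xy_1,\ldots,xy_\ell)$ with $y_1=y$ and fix an end $y_i$, $i>1$, at which to rotate; let $a_1=i$, $a_2=\pred_F(a_1)$, $\ldots$, $a_d=1$ be the induced chain. The fan property $\pi(xy_{a_{j-1}})\in\f(y_{a_j})$, combined with the definition of rotation, yields the net local changes $|\f'(y_i)|=|\f(y_i)|+1$ and $|\f'(y)|=|\f(y)|-1$, while $|\f'(y_{a_j})|=|\f(y_{a_j})|$ at every intermediate fan end ($2\le j\le d-1$; at each such vertex the rotation swaps one used color for a previously free one) and $|\f'(v)|=|\f(v)|$ elsewhere. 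In particular $\sum_v|\f(v)|=\Delta|V|-2c$ is invariant.

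Stability forces $Q-xy$ to be either empty or to consist of a single nontrivial component $C_x\ni x$ (plus $y$ isolated when $xy$ is a bridge of $Q$). Since $\bot(G,\pi')=\bot(G,\pi)-xy+xy_i$, the component $Q'$ is obtained from $C_x$, the edge $xy_i$, and the free component $R_i$ of $y_i$ in $(G,\pi)$ (with $R_i=\emptyset$ if $y_i\notin V_{\bot}$). Two potentially bad subcases are killed by $\Psi$-maximality. First, if $R_i$ were nontrivial, merging $Q$ with $R_i$ would produce a component strictly larger than either, raising the lex-leading nonzero entry of $(n_{\floor{\Delta/2}},\ldots,n_1)$ and hence $\Psi_0$; so $|E(R_i)|=0$ and $|\f(R_i)|=|\f(y_i)|$. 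Second, if $xy$ were a bridge of $Q$ and $y_i\in V(Q)$, the new edge $xy_i$ would close an extra cycle inside $V(C_x)$ with no compensating loss of cycle, strictly increasing $\#_c$ while $\Psi_0$ stays unchanged; so $y_i\notin V(Q)$ whenever $xy$ is a bridge of $Q$.

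Using $|\f(Q)|=\sum_{v\in V(Q)}|\f(v)|$ and $|\f(Q')|=\sum_{v\in V(Q')}|\f'(v)|$ (both valid by Lemma~\ref{lem:distinct-free}$(i)$), a direct summation now gives
\[
|\f(Q')|-|\f(Q)|=\begin{cases}0, & \text{if } y_i\in V(Q),\\ |\f(R_i)|, & \text{if } y_i\notin V(Q) \text{ and } xy \text{ is not a bridge of }Q,\\ |\f(R_i)|+1-|\f(y)|, & \text{if } y_i\notin V(Q) \text{ and } xy \text{ is a bridge of }Q. \end{cases}
\]
The first two cases are immediate (using $|\f(R_i)|\ge 0$). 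For the third, the invariance of $\sum_v|\f(v)|$ rewrites the last coordinate of $\Psi$ as $2c+\sum_{R\text{ trivial}}|\f(R)|$. Under the rotation, $y$ either leaves $V_{\bot}$ (when $|\f(y)|=1$) or becomes a trivial component contributing $|\f(y)|-1$ to the trivial sum, while $y_i$ leaves the trivial sum at a cost of $|\f(R_i)|$; no other vertex changes trivial/nontrivial status or free-color count. Since $\Psi_0$ and $\#_c$ are unchanged in this subcase, $\Psi$-maximality forbids the trivial sum from growing, which is exactly $|\f(y)|\le|\f(R_i)|+1$, and yields $|\f(Q')|\ge|\f(Q)|$.

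The main technical obstacle is the structural bookkeeping -- keeping track of which vertices migrate between trivial and nontrivial free components during the rotation, especially in the bridge case where $y$ is severed from $V(Q')$. The simplification that unifies all cases is the invariance $\sum_v|\f(v)|=\Delta|V|-2c$: it turns the last coordinate of $\Psi$ (up to an additive constant) into the total free-color count on trivial components, so $\Psi$-maximality translates directly into the required numerical inequality at $y$.
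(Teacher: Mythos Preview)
Your proof is correct and follows essentially the same strategy as the paper's: both hinge on the observation that rotating a stable fan leaves $\Psi_0$ unchanged (and, in the bridge subcase, $\#_c$ as well), so the last coordinate of $\Psi$ forces $\sum_{R\in\nfc}|\f(R)|$ not to drop, which pins down $|\f(Q')|\ge|\f(Q)|$. Your presentation is more explicit---you compute $|\f(Q')|-|\f(Q)|$ case by case rather than arguing by contradiction---and your rewriting of the last coordinate as $2c+\sum_{R\ \text{trivial}}|\f(R)|$ is a clean touch that makes the final inequality $|\f(y)|\le|\f(R_i)|+1$ fall out transparently; the paper reaches the same conclusion but phrases it as ``$\sum_{R\in\nfc}|\f(R)|$ decreases, so $\Psi$ increases, contradiction.''
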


\begin{proof}
Assume $|\f(Q')|<|\f(Q)|$. By Proposition~\ref{prop:rotate}, $\Psi_0$ does not decrease.
Let $F=(xy_1,\ldots,xy_{\ell})$ and assume $F$ is rotated at $y_i$. Assume $xy$ belongs to a cycle in $Q$. Then $V(Q')=V(Q)\cup\{y_{i}\}$.
Since the number of free colors in every vertex from $\{y_2,\ldots,y_{i-1}\}$ does not change after the rotation, the only vertex for which which the number of free colors decreases is $y$, but it stays in the component. Hence by Lemma~\ref{lem:distinct-free} we have $|\f(Q')|\ge|\f(Q)|$, a contradiction.
Since $xy$ does not belong to a cycle in $Q$ we infer that $\#_c$ does not decrease. 
Observe that $y_2,\ldots,y_{i}$ do not belong to nontrivial free components different from $Q$, for otherwise in the process of rotating $F$ we can merge two components and $\Psi_0$ increases. Hence, after the rotation all nontrivial components different from $Q'$ do not change their free colors. Then $|\f(Q')|<|\f(Q)|$ implies that $\sum_{R\in\nfc(G,\pi)}|\f(R)|$ decreases, so $\Psi$ increases; a contradiction.
\end{proof}

\begin{lemma}
 \label{lem:full-component-no-QR-edges}
 Let $G$ be a graph of maximum degree $\Delta$ and let $\pi$ be a partial coloring of $G$ which maximizes the potential $\Psi$.
Let $Q$ be a free component of $(G,\pi)$ such that $|\f(Q)|=\Delta$. 
Then for any other free component $R$ there are no $S(Q)S(R)$-edges.
\end{lemma}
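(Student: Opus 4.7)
The plan is to assume for contradiction that an $S(Q)S(R)$-edge $vw$ exists with $v\in S(Q)$, $w\in S(R)$, and derive a contradiction from Lemma~\ref{lem:A}. The key observation is that $|\f(Q)|=\Delta$ forces $\f(Q)=\{1,\ldots,\Delta\}$, so any free component $R'$ that is seen from $Q$ would be forced by Lemma~\ref{lem:A} to satisfy $\f(R')=\emptyset$, which is impossible since every free component contains a vertex with a free color. The goal is therefore to exhibit $Q$ (or a suitable replacement $Q'$) seeing the free component containing $w$.

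The direct case $v\in V(Q)$, $w\in V(R)$ is immediate: $Q\ne R$ implies $vw\notin\bot(G,\pi)$, so it carries some color $c\in\{1,\ldots,\Delta\}=\f(Q)$, which witnesses $Q$ seeing $R$, giving the desired contradiction. In the general case, I would rotate stable fans to reduce to the direct one: if $v\in S_1(Q)$ rotate $F(v)$ at $v$, and similarly for $w$. By Proposition~\ref{prop:free-colors} the resulting component $Q'$ containing $v$ satisfies $|\f(Q')|\ge|\f(Q)|=\Delta$, hence $\f(Q')=\{1,\ldots,\Delta\}$, while $w$ lands in a free component $R'$ with $\f(R')\ne\emptyset$ (the color freed by uncoloring $F(w)$'s edge at $w$ becomes a free color there). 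One then applies the direct-case argument to the edge $vw$ and the pair $(Q',R')$, provided the rotations preserve $\pi(vw)$ and, when both are performed, do not interfere with each other.

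Verifying these provisos is the main technical step. The disjointness $S(Q)\cap S(R)=\emptyset$ (guaranteed by Lemma~\ref{lem:usually-fans-do-not-meet}) handles most configurations: for instance $vw\notin E(F(v))$ because that edge would force $w$ to equal the center $x\in V(Q)$, contradicting $w\in S(R)$. The subtle remaining issue, which I expect to be the main obstacle, is that the center $x$ of $F(v)$ could a priori equal a non-full end $u_j$ of $F(w)$ (which only guarantees $u_j\in V_\bot$, not $u_j\in S(R)$, so $S(Q)\cap S(R)=\emptyset$ does not preclude this). Fortunately the situation is self-defeating: in that case the fan edge $zu_j=zx$ is already a colored $V(Q)V(R)$-edge with $z\in V(R)$, so applying the direct-case argument to $zx$ yields an immediate contradiction with no rotation required. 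The symmetric collision $z=y_i$ is handled the same way. Outside these shortcuts, $F(v)$ and $F(w)$ are vertex-disjoint, so the rotations commute, leave $\pi(vw)$ untouched, and preserve the validity of the other fan; the direct case then closes the argument.
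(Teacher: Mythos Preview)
Your proof is correct and follows essentially the same route as the paper's: assume an $S(Q)S(R)$-edge exists, rotate the relevant stable fans to reduce to a $V(Q')V(R')$-edge, invoke Proposition~\ref{prop:free-colors} to keep $|\f(Q')|=\Delta$, and finish via Lemma~\ref{lem:A} (the paper cites Corollary~\ref{cor:C1} at the last step, which amounts to the same thing). Your treatment of the ``center of one fan equals a non-full end of the other'' collision is in fact more explicit than the paper's, which simply appeals to Lemma~\ref{lem:usually-fans-do-not-meet}; your shortcut observation that such a collision already yields a colored $V(Q)V(R)$-edge is exactly what makes that appeal legitimate.
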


\begin{proof}
The proof is by contradiction. Assume there is an edge $uv$, such that $u\in S(Q)$ and $v \in S(R)$.
First assume that $v\in S_1(R)$.
Then we rotate the fan $F(v)$ at $v$. Note that the number of colored edges does not change.
Note that by Lemma~\ref{lem:usually-fans-do-not-meet} rotating $F(v)$ does not affect stable fans of $Q$, so in particular $S_1(Q)$ does not change after the rotation.
Hence we can assume that $v\in V(R)$.
Now assume that $u\in S_1(Q)$.
Then we rotate $F(u)$ at $u$; again the number of colored edges does not change and moreover the new free component also has $\Delta$ free colors by Proposition~\ref{prop:free-colors}.
Hence we can assume that $u\in V(Q)$, i.e.\ $uv$ is incident with both $Q$ and $R$.
Since the number of colored edges is maximal this is a contradiction with Corollary~\ref{cor:C1}.
\end{proof}

\subsection{Bounding the number of uncolored edges}

In this section we assume that $(G,\pi)$ is a partially colored graph such that $\pi$ maximizes the potential $\Psi$ and our goal is to give a bound on the number of uncolored edges. Here is our plan: We put a {\em charge}, equal to 1 to every colored edge of graph $G$.
Next, every colored edge sends its charge to its endpoints following carefully selected rules.
Finally, we assign disjoint sets of vertices to nontrivial free components. 
Then, we show a lower bound on the total charge at vertices assigned to a nontrivial free component divided by the number of edges in this component. 
This gives the desired bound.
Let us be more precise now. 
The lemma below will be used in describing the sets of vertices assigned to free components.

\begin{lemma}
\label{lem:A_1}
 Assume $4\le\Delta\le 7$.
 For every free component $Q$ there is a set $A_1(Q)\subseteq S_1(Q)$ such that
 \begin{enumerate}[$(i)$]
  \item if $Q$ is a tree, $z_1, z_2 \in A_1(Q)$, $F(z_1)$ is an $(x_1,y_1)$-fan and $F(z_2)$ is an $(x_2,y_2)$-fan then $\{x_1,y_1\}\ne \{x_2,y_2\}$,
  \item if $|E(Q)|\le 2$ then $|A_1(Q)|=|E(Q)|$,
  \item if $|E(Q)|=3$ then $|A_1(Q)|=2$ if $Q$ is a tree and $|A_1(Q)|=3$ otherwise.
 \end{enumerate}
\end{lemma}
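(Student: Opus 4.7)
The plan is to build $A_1(Q)$ explicitly by choosing, for each possible shape of $Q$, one full vertex from each of a small collection of stable fans; Corollary~\ref{cor:num-full} supplies the full edges, whose other endpoints lie in $S_1(Q)$ by definition, and the distinctness among the chosen vertices will come from Lemmas~\ref{lem:fans-edge-disjoint} and~\ref{lem:cycles}.

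First I would enumerate the stable fans available in each case. For $|E(Q)|=1$ both orientations of the unique edge are stable. For larger trees, the $(u,v)$-fan with $uv\in E(Q)$ is stable iff $v$ is a leaf of $Q$ and $u$ is not, since $Q-uv$ must be nontrivial and contain $u$. For a triangle, every orientation of every edge yields a stable fan.

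Next I would carry out a direct case analysis, picking one full vertex from each chosen stable fan. For $|E(Q)|=1$ I pick any full vertex of one stable fan; condition~(i) is vacuous. For $|E(Q)|=2$ the component is a path $x$-$y$-$z$, and the two stable fans $(y,x)$ and $(y,z)$ share the base $y$ but differ in the second vertex, so by Lemma~\ref{lem:fans-edge-disjoint} their edges -- and hence their full vertices -- are disjoint; one vertex from each yields $|A_1(Q)|=2$ with distinct base edges. For $|E(Q)|=3$ with $Q$ a tree -- either a path $w$-$x$-$y$-$z$ or a star centred at $y$ -- two stable fans based at distinct edges always exist; in the path case Lemma~\ref{lem:cycles} ensures their ends are disjoint, while in the star case Lemma~\ref{lem:fans-edge-disjoint} does, so in either situation picking one full vertex from each supplies the required $|A_1(Q)|=2$ with distinct base edges, establishing~(i).

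Finally, for the triangle $Q=xyz$ I need three distinct vertices, and property~(i) is vacuous. Since $\f(x),\f(y),\f(z)$ are pairwise disjoint by Lemma~\ref{lem:distinct-free}$(i)$ and $|\f(v)|\ge\deg_Q(v)=2$ for every $v\in V(Q)$, the two stable $(x,y)$- and $(x,z)$-fans at the common base $x$ contain at least $|\f(y)|\ge 2$ and $|\f(z)|\ge 2$ full edges, respectively, by Corollary~\ref{cor:num-full}. Lemma~\ref{lem:fans-edge-disjoint} applied to these two same-base fans forces their edge sets, and therefore their full-vertex sets, to be disjoint; this produces at least four distinct vertices of $S_1(Q)$, any three of which may be taken as $A_1(Q)$. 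The step I expect to be ``the trick'' of the argument is precisely this last observation: using two stable fans sharing the base $x$ (rather than fans at three different bases) sidesteps a delicate distinctness question that would otherwise arise across fans at different base vertices in a cyclic component, where neither Lemma~\ref{lem:fans-edge-disjoint} nor Lemma~\ref{lem:cycles} is directly available.
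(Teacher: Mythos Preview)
Your proof is correct and follows essentially the same strategy as the paper's: in each case you pick stable fans, extract full vertices via Corollary~\ref{cor:num-full}, and separate them using Lemma~\ref{lem:fans-edge-disjoint} (same base) or Lemma~\ref{lem:cycles} (different bases). The paper treats all trees with $|E(Q)|\ge 2$ uniformly by invoking Lemma~\ref{lem:cycles}, whose proof already collapses to Lemma~\ref{lem:fans-edge-disjoint} when the two leaf-fans share a base, whereas you make this case split explicit; for the triangle, your ``trick'' of taking two fans at a common base is exactly the paper's argument. One small point of hygiene: Corollary~\ref{cor:num-full} applies to \emph{maximal} fans, so you should say you take maximal $(x,y)$- and $(x,z)$-fans (and likewise in the tree cases) before counting full edges.
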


\begin{proof}
First assume $|E(Q)|=1$ and let $E(Q)=\{xy\}$. Pick any maximal $(x,y)$-fan $F$. Then $F$ is stable and by Corollary~\ref{cor:num-full} fan $F$ has at least one full edge $xz$. We put $A_1(Q)=\{z\}$. 

Now assume $|E(Q)|\ge 2$ and $Q$ is a tree.
Consider an arbitrary leaf $\ell$ of $Q$ and let $x\ell$ be the edge of $Q$ incident with $\ell$.
Pick any maximal $(x,\ell)$-fan $F_{\ell}$. Since $\ell$ is a leaf $F_{\ell}$ is stable.
By Corollary~\ref{cor:num-full}, $F_{\ell}$ has at least $|\f(\ell)|\ge 1$ full edges.
Pick any such edge $x v_\ell$.
Since $|E(Q)|\ge 2$ there are at least two leaves.
We pick an arbitrary pair of leaves $\ell_1, \ell_2$ and we put $A_1(Q)=\{v_{\ell_1},v_{\ell_2}\}$.
By Lemma~\ref{lem:cycles} the fans $F_{\ell_1}$ and $F_{\ell_2}$ are disjoint (note that we can apply the lemma since $\ell_1$ and $\ell_2$ are not the endpoints of the same edge), so $|A_1(Q)|=2$.

Finally assume $|E(Q)|=3$ and $Q$ is a cycle.
Pick any vertex $v\in V(Q)$.
Observe that for any $w\in V(Q)$ we have $|\f(w)|\ge 2$.
Hence, by Corollary~\ref{cor:num-full} and Lemma~\ref{lem:fans-edge-disjoint} there are at least 4 full fan edges incident with $v$. 
Moreover, since $Q$ is a cycle, for any $xy\in E(Q)$ all $(x,y)$-fans are stable.
Let $vu_1$, $vu_2$, $vu_3$ be three of the at least four full fan edges incident with $v$.
We put $A_1(Q)=\{u_1,u_2,u_3\}$.
\end{proof}

For every nontrivial free component $Q$ the set of vertices assigned to $Q$ is defined as $A(Q)=V(Q)\cup A_1(Q)$.
Note that $A(Q)\subseteq S(Q)$.
It follows that for any two distinct free components $P$ and $Q$ the sets $A(P)$ and $A(Q)$ are disjoint, since $S(P)$ and $S(Q)$ are disjoint. 
Observe also that some vertices of $G$ may not be assigned to any of the free components. 
Let us denote $A_0 (Q) = V(Q)$, $A = \bigcup_{Q\in \nfc(G,\pi)} A(Q)$, $A_0 = \bigcup_{Q\in \nfc(G,\pi)} A_0(Q)$, and $A_1 = \bigcup_{Q\in \nfc(G,\pi)} A_1(Q)$. 

Our rules for moving the charge are the following.
Let $xy$ be an arbitrary colored edge. By symmetry we can assume that if one of its endpoints is in $A_0$ then $x \in A_0$.
\begin{enumerate}[(R1)]
 \item $xy$ divides its charge equally between its endpoints in $A$, i.e.\ it sends $\frac{1}{|\{x,y\}\cap A|}$ to each of its endpoints from $A$, unless (R2) applies.
 \item If $x\in A(P)$, $y \in A_1(Q)$ for two distinct free components $P$ and $Q$ such that $|E(P)|\ge 2$ and $|E(Q)|=1$, then $xy$ sends $(1-\epsilon_\Delta)$ to $x$ and $\epsilon_\Delta$ to $y$, where
\[ \epsilon_4 = \tfrac{1}{2}, \quad \epsilon_5 = \tfrac{1}{4}, \quad \epsilon_6 = \tfrac{1}{12}, \quad \epsilon_7 = \tfrac{3}{28}.\]
\end{enumerate}

Let $\ch(v)$ denote the amount of charge received by a vertex $v$. For a set $S\subseteq V$ we denote $\ch(S)=\sum_{v\in S}\ch(v)$. The disjointness of the sets $A(Q)$ immediately gives the following.

\begin{proposition}
 \[\gamma_{\Delta}(G) \ge \min_{Q\in\nfc(G,\pi)}\frac{\ch(A(Q))}{\ch(A(Q))+|E(Q)|}.\]
\end{proposition}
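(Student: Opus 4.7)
The plan is to deduce the bound from three elementary facts: the colored edges of $\pi$ already form a $\Delta$-edge-colorable subgraph of $G$; each colored edge contributes at most $1$ to the charge landing on $A$; and the sets $A(Q)$ for $Q \in \nfc(G,\pi)$ are pairwise disjoint. The case $\nfc(G,\pi)=\emptyset$ is trivial (all edges are colored, so $\gamma_\Delta(G)=1$), so I shall assume $\nfc(G,\pi)\ne\emptyset$.

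First, let $c = |\pi^{-1}(\{1,\ldots,\Delta\})|$ and let $u = |E(G)|-c$ denote the number of uncolored edges. The restriction of $\pi$ to the colored edges is a proper $\Delta$-edge-coloring of the subgraph they span, so $c_\Delta(G) \ge c$ and consequently $\gamma_\Delta(G) \ge c/(c+u)$. It therefore suffices to prove $c/(c+u) \ge \alpha$, where $\alpha$ denotes the minimum appearing on the right-hand side of the claimed inequality; note that $\alpha < 1$, since every $Q \in \nfc(G,\pi)$ has $|E(Q)| \ge 1$.

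Next I would set up two double-counting identities. The uncolored edges of $G$ are exactly the edges of $\bot(G,\pi)$, and every such edge belongs to a unique nontrivial free component, so $u = \sum_{Q \in \nfc(G,\pi)} |E(Q)|$. On the charge side, each colored edge starts with charge $1$ and under both (R1) and (R2) any charge that is dispatched at all is dispatched to endpoints lying in $A$ (in (R2) both endpoints are in $A$ by hypothesis, so the full unit of charge is delivered). Hence $\ch(A) \le c$, and by the disjointness of the sets $A(Q)$ noted just before the statement (itself a consequence of Lemma~\ref{lem:usually-fans-do-not-meet}) this rewrites as $\sum_{Q} \ch(A(Q)) \le c$.

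The last step is a short algebraic rearrangement. From the definition of $\alpha$, each component satisfies $\ch(A(Q)) \ge \tfrac{\alpha}{1-\alpha}|E(Q)|$. Summing over $Q \in \nfc(G,\pi)$ and chaining with $c \ge \ch(A) = \sum_Q \ch(A(Q))$ gives $c \ge \tfrac{\alpha}{1-\alpha}\,u$, which rearranges to $c/(c+u) \ge \alpha$ as required. There is no genuine obstacle here: the proposition is essentially a bookkeeping corollary, with the substantive content hidden in the charge-distribution rules and in the structural lemmas that guarantee the $A(Q)$'s are pairwise disjoint.
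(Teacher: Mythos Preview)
Your argument is correct and is precisely the bookkeeping the paper has in mind when it says the proposition follows ``immediately'' from the disjointness of the sets $A(Q)$: you sum the charge over the disjoint $A(Q)$'s, bound this by the number of colored edges, and rearrange. There is nothing to add.
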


In what follows we give lower bounds for the ratio $\frac{\ch(A(Q))}{\ch(A(Q))+|E(Q)|}$ for $\Delta=4,\ldots,7$ and $|E(Q)|=1,\ldots,\floor{\Delta/2}$, which is sufficient by Corollary~\ref{cor:num-uncolored}. We begin with some simple cases.

\begin{lemma}
\label{lem:incident}
 Let $e$ be a colored edge incident with a free component $Q$. Then the charge $e$ sends to $A(Q)$ is
 \begin{enumerate}[$(i)$]
  \item at least $\frac{1}2$,
  \item at least $1-\epsilon_\Delta$ if $e$ is a full edge of a non-stable fan,
  \item 1 if $e$ is a full edge of a stable fan.
 \end{enumerate}
\end{lemma}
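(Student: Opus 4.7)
The plan is to fix a colored edge $e=xy$ incident with $Q$ and, using the paper's convention, assume $x\in V(Q)$; consequently $y\notin V(Q)$ since all edges of $Q$ are uncolored. I would then classify the situation by the membership of $y$: either (a) $y\notin A$, (b) $y\in A_1(Q)$, or (c) $y\in A_1(Q')$ for some nontrivial free component $Q'\ne Q$. These cases are exhaustive and mutually exclusive because the sets $A_1(\cdot)$ of distinct free components are pairwise disjoint, as $A_1\subseteq S_1$ and the $S$-sets of distinct free components are disjoint by Lemma~\ref{lem:usually-fans-do-not-meet}.

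For part (i) I would check each case directly. In (a), rule (R1) routes the entire unit to $x\in A(Q)$, giving $1$. In (b), both endpoints lie in $A(Q)$, so (R1) splits the unit and both halves land in $A(Q)$, again giving $1$. In (c), rule (R2) applies exactly when $|E(Q)|\ge 2$ and $|E(Q')|=1$ and sends $1-\epsilon_\Delta$ to $A(Q)$, whereas otherwise (R1) applies and sends exactly $\tfrac12$. Since $1-\epsilon_\Delta\ge\tfrac12$ for every $\Delta\in\{4,\ldots,7\}$, claim (i) follows.

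For part (iii), suppose $e$ is a full edge of a stable $(x,y_1)$-fan in $Q$. Then $y\in S_1(Q)$ by definition, so by the disjointness of $S$-sets $y\notin S_1(Q')$ and hence $y\notin A_1(Q')$ for any $Q'\ne Q$; this rules out case (c). In cases (a) and (b) rule (R1) sends exactly $1$ to $A(Q)$, as required.

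For part (ii), suppose $e$ is a full edge of a non-stable fan $F$ in $Q$. Cases (a), (b) and case (c) under rule (R2) already contribute at least $1-\epsilon_\Delta$ to $A(Q)$, so the only concern is case (c) under rule (R1); equivalently $y\in A_1(Q')$ for some $Q'\ne Q$ with either $|E(Q)|=1$ or $|E(Q')|\ge 2$. Here $y$ is a common end of the non-stable $F$ in $Q$ and the stable $F(y)$ in $Q'$, and Lemma~\ref{lem:usually-fans-do-not-meet} applied with $F_1=F(y)$ and $F_2=F$ forces $|E(Q)|>|E(Q')|$. Combined with $|E(Q')|\ge 2$ this yields $|E(Q)|\ge 3$, which is impossible for $\Delta\in\{4,5\}$ since $|E(Q)|\le\lfloor\Delta/2\rfloor=2$. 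For $\Delta=6$ one must have $|E(Q)|=3$, hence $|\f(Q)|\ge 2|E(Q)|=6=\Delta$ by Corollary~\ref{cor:num-uncolored-0}, so Lemma~\ref{lem:full-component-no-QR-edges} forbids $e$ as an $S(Q)S(Q')$-edge, a contradiction. The main obstacle will be the residual $\Delta=7$ subcase with $|E(Q)|=3$ and $|\f(Q)|=6$, where Lemma~\ref{lem:full-component-no-QR-edges} is one free colour short of applying; here I would combine Corollary~\ref{cor:C1} with the equality $|\f(v)|=\deg_Q(v)$ forced at every $v\in V(Q)$ and with rotations of $F$ and $F(y)$ to derive a contradiction with $\Psi$-maximality.
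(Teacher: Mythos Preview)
Your case split in part~(i) is incomplete: the assertion ``$y\notin V(Q)$ since all edges of $Q$ are uncolored'' is false. A colored edge can perfectly well have both endpoints in $V(Q)$; it is simply not an edge of the free component $Q$. You also omit the case $y\in A_0(Q')=V(Q')$ for some nontrivial $Q'\ne Q$. Both omissions are harmless---in the first the full unit goes to $A(Q)$, in the second (R2) is inapplicable (it needs $y\in A_1$) and (R1) sends $\tfrac12$---but the case analysis should list them. Part~(iii) is fine and matches the paper.

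For part~(ii) you are doing substantially more than the paper, whose entire argument is: $\overline\pi(w)=\emptyset$, hence $w\notin A_0$, hence ``the rules imply~(ii)''. You correctly spot that this one-liner glosses over the subcase $w\in A_1(Q')$ with $|E(Q')|\ge 2$, where (R1) gives only $\tfrac12$, which is strictly below $1-\epsilon_\Delta$ once $\Delta\ge 5$. Your use of Lemma~\ref{lem:usually-fans-do-not-meet} to force $|E(Q)|>|E(Q')|\ge 2$ and then kill $\Delta\in\{5,6\}$ is sound, but the residual $\Delta=7$ case ($|E(Q)|=3$, $|\overline\pi(Q)|=6$, $|E(Q')|=2$) is left as a sketch rather than a proof; the suggested combination of Corollary~\ref{cor:C1} with rotations is not obviously going to close it. Note, however, that in the paper part~(ii) is only ever invoked in Lemma~\ref{lem:bound-incident}, where $|E(Q)|=2$; there your Lemma~\ref{lem:usually-fans-do-not-meet} contradiction already applies and the problematic subcase cannot occur. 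So the paper's terse proof is adequate for how~(ii) is used, while your more careful reading exposes that the full-generality statement is not quite justified by that one line.
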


\begin{proof}
The discharging rules easily imply $(i)$.
Let $e=vw$ for $v\in V(Q)$ and $w\not\in V(Q)$.
If $e$ is a full edge, then $\f(w)=\emptyset$, so $w \not\in A_0$ and hence the rules imply $(ii)$.
Finally, if $e$ is a full edge of a stable fan then by Lemma~\ref{lem:usually-fans-do-not-meet} there is no free component $P\ne Q$ such that $w\in A_1(P)$.
It follows that if $w \in A$ then $w\in A_1(Q)$, so by (R1) $e$ sends 1 to $A(Q)$ and $(iii)$ follows.
\end{proof}

\begin{lemma}
\label{lem:bound-incident-1-edge}
 Let $Q$ be a free component consisting of exactly one edge.
 Then, the edges incident with $Q$ send the charge of at least $\Delta$ to $A(Q)$.
\end{lemma}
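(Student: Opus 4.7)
My plan is to exploit the fact that when $|E(Q)|=1$, every fan at the unique edge of $Q$ is automatically stable (since removing $xy$ from $Q$ leaves no edges), so Lemma~\ref{lem:incident}(iii) will hand us a full unit of charge per full fan edge. Combining this with the baseline $\tfrac{1}{2}$ per incident edge from Lemma~\ref{lem:incident}(i), the total charge sent to $A(Q)$ should just meet the bound $\Delta$.

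Concretely, write $E(Q)=\{xy\}$, so $V(Q)=\{x,y\}$; by Lemma~\ref{lem:A_1}$(ii)$ the set $A_1(Q)$ consists of a single vertex $z$, and $A(Q)=\{x,y,z\}$. I would pick a maximal $(x,y)$-fan $F_x$ and a maximal $(y,x)$-fan $F_y$, both of which are stable by the observation above. By Corollary~\ref{cor:num-full}, $F_x$ contains at least $|\f(y)|$ full edges (all incident to $x$) and $F_y$ contains at least $|\f(x)|$ full edges (all incident to $y$), and these two collections are automatically disjoint (full edges are colored, whereas $xy$ is not). Before invoking the discharging lemma I must check that rule (R2) never fires on an edge incident to $x$ or $y$: the ``$A_1$-side'' of (R2) would require $x$ or $y$ to lie in $A_1$ of some component, but they lie in $A_0(Q)$ and the sets $S(P)$ are pairwise disjoint across components by Lemma~\ref{lem:usually-fans-do-not-meet}; the ``$A(P)$-side with $|E(P)|\ge 2$'' is blocked since our $A(Q)$ has $|E(Q)|=1$. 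Thus only (R1) applies at $x$ and $y$, and Lemma~\ref{lem:incident}(i),(iii) apply cleanly.

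The remaining calculation is routine. Among the $\Delta-|\f(x)|$ colored edges at $x$, at least $|\f(y)|$ are full edges of the stable fan $F_x$, so they contribute $1$ each to $A(Q)$; the remaining at most $\Delta-|\f(x)|-|\f(y)|$ contribute at least $\tfrac{1}{2}$ each. This yields a lower bound of $\tfrac{1}{2}(\Delta+|\f(y)|-|\f(x)|)$ for the charge sent to $A(Q)$ by edges at $x$, and symmetrically $\tfrac{1}{2}(\Delta+|\f(x)|-|\f(y)|)$ for edges at $y$. Since $xy$ is uncolored, colored edges at $x$ and at $y$ are disjoint, so the two bounds add to exactly $\Delta$; nonnegativity of the ``remainder'' counts follows from $|\f(x)|+|\f(y)|=|\f(Q)|\le\Delta$ by Lemma~\ref{lem:distinct-free}$(i)$. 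There is no real obstacle beyond the careful (R2)-inapplicability check above; the bound is in fact tight exactly when the two fans use up all incident colored edges.
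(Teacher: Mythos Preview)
Your proof is correct and follows essentially the same route as the paper: both pick maximal $(x,y)$- and $(y,x)$-fans, note they are stable because $|E(Q)|=1$, count full fan edges via Corollary~\ref{cor:num-full}, and combine the $\tfrac{1}{2}$ baseline from Lemma~\ref{lem:incident}$(i)$ with the full unit from Lemma~\ref{lem:incident}$(iii)$ to reach $\Delta$. Your explicit (R2)-inapplicability check is harmless but redundant, since Lemma~\ref{lem:incident} already absorbs that analysis; the paper simply invokes that lemma without revisiting the rules.
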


\begin{proof}
 By Lemma~\ref{lem:incident} every colored edge incident with $Q$ sends at least $1/2$ to $A(Q)$. Since for every vertex $v\in V(Q)$ there are exactly $\Delta-\f(v)$ such edges, they send at least $\frac{1}2\sum_{v\in Q}(\Delta-|\f(v)|)$ to $A(Q)$.
 
 Let $E(Q)=\{xy\}$. Then we choose a maximal $(x,y)$-fan $F_1$ and a maximal $(y,x)$-fan $F_2$. 
 Note that both $F_1$ and $F_2$ are stable, since $|E(Q)|=1$.
 The fan $F_1$ (resp.\ $F_2$) has at least $|\f(y)|$ (resp.\ $|\f(x)|$) full edges by Corollary~\ref{cor:num-full}. Hence there are at least $\sum_{v\in Q}|\f(v)|$ full fan edges incident with $Q$ and by Lemma~\ref{lem:incident} each of them sends $1$ to $A(Q)$. It follows that the total charge $A(Q)$ receives from the incident edges is at least $\frac{1}2\sum_{v\in Q}(\Delta-\f(v))+\frac{1}{2}\sum_{v\in Q}|\f(v)| = \Delta$. 
 \end{proof}

\begin{proposition}
\label{prop:bound-full-end}
Let $F$ be a stable $(x,y)$-fan and let $xz$ be a full edge of $F$.
If $z\in A_1(Q)$ for some free component $Q$, then the charge received by $z$ from edges not incident with $Q$ is at least $\eta_\Delta(\Delta-|V(Q)|)$, where
\[\eta_\Delta = \begin{cases}
                 \frac{1}{2}  & \text{when $|E(Q)|\ge 2$} \\
                 \epsilon_\Delta & \text{when $|E(Q)|=1$.}
           \end{cases}
\]
\end{proposition}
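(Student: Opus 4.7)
The proof would start from the observation that $xz$ being a full edge of $F$ forces $\f(z)=\emptyset$, so all $\Delta$ colors appear at $z$. Since $G$ is simple with $\Delta(G)=\Delta$, this means $z$ has exactly $\Delta$ incident edges, and all of them are colored. Because $G$ is simple, at most $|V(Q)|$ of these edges have their other endpoint in $V(Q)$, so at least $\Delta-|V(Q)|$ colored edges incident to $z$ have their other endpoint outside $V(Q)$, i.e., are not incident to $Q$. The plan is therefore to show that each such edge contributes at least $\eta_\Delta$ charge to $z$; multiplying yields the required bound.

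To estimate the charge contributed by a single edge $zw$ with $w\notin V(Q)$, I would do a short case analysis based on the rules (R1) and (R2), using the fact that $z\in A_1(Q)\subseteq A$, so $z$ always receives a positive share. When $|E(Q)|\ge 2$, the endpoint $z$ lies in $A(Q)$ with a component of at least two edges, so $z$ can only play the role of ``$x$'' in (R2), never the role of ``$y$''. Hence (R2) sends $1-\epsilon_\Delta$ to $z$ in the worst case, and (R1) sends at least $1/2$ to $z$; since $\epsilon_\Delta\le 1/2$ for $\Delta\in\{4,\ldots,7\}$, each non-incident edge contributes at least $1/2=\eta_\Delta$, as claimed.

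For the case $|E(Q)|=1$ the analysis is similar but the roles are reversed: now $z\in A_1(Q)$ with $|E(Q)|=1$, so $z$ is eligible to play the role of ``$y$'' in (R2), and this is precisely the situation that can reduce the charge to $\epsilon_\Delta$. Concretely, I would split on whether $w\in A$, and if $w\in A(R)$, on whether $|E(R)|\ge 2$ or $|E(R)|=1$. If $w\notin A$, rule (R1) gives all of the charge, namely $1$, to $z$. If $w\in A(R)$ with $|E(R)|\ge 2$, rule (R2) applies with $w$ as ``$x$'' and $z$ as ``$y$'', giving $\epsilon_\Delta$ to $z$. If $w\in A(R)$ with $|E(R)|=1$, neither side matches the ``$x$'' condition of (R2), so (R1) applies and $z$ gets $1/2$. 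In every subcase the charge sent to $z$ is at least $\epsilon_\Delta=\eta_\Delta$.

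Combining the two cases, each of the $\Delta-|V(Q)|$ colored edges incident to $z$ but not to $Q$ contributes at least $\eta_\Delta$ to $\ch(z)$, giving the desired lower bound $\eta_\Delta(\Delta-|V(Q)|)$. The argument is essentially a bookkeeping exercise once one has spotted that $z$ is full and that $z\in A_1(Q)$ constrains which role $z$ can play in (R2); there is no real obstacle, and the verification that $\epsilon_\Delta\le\tfrac12$ for $\Delta\in\{4,\ldots,7\}$ is immediate from the numerical definitions of $\epsilon_\Delta$.
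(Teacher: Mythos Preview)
Your proof is correct and is precisely the argument the paper intends: in the paper this proposition is stated without proof, evidently regarded as an immediate consequence of the discharging rules (R1)--(R2) together with the fact that a full vertex has degree $\Delta$ with all incident edges colored. Your case analysis on which role $z$ can play in (R2) is exactly the right bookkeeping, and the only point worth making explicit is that $z\notin V(Q)$ (which you use implicitly when counting edges to $V(Q)$); this holds because $\f(z)=\emptyset$ while every vertex of a free component has a free color.
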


\begin{corollary}
\label{cor:bound-1-edge}
Let $Q$ be a one-edge free component of $(G,\pi)$.
Then,
\[\frac{\ch(A(Q))}{\ch(A(Q))+|E(Q)|} \ge \begin{cases}
                                            \frac{5}6 & \text{when $\Delta=4$,}\\ 
                                            \frac{23}{27} & \text{when $\Delta=5$,}\\ 
                                            \frac{19}{22} & \text{when $\Delta=6$,}\\ 
                                            \frac{211}{239}>\frac{22}{25} & \text{when $\Delta=7$.} 
                                         \end{cases}
\]
\end{corollary}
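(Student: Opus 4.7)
The plan is to show the single inequality $\ch(A(Q)) \ge \Delta + \epsilon_\Delta(\Delta-2)$, from which all four cases of the corollary follow by direct substitution. Write $E(Q)=\{xy\}$; by Lemma~\ref{lem:A_1}$(ii)$ applied with $|E(Q)|=1$, the set $A_1(Q)$ is a singleton $\{z\}$, where $z$ is a full endpoint of some stable $(x,y)$- or $(y,x)$-fan. Hence $A(Q)=\{x,y,z\}$.

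The charge bound decomposes cleanly into two disjoint contributions. First I would invoke Lemma~\ref{lem:bound-incident-1-edge}: the edges incident with $Q$ (those with exactly one endpoint in $\{x,y\}$) send total charge at least $\Delta$ to $A(Q)$. Second, I would invoke Proposition~\ref{prop:bound-full-end}, which applies because $z\in A_1(Q)$ and $|E(Q)|=1$, so $\eta_\Delta=\epsilon_\Delta$; with $|V(Q)|=2$ it yields that $z$ receives charge at least $\epsilon_\Delta(\Delta-2)$ from edges \emph{not} incident with $Q$. The two contributions are disjoint by construction (partitioning colored edges by whether or not they are incident with $Q$), so they sum to give $\ch(A(Q)) \ge \Delta + \epsilon_\Delta(\Delta-2)$.

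Finally, plugging in the constants defined with rule~(R2) gives the four numerical bounds. For $\Delta=4$: $\ch(A(Q)) \ge 4+\tfrac{1}{2}\cdot 2 = 5$, so the ratio is at least $\tfrac{5}{6}$. For $\Delta=5$: $\ch(A(Q)) \ge 5+\tfrac{1}{4}\cdot 3 = \tfrac{23}{4}$, giving ratio $\tfrac{23}{27}$. For $\Delta=6$: $\ch(A(Q)) \ge 6+\tfrac{1}{12}\cdot 4 = \tfrac{19}{3}$, giving ratio $\tfrac{19}{22}$. For $\Delta=7$: $\ch(A(Q)) \ge 7+\tfrac{3}{28}\cdot 5 = \tfrac{211}{28}$, giving ratio $\tfrac{211}{239}$, which exceeds $\tfrac{22}{25}$ since $211\cdot 25 = 5275 > 5258 = 22\cdot 239$.

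There is essentially no obstacle here: the corollary is a direct assembly of Lemma~\ref{lem:bound-incident-1-edge} and Proposition~\ref{prop:bound-full-end}. The one small thing worth flagging explicitly in the write-up is the disjointness of the two accounting sources — in particular, the witnessing full edge $xz$ is incident with $Q$ and is thus counted only in the first bound (where it contributes its whole charge of $1$ to $A(Q)$ since both endpoints lie in $A(Q)$), while Proposition~\ref{prop:bound-full-end} is stated precisely to exclude this edge. The exact choice of the $\epsilon_\Delta$ values is reverse-engineered so that all four ratios come out tight; this is the only reason those particular constants appear in rule~(R2).
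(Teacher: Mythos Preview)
Your proposal is correct and follows essentially the same approach as the paper: invoke Lemma~\ref{lem:A_1} to get $|A_1(Q)|=1$, then combine Lemma~\ref{lem:bound-incident-1-edge} and Proposition~\ref{prop:bound-full-end} (with $\eta_\Delta=\epsilon_\Delta$ since $|E(Q)|=1$) to obtain $\ch(A(Q))\ge \Delta+\epsilon_\Delta(\Delta-2)$, and substitute the values of $\epsilon_\Delta$. Your additional remark on the disjointness of the two charge contributions is a helpful clarification that the paper leaves implicit.
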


\begin{proof}
By Lemma~\ref{lem:A_1} we have $|A_1(Q)|=1$. Hence, by Lemma~\ref{lem:bound-incident-1-edge} and Proposition~\ref{prop:bound-full-end} we have $\ch(A(Q)) \ge \Delta + \eta_\Delta(\Delta-2)$, which is equal to $5$, $\frac{23}{4}$, $\frac{19}3$ and $\frac{211}{28}$ when $\Delta=4,5,6,7$, respectively. The claim follows.
\end{proof}

From our charge moving rules and Lemma~\ref{lem:full-component-no-QR-edges} we immediately get the following.

\begin{proposition}
\label{prop:bound-full-component}
For every free component $Q$ such that $|\f(Q)|=\Delta$ every edge incident with $A(Q)$ sends 1 to $A(Q)$. \qed
 \end{proposition}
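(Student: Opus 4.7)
The plan is to observe that the claim follows essentially from unwrapping the definitions: we just need to argue that an edge incident with $A(Q)$ cannot be subject to rule (R2), and that rule (R1) always sends a total of $1$ to $A(Q)$ in the allowed cases.

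First I would record that, by construction, $A(Q) = V(Q) \cup A_1(Q) \subseteq S(Q)$, since $V(Q) \subseteq S(Q)$ trivially and $A_1(Q) \subseteq S_1(Q) \subseteq S(Q)$ by Lemma~\ref{lem:A_1}. Now fix a colored edge $e = vw$ with $v \in A(Q)$. Because $v \in S(Q)$, Lemma~\ref{lem:full-component-no-QR-edges} (applied with the hypothesis $|\f(Q)| = \Delta$) forbids $w$ from lying in $S(R)$ for any free component $R \ne Q$. In particular $w \notin A(R)$ for any $R \ne Q$, so either $w \in A(Q)$ or $w \notin A$.

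Next I would examine the two charge-moving rules. Rule (R2) requires the two endpoints of $e$ to lie in $A(P)$ and $A_1(Q')$ for two \emph{distinct} free components $P,Q'$, which we have just excluded for $e$. Hence the charge of $e$ is distributed by rule (R1), which sends $1/|\{v,w\} \cap A|$ to each endpoint in $A$. If $w \in A(Q)$, both halves of the unit charge land in $A(Q)$, and if $w \notin A$ the full charge of $1$ goes to $v \in A(Q)$. In either case $e$ contributes exactly $1$ to $\ch(A(Q))$, as claimed.

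There is no real obstacle; the only point that requires a moment's thought is to be sure that the inclusion $A_1(Q) \subseteq S(Q)$ makes Lemma~\ref{lem:full-component-no-QR-edges} applicable to \emph{every} endpoint in $A(Q)$, not just to vertices of $Q$ itself. This is precisely why $A_1(Q)$ was defined as a subset of $S_1(Q)$ in Lemma~\ref{lem:A_1}, so the whole argument reduces to a one-line case analysis, matching the authors' assertion that the proposition is immediate.
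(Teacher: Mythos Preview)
Your proof is correct and matches the paper's intended argument exactly: the authors state that the proposition follows immediately from the charge-moving rules together with Lemma~\ref{lem:full-component-no-QR-edges}, and your write-up simply unpacks this, using $A(Q)\subseteq S(Q)$ to rule out any endpoint of $e$ lying in $A(R)$ for $R\ne Q$, which kills (R2) and forces (R1) to deliver the full unit of charge to $A(Q)$.
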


\begin{lemma}
\label{lem:degrees-full-component}
 Let $Q$ be a free component such that $|E(Q)|=\floor{\Delta/2}$.
 Then, $A(Q)$ contains exactly $|\f(Q)|-2\floor{\Delta/2}$ vertices of degree $\Delta-1$ in $G$ and all the remaining vertices of $A(Q)$ are of degree $\Delta$.
 \end{lemma}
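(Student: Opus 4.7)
The plan is to compute $|\f(Q)|$ in terms of the degrees in $G$ of vertices of $V(Q)$, and observe that the slack $\sum_{v \in V(Q)}(\Delta - \deg_G(v))$ is forced to equal exactly $|\f(Q)|-2\floor{\Delta/2}$, which is at most $1$. This will immediately pin down the degree sequence on $V(Q)$, and then a direct argument handles $A_1(Q)$.

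\medskip

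The key identity comes from counting free colors vertex by vertex. For any $v \in V(Q)$, the number of distinct colors appearing on colored edges incident to $v$ is $\deg_G(v) - \deg_Q(v)$ (each colored edge contributes a distinct color by properness, and the uncolored edges at $v$ are precisely the $Q$-edges at $v$ because $Q$ is a component of $\bot(G,\pi)$). Hence $|\f(v)| = \Delta - \deg_G(v) + \deg_Q(v)$. Summing over $v \in V(Q)$ and invoking Lemma~\ref{lem:distinct-free}$(i)$ to get $|\f(Q)| = \sum_{v\in V(Q)} |\f(v)|$, together with $\sum_v \deg_Q(v) = 2|E(Q)| = 2\floor{\Delta/2}$, gives
\[
\sum_{v \in V(Q)} (\Delta - \deg_G(v)) \;=\; |\f(Q)| - 2\floor{\Delta/2}.
\]
Since $|\f(Q)|\le \Delta$ and $2\floor{\Delta/2}\in\{\Delta-1,\Delta\}$, the right-hand side lies in $\{0,1\}$. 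Because every summand on the left is a nonnegative integer, this forces every $v\in V(Q)$ to satisfy $\deg_G(v)\in\{\Delta-1,\Delta\}$, and moreover the number of vertices of $V(Q)$ with $\deg_G(v)=\Delta-1$ is exactly $|\f(Q)|-2\floor{\Delta/2}$.

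\medskip

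It remains to show that every vertex of $A_1(Q)$ has degree $\Delta$ in $G$. By definition, any $z\in A_1(Q)\subseteq S_1(Q)$ arises as the endpoint of a full edge $xz$ of some stable fan rooted at an edge of $Q$. Being a full edge means $\f(z)=\emptyset$, so all $\Delta$ colors are used on edges incident to $z$, which forces $\deg_G(z)\ge \Delta$ and hence $\deg_G(z)=\Delta$. Combining the two contributions, $A(Q)=V(Q)\cup A_1(Q)$ contains exactly $|\f(Q)|-2\floor{\Delta/2}$ vertices of degree $\Delta-1$ and all remaining vertices of $A(Q)$ have degree $\Delta$.

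\medskip

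The main (and essentially only) subtlety is checking that the double-counting identity really does use all the hypotheses correctly: one must remember that $Q$ is a connected component of $\bot(G,\pi)$ (so uncolored edges at $v\in V(Q)$ are exactly $Q$-edges), and that Lemma~\ref{lem:distinct-free}$(i)$ makes the $|\f(v)|$ additive. No appeal to the $\Psi$-maximality beyond what is built into those earlier lemmas is needed here, and the proof is a short calculation rather than a structural argument.
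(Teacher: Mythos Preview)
Your proof is correct and follows essentially the same approach as the paper: both compute $\sum_{v\in V(Q)}(\Delta-\deg_G(v))=|\f(Q)|-2\lfloor\Delta/2\rfloor\le 1$ via the identity $|\f(v)|=\Delta-\deg_G(v)+\deg_Q(v)$ and disjointness of free colors, then observe that full-fan endpoints in $A_1(Q)$ have $\f(z)=\emptyset$ and hence degree~$\Delta$. Your citation of Lemma~\ref{lem:distinct-free}$(i)$ for the additivity $|\f(Q)|=\sum_v|\f(v)|$ is in fact more accurate than the paper's own reference to Lemma~\ref{lem:A}.
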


\begin{proof}
Clearly, for every $v\in V(Q)$ we have $|\f(v)| = \Delta - |\pi(v)| = \Delta - (\deg_G(v) - \deg_Q(v))$.
By Lemma~\ref{lem:A}, $|\f(Q)|=\sum_{v\in V(Q)}\f(v)$, so 
\[|\f(Q)| = \sum_{v\in V(Q)}(\Delta-\deg_G(v)) + 2|E(Q)|.\]
By plugging in our assumptions and rearranging the formula, we get 
\[\sum_{v\in V(Q)}(\Delta-\deg_G(v)) = |\f(Q)| -  2\floor{\Delta/2}.\]
By Corollary~\ref{cor:num-uncolored-0} we have $|\f(Q)|\ge 2|E(Q)|\ge \Delta-1$.
Hence, $|\f(Q)| -  2\floor{\Delta/2} \le 1$.
It follows that $Q$ has exactly $|\f(Q)|-2\floor{\Delta/2}$ vertices of degree $\Delta-1$ in $G$ and all the remaining vertices of $Q$ are of degree $\Delta$.
Moreover, since the vertices of $A_1(Q)$ are ends of {\em full} fan edges, each of them is of degree $\Delta$.
The claim follows.
\end{proof}

\begin{corollary}
\label{cor:bound-full-even-component}
 When $\Delta$ is even, for every free component $Q$ such that $|E(Q)|=\Delta/2$, 
 \[\ch(A(Q)) \ge \Delta|A(Q)| - |E(G[A(Q)])| - |E(Q)|.\]
 \end{corollary}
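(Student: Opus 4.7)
The plan is to combine the structural results about full free components with a simple degree-counting argument. The hypothesis $|E(Q)|=\Delta/2$ with $\Delta$ even, together with Corollary~\ref{cor:num-uncolored-0}, immediately forces $|\f(Q)|\ge 2|E(Q)|=\Delta$, hence $|\f(Q)|=\Delta$. Plugging this into Lemma~\ref{lem:degrees-full-component} (with $|\f(Q)|-2\lfloor\Delta/2\rfloor=0$) shows that \emph{every} vertex of $A(Q)$ has degree exactly $\Delta$ in $G$. This will let me convert $|A(Q)|$ into an exact count of edge-incidences.

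Next I would count, in two different ways, the number of edges of $G$ with at least one endpoint in $A(Q)$. Summing degrees gives $\sum_{v\in A(Q)}\deg_G(v)=\Delta|A(Q)|$, which counts internal edges (both endpoints in $A(Q)$) twice and external edges once; hence the total number of edges incident with $A(Q)$ is exactly $\Delta|A(Q)|-|E(G[A(Q)])|$.

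The key step — which is the only place structural information beyond degrees enters — is to show that the only \emph{uncolored} edges incident with $A(Q)$ are the $|E(Q)|$ edges of $Q$ itself. Since uncolored edges are precisely the edges of free components, any uncolored edge incident with $A(Q)\subseteq S(Q)$ either lies in $Q$ or belongs to some other free component $R$; in the latter case it would constitute an $S(Q)S(R)$-edge, which is forbidden by Lemma~\ref{lem:full-component-no-QR-edges} because $|\f(Q)|=\Delta$. Therefore the number of \emph{colored} edges incident with $A(Q)$ is exactly $\Delta|A(Q)|-|E(G[A(Q)])|-|E(Q)|$.

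Finally, Proposition~\ref{prop:bound-full-component} says that each such colored edge sends its full unit of charge into $A(Q)$ (for internal edges this is automatic under (R1); for external edges it follows because the other endpoint cannot lie in $A$, again by Lemma~\ref{lem:full-component-no-QR-edges}). Summing over all colored edges incident with $A(Q)$ then yields the required lower bound on $\ch(A(Q))$. I do not expect real obstacles: the whole argument rests on one structural input (no $S(Q)S(R)$-edges when $|\f(Q)|=\Delta$), while the rest is accounting.
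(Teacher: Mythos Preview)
Your proof is correct and follows essentially the same route as the paper: deduce $|\f(Q)|=\Delta$ from Corollary~\ref{cor:num-uncolored-0}, use Lemma~\ref{lem:degrees-full-component} to count edges incident with $A(Q)$, subtract the uncolored ones, and apply Proposition~\ref{prop:bound-full-component}. One minor remark: your appeal to Lemma~\ref{lem:full-component-no-QR-edges} to rule out uncolored edges at $A(Q)$ outside $E(Q)$ is heavier than needed --- vertices of $A_1(Q)$ are full (no free colors, hence no incident uncolored edges), so any uncolored edge meeting $A(Q)$ already has an endpoint in $V(Q)$ and therefore lies in $Q$ by the definition of a free component.
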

 
 \begin{proof}
  By Corollary~\ref{cor:num-uncolored-0} we have $|\f(Q)|=\Delta$. Hence by Lemma~\ref{lem:degrees-full-component} there are exactly  $\Delta|A(Q)| - |E(G[A(Q)])|$ edges incident with $A(Q)$, and $|E(Q)|$ of them are uncolored. This, together with Proposition~\ref{prop:bound-full-component}, gives the claim.
\end{proof}

Now we are very close to establishing our bound for $\Delta=4$. We will need just one more auxiliary claim (Lemma~\ref{lem:K_(k+1)-e} below).

\begin{lemma}[Folklore, see e.g.~\cite{bryant}]
\label{lem:odd-clique}
 For every odd $k$, the clique $K_{k+1}$ is $k$-edge colorable.
\end{lemma}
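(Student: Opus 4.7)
The plan is to give the classical round-robin construction that exhibits an explicit proper $k$-edge coloring of $K_{k+1}$. Label the vertex set of $K_{k+1}$ as $\{v_0, v_1, \ldots, v_{k-1}\} \cup \{v_\infty\}$, using $\mathbb{Z}_k$ to index the first $k$ vertices, and use the color set $\{0, 1, \ldots, k-1\}$. For each color $c$, define a candidate color class $M_c$ consisting of the edge $v_\infty v_c$ together with all edges $v_i v_j$ with $i \neq j$ such that $i + j \equiv 2c \pmod{k}$.

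The main verification is to check that each $M_c$ is a perfect matching of $K_{k+1}$, and that every edge of $K_{k+1}$ appears in exactly one $M_c$. For the matching property, fix $c$ and a vertex $v_i$ with $i \in \mathbb{Z}_k$; the equation $i + j \equiv 2c \pmod{k}$ has the unique solution $j \equiv 2c - i \pmod k$. Since $k$ is odd, $2$ is invertible modulo $k$, so $j = i$ happens exactly when $i \equiv c \pmod{k}$; thus $v_c$ is the unique vertex in $\{v_0,\ldots,v_{k-1}\}$ that has no partner inside this set, and it is matched to $v_\infty$ in $M_c$. Every other $v_i$ is paired with exactly one other $v_j$, so $M_c$ is indeed a perfect matching on the $k+1$ vertices. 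For the partition property, each edge $v_\infty v_i$ belongs to exactly one $M_c$ (namely $c = i$), and each edge $v_i v_j$ with $i \neq j$ belongs to $M_c$ iff $c \equiv 2^{-1}(i+j) \pmod{k}$, which again pins down a unique $c$.

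Using these $k$ matchings as color classes yields the desired proper $k$-edge coloring, so $K_{k+1}$ is $k$-edge-colorable. The only subtlety is the use of the oddness of $k$ to invert $2$ modulo $k$; this is precisely what guarantees that the pairing on $\{v_0,\ldots,v_{k-1}\}$ leaves exactly one unmatched vertex that can be joined to $v_\infty$. No other step requires real work, so I expect the write-up to be a short verification rather than a genuinely hard argument.
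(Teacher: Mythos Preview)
Your proof is correct: the round-robin construction you give is the standard folklore argument, and your verification that each $M_c$ is a perfect matching and that the $M_c$ partition $E(K_{k+1})$ is sound, with the invertibility of $2$ modulo the odd number $k$ being exactly the place where the parity hypothesis enters.

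The paper itself does not supply a proof of this lemma; it simply cites it as folklore with a reference to Bryant. So there is nothing to compare your argument against, and your write-up would serve as a self-contained proof that the paper omits.
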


\begin{lemma}
\label{lem:even-clique}
 For every even $k$ we have $c(K_{k+1})=k^2/2$.
 Moreover, there is a partial $k$-edge-coloring $\pi$ of $K_{k+1}$ with $k^2/2$ colored edges such that the uncolored edges form a matching, and for each pair of distinct vertices $x$ and $y$, $\f(x)\ne\f(y)$.
\end{lemma}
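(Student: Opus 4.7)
The plan is to prove both the counting equality $c(K_{k+1})=k^2/2$ and the existence of a partial coloring with the claimed structural properties by explicit construction from a proper $(k+1)$-edge-coloring of~$K_{k+1}$.

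First I would establish the upper bound $c(K_{k+1})\le k^2/2$. Any $k$-edge-colorable subgraph of $K_{k+1}$ partitions its edges into $k$ matchings. Since $k+1$ is odd, every matching in $K_{k+1}$ has at most $k/2$ edges, so such a subgraph has at most $k\cdot (k/2)=k^2/2$ edges.

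For the matching lower bound and for the properties of~$\pi$, I would use the standard round-robin construction. Identify $V(K_{k+1})$ with $\mathbb{Z}_{k+1}$ and assign to every edge $\{i,j\}$ the color $(i+j)\bmod (k+1)\in\{0,1,\dots,k\}$ (using $\{0,\dots,k\}$ for convenience; afterwards relabel to $\{1,\dots,k+1\}$). A one-line calculation shows this is a proper $(k+1)$-edge-coloring; moreover, since $k+1$ is odd, the equation $2v\equiv c\pmod{k+1}$ has a unique solution, so each color class $c$ is a near-perfect matching missing exactly one vertex $m(c)$, and the assignment $c\mapsto m(c)$ is a bijection between colors and vertices.

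Now I would drop the edges colored~$k+1$ (after relabeling); let $\pi$ be the resulting partial $k$-edge-coloring. By construction the uncolored edges are precisely the dropped color class, hence form a matching of $k/2$ edges, and the number of colored edges is $\binom{k+1}{2}-k/2=k^2/2$, matching the upper bound. To verify the free-color property, I would observe that in the original $(k+1)$-coloring every vertex $v$ is missed by exactly one color $c_v$, with $v\mapsto c_v$ a bijection. Letting $v^\star$ be the unique vertex with $c_{v^\star}=k+1$: after dropping color $k+1$, vertex $v^\star$ still uses every color in $\{1,\dots,k\}$, so $\f(v^\star)=\emptyset$; each remaining vertex $v$ loses exactly one incident color (the dropped one) and so $\f(v)=\{c_v\}$, a singleton, with the $c_v$'s pairwise distinct. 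All $k+1$ free-color sets are therefore distinct, as required.

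The only subtle point in the argument is the bijectivity of $c\mapsto m(c)$, which is exactly where the parity of $k$ enters (invertibility of~$2$ modulo $k+1$); everything else is straightforward counting. Hence there is no real obstacle beyond correctly bookkeeping the effect on $\pi(v)$ and $\f(v)$ when the last color class is dropped.
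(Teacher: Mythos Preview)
Your proof is correct and follows essentially the same approach as the paper: bound $c(K_{k+1})$ above by $k\cdot k/2$ via matching sizes, then take a proper $(k+1)$-edge-coloring of $K_{k+1}$ in which each vertex misses a distinct color and drop one color class. The only cosmetic difference is that the paper obtains this $(k+1)$-coloring by restricting a $(k+1)$-coloring of $K_{k+2}$ (from Lemma~\ref{lem:odd-clique}) to $K_{k+1}$, whereas you write down the round-robin coloring on $\mathbb{Z}_{k+1}$ explicitly; these are the same construction.
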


\begin{proof}
Since every color class covers at most $\floor{(k+1)/2}=k/2$ edges, we have $c(K_{k+1})\le k^2/2$.

Now we show that $k^2/2$ edges of $K_{k+1}$ can be colored with $k$ colors.
Begin by a $(k+1)$-edge-coloring of $K_{k+2}$, which exists by Lemma~\ref{lem:odd-clique}. Remove one vertex to get a $(k+1)$-colored $K_{k+1}$. 
Uncolor the edges colored with the color $k+1$. There are at most $k/2$ of them, so the the number of colored edges is at least ${k+1\choose 2} - k/2 = k^2/2$.
The coloring satisfies the desired property because in the $(k+1)$-coloring of $K_{k+2}$ every vertex, including the removed one, is incident with all $k+1$ colors.
\end{proof}

Let $\G^\Delta_d$ be the family of all simple graphs which (i) have at least one edge, (ii) are of maximum degree at most $\Delta$ and (iii) such that any subset of vertices of size $(\Delta+1)$ induces a subgraph with at most ${\Delta+1\choose 2} - d$ edges.

\begin{lemma}
\label{lem:K_(k+1)-e}
Assume $\Delta\ge 4$ and $\Delta$ is even.
If for every graph $G\in\G^\Delta_2$ we have $\gamma_\Delta(G)\ge \alpha$ for some constant $\alpha\in[0,1]$, then
for every graph $G\in\G^\Delta_1$ we have $\gamma_\Delta(G)\ge \min\{\alpha,\frac{\Delta^2}{\Delta^2+\Delta-2}\}$. 
\end{lemma}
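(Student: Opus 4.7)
Set $\beta = \min\{\alpha, \tfrac{\Delta^2}{\Delta^2+\Delta-2}\}$ and argue by induction on $|E(G)|$. The base case is $G \in \G^\Delta_2$, where the hypothesis directly yields $\gamma_\Delta(G) \ge \alpha \ge \beta$. For the inductive step pick $G \in \G^\Delta_1 \setminus \G^\Delta_2$: some $(\Delta+1)$-subset $S \subseteq V(G)$ with $|E(G[S])| = \binom{\Delta+1}{2} - 1$ makes $G[S] \cong K_{\Delta+1} - xy$ for a unique non-edge $xy$. Since $\Delta(G) \le \Delta$ and each vertex of $S \setminus \{x, y\}$ already has $\Delta$ neighbors inside $S$, no such vertex has an external neighbor; only $x$ and $y$ may carry at most one external edge each, $e_x = xw_x$ and $e_y = yw_y$.

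The idea is to peel off $G[S]$. Let $G' = G[V(G) \setminus S]$; it lies in $\G^\Delta_1$ as an induced subgraph, and (unless $V(G) = S$, in which case $G = K_{\Delta+1} - xy$ and the bound $\tfrac{\Delta^2}{\Delta^2+\Delta-2}$ is immediate from Lemma~\ref{lem:even-clique}) it has strictly fewer edges than $G$. By induction $G'$ admits a partial $\Delta$-coloring $\pi'$ with at least $\beta|E(G')|$ colored edges. Lemma~\ref{lem:even-clique}, applied to $K_{\Delta+1}$ and combined with a suitable color permutation, furnishes a partial $\Delta$-coloring of $G[S]$ with $\Delta^2/2$ colored edges in which any prescribed distinct pair $(c_x, c_y)$ is realized as the free colors $\f(x) = \{c_x\}$, $\f(y) = \{c_y\}$. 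We will pick $c_x$ (resp.\ $c_y$) to be a color that is free at $w_x$ (resp.\ $w_y$) in $\pi'$ and then color $e_x$ with $c_x$ and $e_y$ with $c_y$. Because $\deg_{G'}(w_x), \deg_{G'}(w_y) \le \Delta - 1$, and $\deg_{G'}(w) \le \Delta - 2$ when $w_x = w_y = w$, at least one free color always exists at each site; this is enough whenever the required colors can be taken distinct. Letting $r \in \{0, 1, 2\}$ count the external edges, the combined coloring has at least $\beta|E(G')| + \tfrac{\Delta^2}{2} + r$ colored edges, and a short algebraic check using $\beta \le \tfrac{\Delta^2}{\Delta^2+\Delta-2}$ shows this is at least $\beta|E(G)|$.

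The hard part will be the one residual configuration in which both $e_x$ and $e_y$ are present with $w_x \ne w_y$, and each of $w_x, w_y$ has exactly one free color in $\pi'$ with those two free colors coinciding, say at the value $c$. Then the ``natural'' choice $c_x = c_y = c$ is forbidden by Lemma~\ref{lem:even-clique}, and the counting allows no slack: losing even one colored edge would violate the target once $\beta$ equals $\tfrac{\Delta^2}{\Delta^2+\Delta-2}$. The plan is to repair $\pi'$ by a Kempe chain swap: if for some $c' \ne c$ the $(c, c')$-alternating path starting at $w_x$ inside $G'$ does not terminate at $w_y$, then swapping along it changes $w_x$'s unique free color to $c'$, leaves $w_y$'s free color intact, and preserves the colored-edge count, so the choice $c_x = c',\ c_y = c$ becomes available. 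Showing that such a $c'$ always exists---or, failing that, that a short sequence of successive swaps achieves the same separation---is the main technical obstacle: the worst case forces $w_x$ and $w_y$ to be joined by alternating $(c, c')$-paths simultaneously for every $c' \ne c$, a very rigid pattern which I expect to break either by iterating the swap or by exploiting the $\G^\Delta_1$-constraint on the neighborhoods of $w_x$ and $w_y$. Carrying out this alternating-path bookkeeping cleanly is where the bulk of the technical effort would lie.
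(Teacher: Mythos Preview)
Your overall plan—induct, peel off a $K_{\Delta+1}$-minus-an-edge block $S$, color $G[S]$ via Lemma~\ref{lem:even-clique} with prescribed free singletons at the two deficient vertices, and glue—matches the paper's strategy, and your treatment of $r\in\{0,1\}$ and of $r=2$ with $w_x=w_y$ is fine. The gap is exactly where you flag it: the configuration $r=2$, $w_x\ne w_y$, $\overline\pi(w_x)=\overline\pi(w_y)=\{c\}$. Your single-swap hope fails in general (take $G'$ to be three internally disjoint length-$3$ paths from $w_x$ to $w_y$, middle edges coloured $c$, outer edges coloured $1,2,3$ respectively; then every $(c,c')$-chain from $w_x$ terminates at $w_y$), and the iterated-swap plan, while it happens to work on that example, has no termination argument. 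The $\G^\Delta_1$ constraint is of no evident help either: the obstruction is a family of edge-disjoint $w_x$--$w_y$ paths, which forces no dense $(\Delta+1)$-set. Since the arithmetic leaves zero slack here (dropping even one of $e_x,e_y$ already falls below $\beta$, as you can check), this case cannot be finessed by sacrificing an edge.

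The paper sidesteps the whole alternating-path issue with a single device you did not consider: instead of setting $G'=G-S$, it sets $G'=(G-S)+q+\{qw_x,\,qw_y\}$ for a fresh vertex $q$ (after first reducing away cutvertices, which guarantees $w_x\ne w_y$ so that $G'$ stays simple). One checks that $G'\in\G^\Delta_1$ and $|V(G')|<|V(G)|$, so induction applies. Now the auxiliary edges $qw_x$ and $qw_y$ share the endpoint $q$, hence in the inductive colouring they receive \emph{distinct} colours (or at most one is uncoloured). Transferring these colours to $xw_x$ and $yw_y$ produces, for free, the distinct pair $(c_x,c_y)$ you were trying to manufacture by Kempe swaps; the edge count then balances exactly as in your $r=2$ computation. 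So the missing idea is not more alternating-path bookkeeping but this gadget replacement, which makes the hard case disappear.
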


\begin{proof}
Let $G$ be an arbitrary graph from $G\in\G^\Delta_1$. 
We use induction on $|V(G)|$.
For the base case when $|V(G)|=2$, i.e.\ $G$ consists of a single edge, $\gamma_\Delta(G)=1$ so the claim follows.
Let $|V(G)|>3$. 
We can assume that $V(G)$ contains a subset $S$ of size $(\Delta+1)$ such that $|G[S]| = {\Delta+1 \choose 2}-1$ for otherwise the claim follows from the assumed property of $\G^\Delta_2$.
If there are no edges leaving $S$, then we just color $G-S$ inductively and we color $\Delta^2/2$ edges of $G[S]$ according to Lemma~\ref{lem:even-clique}.
Then $\gamma_\Delta(G)\ge\min\{\gamma_\Delta(G-S),(\Delta^2/2)/({\Delta+1\choose 2}-1)\}=\min\{\gamma_\Delta(G-S),\frac{\Delta^2}{\Delta^2+\Delta-2}\}\ge \min\{\alpha,\frac{\Delta^2}{\Delta^2+\Delta-2}\}$.
We can also assume that $G$ has no cutvertex for otherwise it is easy to get the claim from the induction hypothesis.
It follows that there are exactly two edges leaving $S$, say $xx'$ and $yy'$, with $x,y\in S$, and $x,x',y,y'$  distinct.
Then we remove $S$ and add a new vertex $q$ and two new edges $x'q$, $y'q$. Denote the resulting graph by $G'$.
Find the partial coloring of $G'$ corresponding to the largest $\Delta$-colorable subgraph of $G'$.
Then in the partially colored $G'$ we remove $q$ and put back the set $S$ with incident edges.
Color $xx'$ and $yy'$ with the colors of $x'q$ and $y'q$, respectively (and if one of the edges $x'q$, $y'q$ is uncolored, then the corresponding edge is also uncolored; note that $x'q$ and $y'q$ do not get the same color).
By Lemma~\ref{lem:even-clique} we can color $\Delta^2/2$ edges of $G[S]$ so that the edges of $G[S]$ incident with $x$ do not get the color of $xx'$ and the edges of $G[S]$ incident with $y$ do not get the color of $yy'$. Then again $\gamma_\Delta(G)\ge\min\{\gamma_\Delta(G'),(\Delta^2/2)/({\Delta+1\choose 2}-1)\}\ge \min\{\alpha,\frac{\Delta^2}{\Delta^2+\Delta-2}\}$.
\end{proof}

\begin{lemma}
\label{lem:bound-4-colors-2-edges}
Let $\Delta=4$ and let $Q$ be a two-edge free component of $(G,\pi)$.
If $G\in \G^\Delta_2$ then $\frac{\ch(A(Q))}{\ch(A(Q))+|E(Q)|} \ge \frac{5}{6}$. 
\end{lemma}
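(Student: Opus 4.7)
The plan is to combine the general charge bound of Corollary~\ref{cor:bound-full-even-component} with the forbidden-subgraph assumption $G\in\G^4_2$. The target inequality $\frac{\ch(A(Q))}{\ch(A(Q))+2}\ge\frac{5}{6}$ is equivalent to $\ch(A(Q))\ge 10$, so everything reduces to lower-bounding $\ch(A(Q))$ by $10$.

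First I would establish the size of $A(Q)$. Since $G$ is simple and $|E(Q)|=2$, the component $Q$ must be a path on three vertices, so $|V(Q)|=3$. By Lemma~\ref{lem:A_1}(ii) we have $|A_1(Q)|=|E(Q)|=2$, and since vertices in $A_1(Q)$ are ends of full fan edges they have no free colors and hence lie outside $V(Q)$. Therefore $|A(Q)|=|V(Q)|+|A_1(Q)|=5=\Delta+1$.

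Next I would apply Corollary~\ref{cor:num-uncolored-0} to get $|\f(Q)|\ge 2|E(Q)|=4=\Delta$, hence $|\f(Q)|=\Delta$. This places us squarely in the scope of Corollary~\ref{cor:bound-full-even-component}, which yields
\[
\ch(A(Q))\;\ge\;\Delta|A(Q)|-|E(G[A(Q)])|-|E(Q)|\;=\;4\cdot 5-|E(G[A(Q)])|-2\;=\;18-|E(G[A(Q)])|.
\]

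The final step, and the only place the hypothesis $G\in\G^4_2$ is used, is the bound $|E(G[A(Q)])|\le 8$. Since $|A(Q)|=\Delta+1=5$, the definition of $\G^\Delta_2$ gives $|E(G[A(Q)])|\le\binom{5}{2}-2=8$. Substituting, $\ch(A(Q))\ge 18-8=10$, which gives $\frac{\ch(A(Q))}{\ch(A(Q))+|E(Q)|}\ge\frac{10}{12}=\frac{5}{6}$, as required. There is no genuine obstacle here: the lemma is essentially a packaging step that turns the structural results of the preceding section, together with the $K_5$-avoidance assumption captured by $\G^4_2$, into the desired charge bound.
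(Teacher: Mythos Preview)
Your proof is correct and follows exactly the same approach as the paper's: use Lemma~\ref{lem:A_1} to get $|A(Q)|=5$, apply Corollary~\ref{cor:bound-full-even-component}, and bound $|E(G[A(Q)])|\le\binom{5}{2}-2$ via the hypothesis $G\in\G^4_2$ to obtain $\ch(A(Q))\ge 10$. The paper's version is just more terse, omitting the explicit justification that $|V(Q)|=3$ and that $A_1(Q)\cap V(Q)=\emptyset$.
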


\begin{proof}
By Lemma~\ref{lem:A_1}, $|A_1(Q)|= 2$. 
By Corollary~\ref{cor:bound-full-even-component} we have $\ch(A(Q))\ge 4\cdot 5 - ({5 \choose 2} -2) - 2 = 10$ and we get $\ch(A(Q)) / (\ch(A(Q))+|E(Q)|) \ge \frac{5}{6}$, as required.
\end{proof}

\begin{corollary}
\label{cor:main-4}
Every connected simple graph $G$ of maximum degree $4$ has a $4$-edge-colorable subgraph with at least $\frac{5}{6}|E|$ edges, unless $G = K_5$.
\end{corollary}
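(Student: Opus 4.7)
The plan is to combine the machinery developed in Sections~\ref{sec:structure}--\ref{sec:struct-psi0} with the already-established bounds on $\ch(A(Q))/(\ch(A(Q))+|E(Q)|)$ for one- and two-edge free components, and then use Lemma~\ref{lem:K_(k+1)-e} to pass from graphs in $\G^4_2$ to graphs in $\G^4_1$, which turns out to exclude only $K_5$.

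First I would compute a partial $4$-edge-coloring $\pi$ of $G$ that maximizes the potential $\Psi$; by the discussion in Section~\ref{sec:structure} this can be done in polynomial time by starting from the empty coloring and repeatedly applying the improvements described in the proofs of the structural lemmas (each step strictly increases $\Psi$, which is bounded polynomially). If $\nfc(G,\pi)=\emptyset$ then every edge is colored and we are done. Otherwise, by Corollary~\ref{cor:num-uncolored}, every nontrivial free component $Q$ satisfies $|E(Q)|\in\{1,2\}$.

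Next, I would establish the result under the extra hypothesis $G\in\G^4_2$. For this we apply the charge-distribution scheme of Section~5.4 and use the bound
\[ \gamma_4(G)\ \ge\ \min_{Q\in\nfc(G,\pi)}\frac{\ch(A(Q))}{\ch(A(Q))+|E(Q)|}. \]
For a one-edge component $Q$, Corollary~\ref{cor:bound-1-edge} gives a ratio at least $\tfrac{5}{6}$. For a two-edge component, Lemma~\ref{lem:bound-4-colors-2-edges} gives exactly $\tfrac{5}{6}$. Hence $\gamma_4(G)\ge\tfrac{5}{6}$ for every $G\in\G^4_2$.

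Finally, I would remove the assumption $G\in\G^4_2$. Since $\Delta=4$ is even, Lemma~\ref{lem:K_(k+1)-e} applies with $\alpha=\tfrac{5}{6}$, yielding $\gamma_4(G)\ge\min\{\tfrac{5}{6},\tfrac{16}{18}\}=\tfrac{5}{6}$ for every $G\in\G^4_1$. It remains to check which connected simple graphs of maximum degree $4$ fail to lie in $\G^4_1$: these are exactly the graphs containing five vertices inducing $\binom{5}{2}=10$ edges, i.e.\ a $K_5$; since the five vertices of this $K_5$ are already of degree $4$ and $G$ is connected, such a $G$ must equal $K_5$. This gives the claim for every $G\ne K_5$. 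The one place that requires care, and which I view as the main obstacle, is to verify that the charge-moving and rotation arguments behind Corollary~\ref{cor:bound-1-edge} and Lemma~\ref{lem:bound-4-colors-2-edges} can be carried out algorithmically in polynomial time; this follows because every situation in which the stated inequalities could fail is precisely the one in which the proofs of the structural lemmas produce a new coloring with strictly larger~$\Psi$.
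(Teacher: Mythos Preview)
Your proposal is correct and follows exactly the paper's own argument: bound the free components by Corollary~\ref{cor:num-uncolored}, apply Corollary~\ref{cor:bound-1-edge} and Lemma~\ref{lem:bound-4-colors-2-edges} to get $\gamma_4(G)\ge\tfrac56$ for $G\in\G^4_2$, then invoke Lemma~\ref{lem:K_(k+1)-e} to pass to $\G^4_1$ and observe that the only connected graph of maximum degree~4 outside $\G^4_1$ is $K_5$. Your write-up merely adds some extra detail on the algorithmic side and on the final identification of $K_5$, but the route is the same.
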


\begin{proof}
 By Corollary~\ref{cor:num-uncolored} every free component of a partially 4-edge-colored graph which maximizes the potential $\Psi$ has at most two edges.
 Hence, by Corollary~\ref{cor:bound-1-edge} and Lemma~\ref{lem:bound-4-colors-2-edges} for every $G\in \G^4_2$ we have $\gamma_4(G)\ge\frac{5}6$.
 By Lemma~\ref{lem:K_(k+1)-e} the same bound holds also for graphs in $\G^4_1$, which is equivalent to our claim.
\end{proof}

\begin{lemma}
\label{lem:bound-almost-full-component}
Assume $\Delta\in \{5,7\}$ and let $Q$ be a free component such that $|E(Q)|=(\Delta-1)/2$, $|\f(Q)| = \Delta-1$. 
Let $D$ be the set of colored edges incident with $A(Q)$.
Then, the charge sent from $D$ to $A(Q)$ is at least $|D|-\frac{|A(Q)|-2}2$.
 \end{lemma}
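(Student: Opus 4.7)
The plan is to reduce the statement to a matching inequality $|D_1|\le |A(Q)|-2$ for a specific subset $D_1\subseteq D$ of ``deficit'' edges, and then extract the needed slack from the concrete shape of $Q$.

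First I would classify the edges of $D$ by their discharging behaviour. By Lemma~\ref{lem:incident}(i) and rules (R1)--(R2), every edge $e\in D$ sends at least $\tfrac12$ to $A(Q)$, and $e$ sends strictly less than $1$ only when both endpoints lie in $A$ with exactly one of them in $A(Q)$; let $D_1\subseteq D$ be this set. Under (R1) the deficit is exactly $\tfrac12$, and under (R2) --- which can apply with $Q$ on the ``$|E|\ge 2$'' side because $|E(Q)|=(\Delta-1)/2\ge 2$ --- the deficit is the smaller $\epsilon_\Delta$. In either case
\[
\ch(D\to A(Q))\;\ge\;|D|-\tfrac12|D_1|,
\]
so it suffices to prove $|D_1|\le |A(Q)|-2$.

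Next I would pin $D_1$ inside a single color class. Since $|\f(Q)|=\Delta-1$, there is a unique color $c\in\{1,\ldots,\Delta\}\setminus\f(Q)$. Extending the proof of Lemma~\ref{lem:QR-edges-matching} (which only requires reducing an $S(Q)S(P)$-edge to a $V(Q)V(P)$-edge by first rotating the stable fans $F(u)$ for $u\in A_1(Q)\cup A_1(P)$ at the relevant endpoints --- valid because such rotations preserve $|\f(Q)|\ge\Delta-1$ by Proposition~\ref{prop:free-colors}) shows that every edge of $D_1$ is colored $c$. Because $c\notin\f(v)$ for every $v\in V(Q)$ and every $v\in A_1(Q)$ is full, each vertex of $A(Q)$ is incident with exactly one $c$-edge. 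Hence the set $M$ of $c$-edges incident with $A(Q)$ forms a matching saturating $A(Q)$; writing $m_j$ for the number of edges of $M$ with $j$ endpoints in $A(Q)$, we have $2m_2+m_1=|A(Q)|$ and $|D_1|\le m_1=|A(Q)|-2m_2$.

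The remaining task is to produce two units of slack, i.e.\ to show $2m_2+s\ge 2$ where $s$ counts the edges of $M$ whose non-$A(Q)$ endpoint is outside $A$ entirely. The identity $|E(Q)|=(\Delta-1)/2$ forces $Q$ to be a $3$-vertex path (when $\Delta=5$) or a $4$-vertex tree (a path or $K_{1,3}$) or a triangle (when $\Delta=7$). In every case, the construction of Lemma~\ref{lem:A_1} attaches each vertex of $A_1(Q)$ to $V(Q)$ through a fan edge whose color lies in $\f(Q)\setminus\{c\}$; combined with the uncolored edges of $Q$, this exhausts enough of the edge-budget at a ``central'' vertex $x^*$ of $V(Q)$ (the middle vertex of a path, the degree-$3$ vertex of $K_{1,3}$, or the fan-anchor vertex in the triangle) that $x^*$'s $c$-edge is forced either to be internal to $A(Q)$ --- giving $m_2\ge 1$ and finishing --- or to leave $A(Q)$, giving one unit. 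For the second unit I would argue by contradiction: if $|D_1|\ge|A(Q)|-1$, then summing over vertices of $A(Q)$ produces, for a common other free component $R$, both two $V(Q)S(R)$-edges and two $S(Q)V(R)$-edges along with a $V(Q)V(R)$-edge, contradicting Lemma~\ref{lem:QR-edges}. Any configuration that resists this direct counting is ruled out by rotating a stable fan $F(z)$ for $z\in A_1(Q)$ and applying Corollary~\ref{cor:C1} to the new free component $Q'$ (whose free colors still satisfy $|\f(Q')|\ge \Delta-1$ by Proposition~\ref{prop:free-colors}), contradicting $\Psi$-maximality.

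The main obstacle is precisely this final step. The case analysis by shape of $Q$ is short but delicate: one must verify for each shape and each possible distribution of $c$-edges across $V(Q)\cup A_1(Q)$ that the hypothesis of Lemma~\ref{lem:QR-edges} can be assembled, and that the auxiliary fan rotations used along the way neither spoil the matching structure of $M$ nor drop $|\f(Q)|$ below $\Delta-1$ before the contradiction is reached.
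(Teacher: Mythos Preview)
Your setup is exactly the paper's: reduce to bounding the ``bad'' edges $D_1$ by $|A(Q)|-2$, observe via Lemma~\ref{lem:QR-edges-matching} that every bad edge is colored with the unique missing color $c$, and record the matching identity $|D_1|=|A(Q)|-2m_2-s$. Up to here the two arguments coincide.

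The gap is in how you extract the two units of slack. Your ``central vertex'' step does not do what you claim: for $\Delta=5$ with $Q=pqr$, the vertex $q$ has neighbors $p,r,v_p,v_r$ and exactly one further neighbor $w$ along its $c$-edge, but nothing forces $w\in A(Q)$ or $w\notin A$ --- $qw$ can perfectly well be a bad edge landing in some $A(P)$. So this step yields no slack. More seriously, your contradiction argument presupposes that the bad edges all hit a \emph{common} free component $R$. That is not automatic: a priori the $c$-edges leaving $A(Q)$ could spread over several different components, and then neither Lemma~\ref{lem:QR-edges} nor Corollary~\ref{cor:C1} applies directly. Establishing this ``single target'' property is precisely the paper's Claim~1, and it requires its own argument (a case split on whether two bad edges from $V(Q)$ already hit distinct components, plus a rotation to reduce the $A_1(Q)$ case to the $V(Q)$ case). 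Without it, your appeal to Lemma~\ref{lem:QR-edges} is circular: you cannot assemble the hypotheses ``two $V(Q)S(R)$-edges and two $S(Q)V(R)$-edges'' for a fixed $R$ from a mere count $|D_1|\ge|A(Q)|-1$.

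Once Claim~1 is in place, the paper finishes differently from your sketch: it shows (Claim~2) that at most one bad edge meets $V(P)$, and since $|A_1(P)|\le(\Delta-1)/2$ by Lemma~\ref{lem:A_1}, the matching property caps the total at $(\Delta+1)/2<|A(Q)|-1$. Your intended route through Lemma~\ref{lem:QR-edges} is morally the same engine (indeed Claim~2 is proved via Lemma~\ref{lem:QR-edges}), but you still need Claim~1 first, and you should drop the central-vertex heuristic, which contributes nothing.
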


\begin{proof}
%
Call an edge $e\in D$ {\em bad} if it sends less than 1 to $A(Q)$.
Note that every bad edge sends either $\frac{1}2$ or $1-\epsilon_\Delta\ge\frac{1}2$ to $A(Q)$.
Hence in what follows we assume that there are at least $|A(Q)|-1$ bad edges, for otherwise 
we get the claim immediately.

Clearly, every bad edge has only one endpoint in $A(Q)$ and the other endpoint is in $A(P)$ for some $P\ne Q$. We prove the following two auxiliary claims:

\noindent {\bf Claim 1:} There is a free component $P\ne Q$ such that every bad edge has an endpoint in $A(P)$.

\noindent {\em Proof of Claim 1.} 
The proof is by contradiction, i.e.\ we assume that there are two edges $uv$ and $xy$ such that $u,x \in A(Q)$ and $v \in A(P)$ and $y \in A(R)$ for some distinct free components $P,R\ne Q$. We consider two cases.

\mycase{A:} $u,x\in V(Q)$.
If $v \in A_1$ then we rotate $F(v)$ at $v$.
Similarly, if $y \in A_1$ then we rotate $F(y)$ at $y$.
Note that if both $v\in A_1$ and $y\in A_1$ then the fans $F(v)$ and $F(y)$ are distinct by Lemma~\ref{lem:usually-fans-do-not-meet}.
It follows that if both $v\in A_1$ and $y\in A_1$ then rotating $F(v)$ does not destroy $F(y)$ and we can indeed perform both rotations.
As a result, $v,y\in A_0$, which is a contradiction with Corollary~\ref{cor:C1}.

\mycase{B:} case A does not apply. 
However, since there are at least $|A(Q)|-1$ bad edges, and each vertex of $A(Q)$ is incident with at most one of them by Lemma~\ref{lem:QR-edges-matching}, we infer that at most one vertex of $V(Q)$ is not incident with a bad edge. 
Since Case A does not apply, for some free component $P\ne Q$ each bad edge incident with $V(Q)$ has the other endpoint in $A(P)$. 
If Claim 1 does not hold, there is a bad edge $uv$, $u\in A_1(Q)$ and $v\in A(R)$, for some $R\ne Q,P$. 
Then we rotate $F(u)$ at $u$ and the component $Q$ is replaced by a new component $Q'$ with $u\in V(Q')$. 
Since $|V(Q)|\ge 3$, $|V(Q)\setminus V(Q')|\le 1$ and at most one vertex of $V(Q)$ is not incident with a bad edge it means that at least one vertex of $V(Q)\cap V(Q')$ is incident with a bad edge in the new colored graph, and every such bad edge has an endpoint in $A(P)$. 
However, then we proceed as in Case A (note that by Lemma~\ref{lem:usually-fans-do-not-meet} rotating the fan $F(u)$ does not affect the fans $F(v)$ and $F(y)$). 
This finishes the proof of Claim 1.

Let $P$ be the free component from Claim 1.

\noindent {\bf Claim 2:} There is at most one bad edge incident both with $A(Q)$ and $V(P)$.

\noindent {\em Proof of Claim 2.} 
Assume there are two such edges, say $q_1p_1$ and $q_2p_2$ with $q_1,q_2\in A(Q)$. 
Since there are at least $|A(Q)|-1$ bad edges and $|V(Q)|\ge 3$, there are also at least two bad edges incident both with $V(Q)$ and with $A(P)$, say $q_3p_3$ and $q_4p_4$ with $q_3,q_4\in V(Q)$. 
By Lemma~\ref{lem:QR-edges}, $q_1,q_2\in A_1(Q)$ and $p_3,p_4\in A_1(P)$.
Then we rotate $F(q_1)$ at $q_1$ and let $Q'$ be the component that replaces $Q$. 
Note that at least one of $q_3$, $q_4$ is in $V(Q')$ and $|\f(Q')|\ge\Delta-1$ by Proposition~\ref{prop:free-colors}.
By symmetry assume $q_3\in V(Q')$.
First assume $Q$ is a tree. Then by Lemma~\ref{lem:A_1}$(i)$ and Lemma~\ref{lem:cycles} rotating $F(q_1)$ does not affect $F(q_2)$ so in particular $q_2\in A_1(Q')$. We see that $q_1p_1$, $q_2p_2$ and $q_3p_3$ are $S(Q')S(P)$-edges, $q_1,q_3\in V(Q')$ and $p_1,p_2\in V(P)$. This is a contradiction with Lemma~\ref{lem:QR-edges}.
Now assume $Q$ is not a tree. By Corollary~\ref{cor:num-uncolored} we have $|E(Q)|\le\floor{\Delta/2}$, so $Q$ is a 3-cycle.
Then $q_3,q_4\in V(Q')$. 
If after rotating $F(p_3)$ at $p_3$ the component $P'$ that replaces $P$ contains $p_1$, we do rotate $F(p_3)$ at $p_3$. As a result, we get two $V(Q)V(P')$-edges, namely $q_1p_1$ and $q_3p_3$; a contradiction with Corollary~\ref{cor:C1}.
Hence we can assume that after rotating $F(p_3)$ at $p_3$ the component that replaces $P$ does not contain $p_1$.
Hence $P$ is a tree and $F(p_3)$ is a $(v,p_1)$-fan for some $v\in V(P)$.
By Lemma~\ref{lem:A_1}$(i)$ we see that $F(p_4)$ is not a $(w,p_1)$-fan for any $w\in V(P)$.
It follows that after rotating $F(p_4)$ at $p_4$ the component $P'$ that replaces $P$ contains $p_1$. We get a contradiction with Corollary~\ref{cor:C1} as before. This finishes the proof of Claim 2.

The value of $\Delta$ is odd, so by Corollary~\ref{cor:num-uncolored} we have $|E(P)|\le (\Delta-1)/2$. 
Hence by Lemma~\ref{lem:A_1} we have $|A_1(P)|\le (\Delta-1)/2$, so by Lemma~\ref{lem:QR-edges-matching} there are at most $(\Delta-1)/2$ bad edges not incident with $V(P)$.
This, together with Claim 2 implies that the total number of bad edges is at most $(\Delta+1)/2$, which is at most $3$ when $\Delta=5$ and at most $4$ when $\Delta=7$.
This is a contradiction with our assumption that there are at least $|A(Q)|-1$ bad edges.
Indeed, if $\Delta=5$ then by Lemma~\ref{lem:A_1} we have $|A(Q)|-1=4$, and if $\Delta=7$ then by Lemma~\ref{lem:A_1} we have $|A(Q)|-1= 5$.
\end{proof}

\begin{lemma}
\label{lem:bound-5-colors-2-edges}
 Let $\Delta=5$ and let $Q$ be a 2-edge free component.
 Then, $\frac{\ch(A(Q))}{\ch(A(Q))+|E(Q)|} \ge \frac{23}{27}$. 
\end{lemma}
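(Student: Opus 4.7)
The plan is to carefully compute $\ch(A(Q))$ using the structural machinery already developed. First I would observe that, since $Q$ is a connected component of $\bot(G,\pi)$ with two edges in a simple graph, $|V(Q)| = 3$ (a path). By Corollary~\ref{cor:num-uncolored-0}, $|\f(Q)| \ge 2|E(Q)| = 4$, and since $|\f(Q)| \le \Delta = 5$, we have $|\f(Q)| \in \{4,5\}$. By Lemma~\ref{lem:A_1}$(ii)$, $|A_1(Q)| = 2$, so $|A(Q)| = 5$; since vertices in $A_1(Q)$ are ends of \emph{full} fan edges, each has $\f = \emptyset$ and hence $G$-degree exactly $5$. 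Lemma~\ref{lem:degrees-full-component} pins down the $G$-degrees of vertices of $V(Q)$: when $|\f(Q)| = 5$ exactly one vertex has degree $4$ (the rest have degree $5$), and when $|\f(Q)| = 4$ all three have degree $5$. Consequently $\sum_{v\in A(Q)}\deg_G(v) = 24$ in the first case and $25$ in the second.

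Next I would count the colored edges incident with $A(Q)$. Since $G$ is simple, $|E(G[A(Q)])| \le \binom{5}{2} = 10$, so the total number of edges incident with $A(Q)$ equals $\sum_{v\in A(Q)}\deg_G(v) - |E(G[A(Q)])|$. The only uncolored edges incident with $A(Q)$ are the two edges of $E(Q)$: vertices of $A_1(Q)$ are not in $V_\bot$, and any uncolored edge from $V(Q)$ would by definition lie inside the free component $Q$. Hence, writing $|D|$ for the number of colored edges incident with $A(Q)$, we have $|D| \ge 22 - |E(G[A(Q)])| \ge 12$ in the first case and $|D| \ge 23 - |E(G[A(Q)])| \ge 13$ in the second.

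For the case $|\f(Q)| = 5$, Proposition~\ref{prop:bound-full-component} gives $\ch(A(Q)) \ge |D| \ge 12$, which yields $\ch(A(Q))/(\ch(A(Q))+2) \ge 12/14 = 6/7 > 23/27$. For the case $|\f(Q)| = 4$, the hypotheses $|E(Q)| = (\Delta-1)/2 = 2$ and $|\f(Q)| = \Delta - 1 = 4$ are exactly those of Lemma~\ref{lem:bound-almost-full-component}, which therefore delivers
\[
\ch(A(Q)) \ge |D| - \tfrac{|A(Q)|-2}{2} = |D| - \tfrac{3}{2} \ge 21.5 - |E(G[A(Q)])| \ge 11.5,
\]
so $\ch(A(Q))/(\ch(A(Q))+2) \ge 11.5/13.5 = 23/27$. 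Combining both cases proves the lemma.

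The main technical step is the case $|\f(Q)| = 4$: there we cannot conclude that every incident colored edge sends a full unit of charge, so we rely on Lemma~\ref{lem:bound-almost-full-component} to absorb the loss from the ``bad'' $S(Q)S(R)$-edges. The extra degree unit coming from the stronger conclusion of Lemma~\ref{lem:degrees-full-component} (all three vertices of $V(Q)$ have degree $5$) is exactly what compensates for the $3/2$ deduction, giving the tight ratio $23/27$.
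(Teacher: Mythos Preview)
Your proof is correct and follows essentially the same approach as the paper's: split into the cases $|\f(Q)|=5$ and $|\f(Q)|=4$, use Lemma~\ref{lem:degrees-full-component} to control the degree sum on $A(Q)$, bound $|E(G[A(Q)])|\le\binom{5}{2}$, and then apply Proposition~\ref{prop:bound-full-component} (respectively Lemma~\ref{lem:bound-almost-full-component}) to get $\ch(A(Q))\ge 12$ (respectively $\ge \tfrac{23}{2}$). Your explicit justification that the only uncolored edges incident with $A(Q)$ are the two edges of $E(Q)$ is a nice touch that the paper leaves implicit.
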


\begin{proof}
 We show that $\ch(Q) \ge \frac{23}2$, which implies the claim.
 
 By Corollary~\ref{cor:num-uncolored-0}, $|\f(Q)|\ge 4$ and by Lemma~\ref{lem:A_1}, $|A(Q)|= 5$.
 
 Assume $|\f(Q)|=5$.
 Then by Proposition~\ref{prop:bound-full-component} and Lemma~\ref{lem:degrees-full-component} , $\ch(A(Q)) \ge 5|A(Q)| - 1 - |E(G[A(Q)])| - |E(Q)|$.
 Since $|E(G[A(Q)])|\le {5 \choose 2}$ we get $\ch(A(Q))\ge 12$, as required.
 
 Finally assume $|\f(Q)|=4$. Let $D$ be the set of colored edges incident with $A(Q)$.
 By Lemma~\ref{lem:degrees-full-component}, all vertices of $A(Q)$ are of degree $\Delta$ in $G$.
 Hence, $|D|=5|A(Q)|-|E(G[A(Q)])|-2$ and by Lemma~\ref{lem:bound-almost-full-component} we get $\ch(A(Q))\ge \frac{9}2|A(Q)|-|E(G[A(Q)])|-1$.
 Since $|E(G[A(Q)])|\le {5 \choose 2}$ we get $\ch(A(Q))\ge \frac{23}2$, as required.
\end{proof}

\begin{corollary}
\label{cor:main-5}
Every simple graph $G$ of maximum degree $5$ has a $5$-edge-colorable subgraph with at least $\frac{23}{27}|E|$ edges.
\end{corollary}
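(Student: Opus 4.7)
The plan is to mirror the structure of the proof of Corollary~\ref{cor:main-4}, but for $\Delta=5$ the argument is simpler since no exceptional-graph correction is needed. I would start with a partial $5$-edge-coloring $\pi$ of $G$ that maximizes the potential $\Psi$; this can be obtained in polynomial time, as argued at the start of Section~3, since $\Psi$ can be increased only polynomially many times and each improvement is witnessed by the proofs of the structural lemmas. Once $\pi$ is $\Psi$-maximal, all results of Sections~\ref{sec:structure},~\ref{sec:struct-psi0} and the current subsection apply.

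By Corollary~\ref{cor:num-uncolored}, every free component $Q$ of $(G,\pi)$ satisfies $|E(Q)|\le\lfloor 5/2\rfloor=2$. Hence the free components split into two types: those with exactly one edge, for which Corollary~\ref{cor:bound-1-edge} yields
\[\frac{\ch(A(Q))}{\ch(A(Q))+|E(Q)|}\ge \tfrac{23}{27},\]
and those with exactly two edges, for which Lemma~\ref{lem:bound-5-colors-2-edges} gives the same lower bound $\tfrac{23}{27}$.

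Applying the disjointness-based proposition stated immediately after the definition of the discharging rules,
\[\gamma_{5}(G)\ge\min_{Q\in\nfc(G,\pi)}\frac{\ch(A(Q))}{\ch(A(Q))+|E(Q)|}\ge \tfrac{23}{27},\]
which is exactly the claimed bound. Note that, in contrast to the $\Delta=4$ case, we do not need an analogue of Lemma~\ref{lem:K_(k+1)-e} to handle a clique on $\Delta+1$ vertices: $\Delta=5$ is odd, so by Lemma~\ref{lem:odd-clique} the graph $K_6$ is already $5$-edge-colorable, and no exceptional graph arises. There is therefore no genuine obstacle in this step; the substance of the proof was absorbed into Corollary~\ref{cor:bound-1-edge} and Lemma~\ref{lem:bound-5-colors-2-edges}, and here one only needs to combine them.
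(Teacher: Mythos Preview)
Your proof is correct and follows essentially the same approach as the paper: invoke Corollary~\ref{cor:num-uncolored} to bound the size of free components by two, then apply Corollary~\ref{cor:bound-1-edge} and Lemma~\ref{lem:bound-5-colors-2-edges} to each case. Your additional remarks about why no exceptional-graph correction is needed for odd $\Delta$ are accurate and make explicit what the paper leaves implicit.
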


\begin{proof}
 By Corollary~\ref{cor:num-uncolored} every free component of a partially 5-edge-colored graph which maximizes the potential $\Psi$ has at most two edges.
 Hence, by Corollary~\ref{cor:bound-1-edge} and Lemma~\ref{lem:bound-5-colors-2-edges} the claim follows.
\end{proof}

\begin{lemma}
 \label{lem:full-fan-edges-0}
 For every free component $Q$, if $|E(Q)|\ge 2$ and $Q$ is a tree then the number of edges incident with $Q$ which are full edges of some stable fan is at least $\displaystyle\sum_{\substack{\ell \in V(Q)\\ \deg_Q(\ell)=1}}|\f(\ell)|$. 
\end{lemma}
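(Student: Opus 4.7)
The strategy is to produce, for each leaf $\ell$ of $Q$, a supply of at least $|\f(\ell)|$ full fan edges of a stable fan that are incident with $Q$, and then to argue that the supplies coming from different leaves are pairwise disjoint. Summing $|\f(\ell)|$ over all leaves will then yield the bound.

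First I would fix a leaf $\ell$ of $Q$ and let $x_\ell$ be its unique neighbor in $Q$. Since $Q$ is a tree with $|E(Q)|\ge 2$, removing $x_\ell\ell$ isolates $\ell$ while leaving $x_\ell$ inside a nontrivial subtree, so any maximal $(x_\ell,\ell)$-fan $F_\ell$ is stable. By Corollary~\ref{cor:num-full}, $F_\ell$ contains at least $|\f(\ell)|$ full edges; each is of the form $x_\ell v$ and hence incident with $V(Q)$, as required.

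The only genuinely delicate step is showing that the full edges contributed by $F_{\ell_1}$ and $F_{\ell_2}$ are distinct for two different leaves $\ell_1 \ne \ell_2$. The edges $x_{\ell_1}\ell_1$ and $x_{\ell_2}\ell_2$ of $Q$ must themselves be distinct, since a single edge with two leaf endpoints would force $Q$ to be that edge alone, contradicting $|E(Q)|\ge 2$. If $x_{\ell_1}=x_{\ell_2}$, the two fans share a center and Lemma~\ref{lem:fans-edge-disjoint} rules out a common edge. If $x_{\ell_1}\ne x_{\ell_2}$, I would invoke Lemma~\ref{lem:cycles}: the ends of $F_{\ell_1}$ and those of $F_{\ell_2}$ are disjoint. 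A putative common full edge $x_{\ell_1}v = x_{\ell_2}u$ then forces $v = x_{\ell_2}$ (and $u = x_{\ell_1}$), so $x_{\ell_2}$ would be a full end of $F_{\ell_1}$; but $x_{\ell_2}$ is incident to the uncolored edge $x_{\ell_2}\ell_2$ and therefore has a free color, a contradiction.

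The main obstacle is precisely this last disjointness check, because Lemma~\ref{lem:cycles} directly concerns ends of fans rather than full edges. The saving observation is that every vertex of $V(Q)$ retains a free color (being incident to an uncolored edge of $Q$), which prevents the ``crossed'' coincidence of full edges from occurring and lets the end-disjointness conclusion from Lemma~\ref{lem:cycles} do its job.
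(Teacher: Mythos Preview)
Your proof is correct and follows essentially the same approach as the paper: for each leaf $\ell$ take a maximal stable $(x_\ell,\ell)$-fan, use Corollary~\ref{cor:num-full} to get $|\f(\ell)|$ full edges, and invoke Lemma~\ref{lem:cycles} (together with Lemma~\ref{lem:fans-edge-disjoint} when the fan centers coincide) for disjointness. Your extra care in ruling out the ``crossed'' coincidence $x_{\ell_1}v=x_{\ell_2}u$ via the observation that $x_{\ell_2}\in V(Q)$ cannot be a full vertex is a detail the paper leaves implicit.
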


\begin{proof}
Consider an arbitrary leaf $\ell$ of $Q$ and let $x\ell$ be the edge of $Q$ incident with $\ell$.
Pick any maximal $(x,\ell)$-fan $F_{\ell}$. Since $\ell$ is a leaf $F_{\ell}$ is stable.
By Corollary~\ref{cor:num-full}, $F_{\ell}$ has at least $|\f(\ell)|$ full edges.
By Lemma~\ref{lem:cycles}, for two different leaves $\ell$ and $\ell'$ the ends of $F_{\ell}$ and $F_{\ell'}$ are disjoint (note that we can apply the lemma since $\ell$ and $\ell'$ are not the endpoints of the same edge). 
Hence the claim follows.
\end{proof}

\begin{lemma}
\label{lem:bound-incident}
 Let $Q$ be a 2-edge free component and assume that $\Delta\in\{6,7\}$.
 Then, the charge received by $A(Q)$ from the edges incident with $Q$ is at least $\frac{3}2\Delta+1-4\epsilon_\Delta$.
\end{lemma}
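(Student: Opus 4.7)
The plan is to exploit the fact that, as a $2$-edge free component in a simple graph, $Q$ must be a path $v_1v_2v_3$ with leaves $v_1,v_3$ and internal vertex $v_2$. Since $v_2$ has two incident uncolored edges, we get $|\pi(v_2)|\le\Delta-2$, hence $|\f(v_2)|\ge 2$. This single inequality drives the whole argument: the calculation below will end in a term $(\tfrac12-2\epsilon_\Delta)|\f(v_2)|$, and the bound $|\f(v_2)|\ge 2$ then produces exactly the required slack $1-4\epsilon_\Delta$.

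I would let $D$ denote the set of colored edges incident with $V(Q)$, and $m\in\{0,1\}$ indicate whether the chord $v_1v_3$ lies in $G$. Summing colored degrees at $V(Q)$ and correcting for the chord double-count yields $|D|=3\Delta-|\f(Q)|-m$. Within $D$ I distinguish three groups of edges that contribute strictly more than $\tfrac12$ to $\ch(A(Q))$: (a)~the chord $v_1v_3$ (if present), whose both endpoints lie in $A(Q)$ and which therefore sends a total of $1$; (b)~the full edges of maximal stable $(v_2,v_1)$- and $(v_2,v_3)$-fans, all incident with $v_2$: by Lemma~\ref{lem:full-fan-edges-0} there are at least $|\f(v_1)|+|\f(v_3)|$ of them, and each sends $1$ by Lemma~\ref{lem:incident}$(iii)$; and (c)~the full edges of maximal (non-stable) $(v_1,v_2)$- and $(v_3,v_2)$-fans, incident with $v_1$ respectively $v_3$: by Corollary~\ref{cor:num-full} each such fan has at least $|\f(v_2)|$ full edges, and by Lemma~\ref{lem:incident}$(ii)$ each such edge sends at least $1-\epsilon_\Delta$. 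The three groups are pairwise disjoint because the edges of (b) are incident with $v_2$ while those of (c) avoid $v_2$, and the two subfans within (b) (and within (c)) are edge-disjoint by Lemma~\ref{lem:fans-edge-disjoint}; finally, the chord cannot be a full edge since $|\f(v_1)|,|\f(v_3)|\ge 1$ (each leaf has one incident uncolored edge in $Q$).

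Lemma~\ref{lem:incident}$(i)$ gives at least $\tfrac12$ for each remaining edge of $D$, so putting everything together:
\[
\ch(A(Q))\ge m+(|\f(v_1)|+|\f(v_3)|)+2(1-\epsilon_\Delta)|\f(v_2)|+\tfrac{1}{2}\bigl(|D|-m-|\f(v_1)|-|\f(v_3)|-2|\f(v_2)|\bigr).
\]
Substituting $|D|=3\Delta-|\f(Q)|-m$ and $|\f(Q)|=|\f(v_1)|+|\f(v_2)|+|\f(v_3)|$ makes the terms in $|\f(v_1)|$, $|\f(v_3)|$ and $m$ cancel, and the expression collapses to $\tfrac{3\Delta}{2}+(\tfrac12-2\epsilon_\Delta)|\f(v_2)|$. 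Since $\tfrac12-2\epsilon_\Delta$ equals $\tfrac13$ for $\Delta=6$ and $\tfrac{2}{7}$ for $\Delta=7$ (both positive), plugging in $|\f(v_2)|\ge 2$ gives $\tfrac{3\Delta}{2}+1-4\epsilon_\Delta$, as claimed.

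The one bookkeeping point that must be verified is that the residue $|D|-m-|\f(v_1)|-|\f(v_3)|-2|\f(v_2)|$ is non-negative, i.e.\ that the edges in (a)--(c) really fit inside $D$. This reduces to $3\Delta\ge 2(|\f(v_1)|+|\f(v_3)|)+3|\f(v_2)|+2m$, which follows from $|\f(v_1)|+|\f(v_3)|\le\Delta-|\f(v_2)|$ (a consequence of $|\f(Q)|\le\Delta$ together with Lemma~\ref{lem:distinct-free}) combined with $|\f(v_2)|\le\Delta-2$ (since the leaves contribute $|\f(v_1)|,|\f(v_3)|\ge 1$) and $m\le 1$. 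Beyond this routine check, I do not anticipate further obstacles; the proof is a careful discharging computation resting on Lemmas~\ref{lem:incident}, \ref{lem:full-fan-edges-0}, \ref{lem:fans-edge-disjoint} and Corollary~\ref{cor:num-full}.
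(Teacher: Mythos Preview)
Your proof is correct and follows essentially the same approach as the paper: both arrive at the bound $\tfrac{3\Delta}{2}+(\tfrac12-2\epsilon_\Delta)|\f(v_2)|$ by giving every colored edge a baseline of $\tfrac12$, upgrading full fan edges to $1-\epsilon_\Delta$, and further upgrading the stable ones to $1$, then plugging in $|\f(v_2)|\ge 2$. One small correction: Lemma~\ref{lem:fans-edge-disjoint} only applies to two fans with the \emph{same} center, so it does not justify disjointness of the two subfans in your group~(c); however, your own remark that the chord $v_1v_3$ cannot be a full edge (since $|\f(v_1)|,|\f(v_3)|\ge 1$) already suffices, as the only edge a $(v_1,v_2)$-fan and a $(v_3,v_2)$-fan could share is $v_1v_3$.
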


\begin{proof}
 Similarly as in the proof of Lemma~\ref{lem:bound-incident-1-edge}, there are exactly $\sum_{v\in Q}(\Delta-\f(v))$ colored edges incident with $Q$ and each of them sends at least $\frac{1}{2}$ to $Q$ by Lemma~\ref{lem:incident}.

For every vertex $y\in V(Q)$, for every incident uncolored edge $xy \in E(Q)$ we choose a maximal $(x,y)$-fan and it has at least $|\f(y)|$ full edges by Corollary~\ref{cor:num-full}. It follows that there are at least $\sum_{v\in V(Q)}\deg_Q(v)|\f(v)|$ full fan edges incident with $V(Q)$, and by Lemma~\ref{lem:incident}  each of them sends at least $1-\epsilon_\Delta$ to $Q$,
 
 The component $Q$ is a 2-path, say $pqr$, where $|\f(p)|,|\f(r)|\ge 1$ , and $|\f(q)|\ge 2$. 
 By Lemma~\ref{lem:full-fan-edges-0} there are at least $|\f(p)|+|\f(r)|$ edges incident with $Q$ that are full edges of stable fans, and by Lemma~\ref{lem:incident} each of them sends $1$ to $Q$.
 
The charge sent from the edges incident with $Q$ to $A(Q)$ is at least 
 \begin{eqnarray*}
 &&\tfrac{1}{2}\sum_{v\in Q}(\Delta-\f(v)) + (\tfrac{1}{2}-\epsilon_\Delta)\sum_{v\in Q}\deg_Q(v)|\f(v)| + \epsilon_\Delta(|\f(p)|+|\f(r)|) = \\
 &&\frac{\Delta|V(Q)|}2 + \tfrac{1}{2}\underbrace{\sum_{v\in Q}|\f(v)|(\deg_Q(v)-1)}_{|\f(q)|} - \epsilon_\Delta\underbrace{\sum_{v\in Q}\deg_Q(v)|\f(v)|}_{|\f(p)|+2|\f(q)|+|\f(r)|} + \epsilon_\Delta(|\f(p)|+|\f(r)|) = \\
 &&\tfrac{3}2\Delta + (\tfrac{1}{2}-2\epsilon_\Delta)|\f(q)| \ge \tfrac{3}2\Delta + 1 - 4\epsilon_\Delta.
 \end{eqnarray*}

 \end{proof}

\begin{corollary}
\label{cor:6-7-colors-2-edges}
Let $Q$ be a free component of $(G,\pi)$.
If $|E(Q)|=2$ and $\Delta\in\{6,7\}$, then
\[\frac{\ch(A(Q))}{\ch(A(Q))+|E(Q)|} \ge \begin{cases}
                                            \frac{19}{22} & \text{when $\Delta=6$,}\\ 
                                            \frac{211}{239}>\frac{22}{25} & \text{when $\Delta=7$.} 
                                         \end{cases}
\]
\end{corollary}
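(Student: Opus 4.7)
The plan is to add two disjoint contributions to $\ch(A(Q))$: the charge arriving from edges incident with $V(Q)$, bounded by Lemma~\ref{lem:bound-incident}, and the charge absorbed by the two vertices of $A_1(Q)$ via edges not touching $V(Q)$, bounded by Proposition~\ref{prop:bound-full-end}. Since these two sets of edges are disjoint, the bounds simply sum.

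First I would note that since $Q$ is connected with $|E(Q)|=2$, it is a path on three vertices, so $|V(Q)|=3$, and by Lemma~\ref{lem:A_1}(ii) we have $|A_1(Q)|=2$. Applying Lemma~\ref{lem:bound-incident} with $\epsilon_6=\tfrac{1}{12}$ gives that the edges incident with $V(Q)$ contribute at least $\tfrac{3}{2}\cdot 6+1-\tfrac{1}{3}=\tfrac{29}{3}$ to $\ch(A(Q))$ when $\Delta=6$, and with $\epsilon_7=\tfrac{3}{28}$ they contribute at least $\tfrac{3}{2}\cdot 7+1-\tfrac{3}{7}=\tfrac{155}{14}$ when $\Delta=7$.

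Next, for each $z\in A_1(Q)$, Proposition~\ref{prop:bound-full-end} applies in the case $|E(Q)|\ge 2$ (giving $\eta_\Delta=\tfrac{1}{2}$), so $z$ receives at least $\tfrac{1}{2}(\Delta-|V(Q)|)=\tfrac{1}{2}(\Delta-3)$ charge from edges not incident with $V(Q)$. Summing over the two vertices of $A_1(Q)$ yields an additional contribution of at least $\Delta-3$, i.e.\ $3$ when $\Delta=6$ and $4$ when $\Delta=7$. Adding both contributions gives $\ch(A(Q))\ge \tfrac{29}{3}+3=\tfrac{38}{3}$ for $\Delta=6$ and $\ch(A(Q))\ge \tfrac{155}{14}+4=\tfrac{211}{14}$ for $\Delta=7$. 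Substituting into $\ch(A(Q))/(\ch(A(Q))+2)$ yields the ratios $\tfrac{19}{22}$ and $\tfrac{211}{239}>\tfrac{22}{25}$, respectively.

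There is no real obstacle here beyond routine arithmetic: both ingredients are already packaged in the preceding lemmas, and the disjointness of the two sources of charge follows directly from the case split in the respective statements (Lemma~\ref{lem:bound-incident} accounts only for edges incident with $V(Q)$, while Proposition~\ref{prop:bound-full-end} quantifies the remainder arriving at $A_1(Q)$). The mild care required is simply to check that the constants $\epsilon_\Delta$ have been tuned so that the resulting totals match the claimed bounds — which indeed they have.
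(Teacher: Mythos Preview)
Your proof is correct and follows exactly the same approach as the paper: invoke Lemma~\ref{lem:A_1} to get $|A_1(Q)|=2$, then sum the contribution from Lemma~\ref{lem:bound-incident} with twice the contribution from Proposition~\ref{prop:bound-full-end} (with $\eta_\Delta=\tfrac12$ since $|E(Q)|\ge 2$), and compute the resulting ratios. Your explicit remark on the disjointness of the two charge sources is a helpful clarification that the paper leaves implicit.
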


\begin{proof}
By Lemma~\ref{lem:A_1} we have $|A_1(Q)|= 2$.
Hence, by Lemma~\ref{lem:bound-incident} and Proposition~\ref{prop:bound-full-end} we have
$\ch(A(Q)) \ge \tfrac{3}2\Delta + 1 - 4\epsilon_\Delta + 2\cdot \half \cdot (\Delta-3)$.
Hence, $\ch(A(Q))\ge 12\frac{2}3$ if $\Delta=6$ and $\ch(A(Q))\ge 15\frac{1}{14}$ if $\Delta=7$.
Then $\ch(A(Q))/(\ch(A(Q))+|E(Q)|)$ is at least $\frac{19}{22}$ for $\Delta=6$ and $\frac{211}{239}$ for $\Delta=7$, as required.
\end{proof}

\begin{lemma}
\label{lem:6-colors-3-edges}
 Let $\Delta=6$ and let $Q$ be a 3-edge free component. 
 Assume that $G$ does not contain a set of 6 vertices $S$ such that $G[S]$ induces a clique and exactly 6 edges leave $S$.
 Then, $\frac{\ch(A(Q))}{\ch(A(Q))+|E(Q)|} \ge \frac{19}{22}$.
 \end{lemma}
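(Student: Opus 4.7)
The plan is to reduce the lemma to a straightforward counting argument that leverages Corollary~\ref{cor:bound-full-even-component}, with the excluded configuration in $G$ serving to rule out the single borderline case where $G[A(Q)]$ would otherwise be a $K_6$.

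First I would observe that since $|E(Q)| = 3 = \lfloor \Delta/2 \rfloor$, Corollary~\ref{cor:num-uncolored-0} forces $|\f(Q)| = \Delta = 6$, so $Q$ is a ``maximal'' free component in the sense of Section~\ref{sec:struct-psi0}. By Corollary~\ref{cor:num-uncolored}, $Q$ has at most $3$ edges, and with exactly $3$ edges it is either a triangle or a tree on $4$ vertices (a path or a star). Applying Lemma~\ref{lem:A_1}, in the triangle case $|V(Q)| = 3$ and $|A_1(Q)| = 3$, while in the tree case $|V(Q)| = 4$ and $|A_1(Q)| = 2$; in all cases $|A(Q)| = 6$.

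Next, since $|E(Q)| = \Delta/2$, I would invoke Corollary~\ref{cor:bound-full-even-component} to get
\[
\ch(A(Q)) \;\ge\; \Delta|A(Q)| - |E(G[A(Q)])| - |E(Q)| \;=\; 36 - |E(G[A(Q)])| - 3 \;=\; 33 - |E(G[A(Q)])|.
\]
The bound $\ch(A(Q))/(\ch(A(Q))+|E(Q)|) \ge 19/22$ is equivalent to $\ch(A(Q)) \ge 19$, which would follow from $|E(G[A(Q)])| \le 14$. Since $|A(Q)| = 6$ gives $|E(G[A(Q)])| \le \binom{6}{2} = 15$, the only obstruction is $G[A(Q)] = K_6$.

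The main and only remaining step is to rule out $G[A(Q)] = K_6$, and here the excluded configuration enters. Since $|\f(Q)| = \Delta$, Lemma~\ref{lem:degrees-full-component} tells us that every vertex of $A(Q)$ has degree exactly $\Delta = 6$ in $G$. Summing degrees over $A(Q)$ gives $\sum_{v \in A(Q)} \deg_G(v) = 36 = 2|E(G[A(Q)])| + |d(A(Q))|$; if $G[A(Q)] = K_6$ then $|d(A(Q))| = 36 - 30 = 6$, so $A(Q)$ would be a set of $6$ vertices inducing a clique with exactly $6$ edges leaving it, contradicting the hypothesis of the lemma. Hence $|E(G[A(Q)])| \le 14$ and $\ch(A(Q)) \ge 19$, as required. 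I do not foresee any real difficulty beyond being careful about the three possible isomorphism types of $Q$, all of which collapse to the same value $|A(Q)| = 6$.
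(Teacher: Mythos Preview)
Your proof is correct and follows essentially the same approach as the paper: both use $|\f(Q)|=6$ and $|A(Q)|=6$, apply Corollary~\ref{cor:bound-full-even-component} to get $\ch(A(Q))\ge 33-|E(G[A(Q)])|$, and then exclude the case $G[A(Q)]=K_6$ via the hypothesis on $G$ (the paper states the degree-$6$ fact directly, whereas you cite Lemma~\ref{lem:degrees-full-component} and do the edge count explicitly). The extra case analysis on the isomorphism type of $Q$ is harmless and indeed collapses to $|A(Q)|=6$ as you note.
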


\begin{proof}
By Corollary~\ref{cor:num-uncolored-0} we have $|\f(Q)|= 6$ and by Lemma~\ref{lem:A_1} we have $|A(Q)|=6$.
By Corollary~\ref{cor:bound-full-even-component} we have $\ch(A(Q)) \ge 6|A(Q)| - |E(G[A(Q)])| - 3$.
If $G[A(Q)]$ does not induce a clique then $|E(G[A(Q)])|\le {6 \choose 2} - 1$ and hence $\ch(A(Q)) \ge 19$.
Finally, if $G[A(Q)]$ induces a clique then since every vertex in $A(Q)$ is of degree 6 there are exactly 6 edges leaving $A(Q)$; a contradiction.
It follows that $\frac{\ch(A(Q))}{\ch(A(Q))+|E(Q)|}\ge\frac{19}{22}$, as required.
\end{proof}

In the following lemma by {\em extending} a partial coloring $\pi$ we mean finding a new coloring which matches $\pi$ at the edges already colored in $\pi$.

\begin{lemma}
\label{lem:6-colors-extend}
Let $G$ be a graph of maximum degree 6 that contains a subgraph $H$ isomorphic to a 6-clique.
Let $\pi$ be an arbitrary partial 6-edge-coloring of $G$ such that the edges of $H$ are uncolored.
Assume there are two vertices $v,w$ of $H$ such that $|\f(v)\cap\f(w)|\ge 5$.
Then, $\pi$ can be extended so that at most 2 edges of $H$ are left uncolored.
\end{lemma}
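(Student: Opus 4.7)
I would construct the extension in three stages: (i) pick a special color $c \in \{1,\ldots,6\}$ associated with $v$ and $w$, (ii) properly $5$-edge color $H$ with the other five colors via a $1$-factorization, and (iii) eliminate ``bad'' edges by local recolorings, uncoloring at most two of them in the end. The hypothesis $|\f(v) \cap \f(w)| \ge 5$ says that at most one color is missing from $\f(v) \cap \f(w)$; I let $c$ be that color if one exists and an arbitrary color otherwise. Setting $C = \{1,\ldots,6\} \setminus \{c\}$ gives $|C|=5$ and $C \subseteq \f(v) \cap \f(w)$. Moreover, every $u \in V(H)$ has $\deg_H(u)=5$ and $\deg_G(u)\le 6$, so at most one colored edge of $G$ leaves $V(H)$ at $u$; let $c_u$ denote its color when such an edge exists. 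By the choice of $c$, the quantity $c_v$ (if defined) must equal $c$, and similarly for $c_w$.

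Applying Lemma~\ref{lem:odd-clique} to $H \cong K_6$ yields a proper $5$-edge coloring, i.e., a $1$-factorization $H = M_1 \cup \cdots \cup M_5$ into perfect matchings. Via an arbitrary bijection from $\{M_1,\ldots,M_5\}$ to $C$, this gives a proper coloring of all $15$ edges of $H$ using only colors in $C$. Since $c$ is unused, the partial coloring obtained by extending $\pi$ with this coloring still respects the forbidden color at $v$ and at $w$. Let $u_1,u_2,u_3,u_4$ be the vertices of $V(H) \setminus \{v,w\}$; call an edge of $H$ \emph{bad} if it is incident with some $u_i$ and is colored $c_{u_i} \in C$. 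Since each $u_i$ receives every color in $C$ on exactly one incident edge, there are at most four bad edges.

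The main combinatorial step is to show that, by choosing the $1$-factorization and the bijection to $C$ appropriately, and by optionally recoloring to color $c$ some edges not incident with $v$ or $w$ (which is safe precisely when $c$ is forbidden at neither endpoint), the number of bad edges can be reduced to at most $2$; those remaining are then uncolored, giving the desired extension. The point is that the six edges of the $K_4$ on $\{u_1,u_2,u_3,u_4\}$ are distributed across all five matchings $M_j$, so the color bijection can be permuted and Kempe-chain swaps within $C$ can be applied to push bad edges into this $K_4$, where color $c$ is available to absorb them (except possibly at one $u_i$ with $c_{u_i}=c$, but then $u_i$ has no bad edge, so this is no loss).

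The main obstacle is the subcase in which $u_1,\ldots,u_4$ carry four pairwise distinct forbidden colors in $C$: then every matching $M_j$ contains a bad edge at some $u_i$, and the naive coloring may yield up to four bad edges concentrated in the $K_{2,4}$ joining $\{v,w\}$ to $\{u_1,\ldots,u_4\}$, where color $c$ cannot help. Overcoming this requires a short case analysis showing that a $1$-factorization and bijection can always be chosen so that either at least two bad edges coincide (two vertices $u_i,u_j$ with $c_{u_i}=c_{u_j}$ sharing an edge in the corresponding matching) or at least two bad edges lie inside the $K_4$ on $\{u_1,\ldots,u_4\}$, where they are recolorable to $c$. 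The flexibility of $K_6$'s $1$-factorizations, together with the mild structural constraint imposed by the hypothesis on $v$ and $w$, will make this possible.
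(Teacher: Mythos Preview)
Your plan is correct and matches the paper's approach: reserve the single color $c\notin\f(v)\cap\f(w)$, properly $5$-edge-color $H$ with the five colors of $C=\{1,\ldots,6\}\setminus\{c\}$ via a $1$-factorization, and then repair the at most four conflicts at $u_1,\ldots,u_4$ by recoloring edges of the $K_4$ on $\{u_1,\ldots,u_4\}$ to $c$ and uncoloring whatever remains.

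Where you anticipate a case analysis (and even Kempe swaps), the paper finishes in one stroke. If at most two of the $u_i$ have $c_{u_i}\in C$ there are at most two bad edges already. Otherwise at least three do, so each pair $\{u_1,u_2\}$ and $\{u_3,u_4\}$ contains a vertex with forbidden color in $C$; now choose the $1$-factorization so that the edge $u_1u_2$ receives a color from $(\{c_{u_1},c_{u_2}\})\cap C$ and $u_3u_4$ a color from $(\{c_{u_3},c_{u_4}\})\cap C$. This is possible because any two disjoint edges of $K_6$ can be placed in any prescribed (possibly equal) pair of classes of a $1$-factorization. Recolor both $u_1u_2$ and $u_3u_4$ to $c$: each pair now contributes at most one surviving bad edge (either the other vertex's conflict, or the recolored edge itself if that other vertex happens to have $c_{u_i}=c$), so at most two edges remain to be uncolored. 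No further casework is needed.
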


\begin{proof}
 Let $V(H)=\{v,w,x_1,\ldots,x_4\}$.
 Note that for every $i=1,\ldots,4$ we have $\f(x_i)\ge 5$.
 Assume w.l.o.g.\ that $\{1,\ldots,5\}\subseteq \f(v)\cap\f(w)$.
 If among $x_1,\ldots,x_4$ there are at most two vertices incident with an edge colored with a color from $\{1,\ldots,5\}$ then we just color $H$ with colors $1,\ldots,5$ using Lemma~\ref{lem:odd-clique}. As a result, at most two edges of $H$ get the same color as an incident edge so we can uncolor these two edges and get the claim.
 Otherwise, by Lemma~\ref{lem:odd-clique} and by the symmetry we can color $E(H)$ with colors $1,\ldots,5$ so that $\pi(x_1x_2) \in (\pi(x_1)\cup\pi(x_2))\setminus\{6\}$ and $\pi(x_3x_4) \in (\pi(x_3)\cup\pi(x_4))\setminus\{6\}$. Next we recolor $x_1x_2$ and $x_3x_4$ to color 6. Clearly, then at most two edges of $H$ still have the same color as an incident edge so we can uncolor these two edges and get the claim.
\end{proof}

Now we are ready to finish the proof of our bound for 6 colors. Similarly as in the case of 4 colors we need to exclude some special case when $G$ contains a dense structure, which unfortunately turns out to be quite technical this time.

\begin{lemma}
\label{lem:main-6}
Every connected simple graph $G$ of maximum degree at most $6$ has a $6$-edge-colorable subgraph with at least $\frac{19}{22}|E|$ edges, unless $G=K_7$.
\end{lemma}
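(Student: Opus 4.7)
My plan is to combine the potential-function argument from the previous subsections with induction on $|V(G)|$ and the $K_6$-extension tool (Lemma~\ref{lem:6-colors-extend}). First, by Lemma~\ref{lem:K_(k+1)-e} (applied with $\Delta=6$, where $\frac{\Delta^2}{\Delta^2+\Delta-2}=\frac{9}{10}>\frac{19}{22}$), it suffices to prove the bound for $G\in\G^6_2$, so I may assume no seven vertices of $G$ induce more than $19$ edges. I set up the induction on $|V(G)|$: the base case $|V(G)|\le 7$ is handled by hand (if $|V(G)|\le 6$ then $\Delta(G)\le 5$ and $G$ is $6$-edge-colorable; if $|V(G)|=7$ and $G\ne K_7$, then $G\subsetneq K_7$ has at most $20$ edges and one verifies $c_6(G)\ge 18$ using Lemma~\ref{lem:even-clique}, giving ratio at least $\tfrac{18}{20}>\tfrac{19}{22}$).

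For the inductive step ($|V(G)|\ge 8$), fix a partial $6$-edge-coloring $\pi$ maximizing $\Psi$. By Corollary~\ref{cor:num-uncolored} every free component has at most three edges, so I combine Corollary~\ref{cor:bound-1-edge}, Corollary~\ref{cor:6-7-colors-2-edges}, and Lemma~\ref{lem:6-colors-3-edges} exactly as in the proofs of Corollaries~\ref{cor:main-4} and~\ref{cor:main-5}. The only case in which this direct summation fails is when some $3$-edge free component $Q$ has $A(Q)$ inducing a $K_6$ in $G$ with exactly six leaving edges, i.e.\ there is a set $S\subseteq V(G)$ of six vertices spanning $H=G[S]\cong K_6$ with every vertex of $S$ of degree $6$ in $G$ and with one external neighbor $y_i$ per $x_i\in S$. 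If this "bad $K_6$" configuration does not occur, the inductive step is finished.

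Otherwise I do surgery on the bad $K_6$. First observe that $|W|\ge 2$, where $W=\{y_1,\ldots,y_6\}$: if all $y_i$ coincided with a single $y$, then $G[S\cup\{y\}]$ would be a $K_7$ and connectedness plus $\Delta(G)\le 6$ would force $G=K_7$, contradicting the hypothesis. Next, $G-S$ has no $K_7$ component: any such component would contain some $y\in W$ of degree $6$ in $G-S$ and $\ge 1$ in $G[S,W]$, violating $\Delta(G)\le 6$; and a component of $G-S$ disjoint from $W$ would contradict connectedness of $G$. Therefore induction applies to every component of $G-S$, producing a partial $6$-coloring $\pi''$ of $G-S$ with at least $\tfrac{19}{22}|E(G-S)|$ colored edges. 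I will extend $\pi''$ by coloring the six leaving edges and at least $13$ of the fifteen edges of $H$, giving at least $19$ new colored edges out of $21$, so that the total becomes at least $\tfrac{19}{22}(|E(G)|-21)+19=\tfrac{19}{22}|E(G)|+\tfrac{19}{22}$.

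To produce these $19$ colored edges I proceed in two steps: (a) color the six leaving edges so that two non-incident ones receive the same color, and (b) apply Lemma~\ref{lem:6-colors-extend} to $H$. Step (b) needs two vertices $x_i,x_j\in V(H)$ with $|\f(x_i)\cap\f(x_j)|\ge 5$; after step (a), the two leaving edges sharing color $c$ give precisely $\f(x_i)=\f(x_j)=\{1,\dots,6\}\setminus\{c\}$, so Lemma~\ref{lem:6-colors-extend} fires and colors at least $13$ edges of $H$. Step (a) is the main obstacle and is where the argument gets technical: pick $i,j$ with $y_i\ne y_j$ (available since $|W|\ge 2$); in $\pi''$ each $y\in W$ satisfies $|\f(y)|\ge 1$ because $\deg_{G-S}(y)\le 5$. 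If some common color $c\in\f(y_i)\cap\f(y_j)$ exists I use it on both edges and then list-color the remaining four leaving edges greedily (each $y$ has enough free colors for this by the degree bound). Otherwise I modify $\pi''$ using a Kempe chain in $G-S$: for $c'\in\f(y_i)$ and $c\in\f(y_j)$, swap the $(c'c)$-path through $y_j$; if the path does not end at $y_i$ this frees $c'$ at $y_j$ without disturbing $y_i$, reducing to the previous subcase, while if every such swap's path ends at $y_i$ I can instead work with a third vertex $y_k$ (the constraints $|W|\ge 2$ and the bounded degrees ensure such flexibility exists). This last subcase is the technically delicate part of the proof, but it is exactly the scenario the paper warns will be "quite technical", and the combinatorial constraints on $W$ coming from $G\in\G^6_2$ (each $d_y\le 4$, hence $|W|\ge 2$) are what make the swap always available.
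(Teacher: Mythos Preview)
Your overall architecture matches the paper: reduce to the potential/discharging argument, and then deal separately with the ``bad $K_6$'' configuration (a set $S$ of six vertices inducing $K_6$ with exactly six leaving edges). The reduction to $\G^6_2$ via Lemma~\ref{lem:K_(k+1)-e} is legitimate, though the paper does not use it here.

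The genuine gap is in your step~(a). You need, after possibly modifying $\pi''$ on $G-S$, two distinct $y,y'\in W$ with $\f(y)\cap\f(y')\ne\emptyset$. Your Kempe-chain sketch does not establish this. First, $|W|$ can equal $2$ under the $\G^6_2$ hypothesis (e.g.\ $d_y=d_{y'}=3$), so ``work with a third vertex $y_k$'' is unavailable. Second, even with $|W|=2$ it can happen that \emph{every} single $(c'c)$-swap is blocked: take $y_1,y_2$ with $\pi''(y_1)=\{4,5,6\}$, $\pi''(y_2)=\{1,2,3\}$, neighbours $a_4,a_5,a_6$ of $y_1$ and $b_1,b_2,b_3$ of $y_2$, three further vertices $p,q,r$, and edges $a_jp,a_jq,a_jr$ coloured $1,2,3$ (cyclically) and $pb_i,qb_i,rb_i$ coloured $4,5,6$ (cyclically), arranged so that for every pair $(i,j)$ the $(ij)$-path is $y_1,a_j,\cdot,b_i,y_2$. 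This is a proper $6$-edge-coloured simple graph of maximum degree $6$ in which all nine Kempe chains between $y_1$ and $y_2$ are blocked. A single swap then merely permutes the free sets without creating overlap; one needs at least two swaps, and you give no argument that a finite sequence always succeeds.

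The paper circumvents this difficulty entirely. If $N(S)$ is \emph{not} a clique, it picks leaving edges $vx,wy$ with $x\ne y$, $xy\notin E(G)$, deletes $S$, \emph{adds} the edge $xy$, and applies induction; the colour $\pi'(xy)$ then automatically serves as the common colour for $vx$ and $wy$. If $N(S)$ \emph{is} a clique, the paper deletes all of $E(N[S])$ (not just $S$), applies induction, and then does a case analysis on $|N(S)|\in\{2,\dots,6\}$ to extend the colouring so that two vertices of $N(S)$ share a free colour, after which Lemma~\ref{lem:6-colors-extend} finishes. Your surgery removes only $S$, which is why you are forced into the uncontrolled Kempe situation; the paper's two devices (edge insertion, respectively deleting $E(G[N(S)])$ as well) are precisely what makes the extension go through without an open-ended recolouring argument.
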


\begin{proof}
 We use the induction on $|E(G)|$. 
 For the base case observe that the claim holds for the empty graph.
 Now we proceed with the induction step.
 
 First assume that $G$ does not contain a set of 6 vertices $S$ such that $G[S]$ induces a clique and exactly 6 edges leave $S$. 
 By Corollary~\ref{cor:num-uncolored} every free component of a partially 6-edge-colored graph which maximizes the potential $\Psi$ has at most three edges.
 Hence, by Corollary~\ref{cor:bound-1-edge}, Corollary~\ref{cor:6-7-colors-2-edges} and Lemma~\ref{lem:6-colors-3-edges} the claim follows.
 
 Hence in what follows we assume that there is a set $S\subset V(G)$ such that $G[S]$ induces a $K_6$ and exactly 6 edges leave $S$ (each vertex of $S$ is incident with one of them).
 
 Now assume that there are two edges leaving $S$, say $vx$ and $wy$ with $v,w\in S$, such that $x\ne y$ and $xy\not\in E(G)$.
 Then we remove $S$ from $G$ and add edge $xy$. Next we apply the induction hypothesis to the resulting graph $G'$, getting a partial coloring $\pi'$.
 We color $E(G)\cap E(G')$ according to $\pi'$, and we color $vx$ and $wy$ with $\pi'(xy)$. Next we color the remaining 4 edges leaving $S$ with free colors and we color $E(G[S])$ using Lemma~\ref{lem:6-colors-extend} so that at most two edges are left uncolored. As a result we get a partial coloring where the number of colored edges is at least $\frac{19}{22}|E(G')| + 18 = \frac{19}{22}|E(G')| + \frac{18}{20}(|E(G)|-|E(G')|) > \frac{19}{22}|E(G)|$, as required.

 Hence we can assume that $N(S)$ induces a clique. Since $G\ne K_7$, $|N(S)|>1$.
 Let $N(S)=\{v_1,\ldots,v_{|N(S)|}\}$.
 We remove the edges of $E(N[S])$ and we (partially) color the resulting graph $G'$ inductively.
 In what follows we show (for each value of $|N(S)|$ separately) that the coloring $\pi'$ of $G'$ can be extended to a coloring $\pi''$ so that (1) at most one edge of $E(G[N(S)])$ is uncolored and (2) there are two vertices $v,w\in N(S)$ such that $\f(v)\cap \f(w) \ne \emptyset$.
 Having that, we extend the coloring further. We pick an edge $vx$ for $x\in S$ and we color it with a color $a\in\f(v)\cap \f(w)$.
 Next we pick an edge $wy$ for $y\in S$ and we color it with the same color $a$. The remaining edges of $E(N(S),S)$ are colored with free colors.
 Note that $|\f(x)\cap\f(y)|=5$. Finally we partially color $G[S]$ using Lemma~\ref{lem:6-colors-extend} so that at most 2 edges remain uncolored.
 As a result we get a partial coloring of $G$ where the number of colored edges is at least $\frac{19}{22}|E(G')| + {|N(S)| \choose 2} - 1 + 6 + {6\choose 2} - 2 \ge \frac{19}{22}|E(G')| + \frac{{|N(S)| \choose 2} + 18}{{|N(S)| \choose 2} + 21}(|E(G)|-|E(G')|) \ge \frac{19}{22}|E(G)|$, as required.

 \mycase{1:} $|N(S)|=6$.
 Then $E(N(S),V\setminus N[S])=\emptyset$. We color $E(G[N(S)])$ with colors $1,\ldots,5$ according to Lemma~\ref{lem:odd-clique}. 
 Then for every $v,w\in N(S)$ we have $\f(v)\cap \f(w) =\{6\}$.

\mycase{2:} $|N(S)|=5$.
W.l.o.g.\ we can assume that for every $i=1,\ldots,4$ we have $|E(\{v_i\},S)|=1$ and $|E(\{v_5\},S)|=2$. 
Then, for every $i=1,\ldots,4$ we have $|E(\{v_i\},V\setminus N[S])|\le 1$ and $E(\{v_5\},V\setminus N[S])|=0$

First assume that there is a color, say color 1, which appears at all the four edges of $E(N(S),V\setminus N[S])$.
Then by Lemma~\ref{lem:even-clique} we can color $G[N(S)]$ with colors $2,\ldots,5$ so that only $v_1v_2$ and $v_3v_4$ are uncolored. Then we color $v_1v_2$ with 6 and in the resulting coloring $\pi$ we have $6\in\f(v_3)\cap\f(v_4)$, so $\f(v_3)\cap\f(v_4)\ne \emptyset$, as required.

Now w.l.o.g.\ we can assume that $\pi'(v_1)=\{1\}$, $\pi'(v_2)=\{2\}$ and $\pi'(v_3),\pi'(v_4)\subset\{1,\ldots,4\}$ (recall that $|\pi'(v_3)|=|\pi'(v_4)|=1$).
Then by Lemma~\ref{lem:even-clique} we can color $G[N(S)]$ with colors $1,\ldots,4$ so that exactly two edges are uncolored and they form a matching, color 1 is not used at the edges of $G[N(S)]$ incident with $v_1$ and color 2 is not used at the edges of $G[N(S)]$ incident with $v_2$. Then there are at most two edges in $G[N(S)]$ which are colored with the same color as an incident edge (each of $v_3$, $v_4$ is incident with at most one such edge). We uncolor these edges. Hence $G[N(S)]$ has at most four uncolored edges and two of them form a matching. We color these two edges with 5 and one of the remaining two (if any) with 6. Hence we get a proper partial coloring with at most one edge of $G[N(S)]$ uncolored and such that 6 is free in at least two of vertices $v_1,\ldots,v_4$, as required.

\mycase{3:} $|N(S)|=4$.
Note that for every $i=1,\ldots,4$ we have $1\le |E(\{v_i\},S)| \le 3$.
Since $|E(\{v_1,v_2,v_3,v_4\},S)|=6$ there are two subcases to consider.

\mycase{3.1:} $N(S)$ has a vertex, say $v_4$, such that $|E(\{v_4\},S)| = 3$. 
Then for every $i=1,\ldots,3$ we have $|E(\{v_i\},S)|=1$ and $|E(\{v_i\},V\setminus N[S])|\le 2$.
Moreover, $|E(\{v_4\},V\setminus N[S])|=0$.

First assume that the edges of $E(N(S),V\setminus N[S])$ use at most 5 colors (say, colors $1,\ldots,5$).
Then at least one pair of vertices from $\{v_1,v_2,v_3\}$ is incident with edges of at most 3 colors, by symmetry we can assume $v_1,v_2$ is such a pair.
We color $v_1v_3$ and $v_2v_4$ with 6.
Then $|\f(v_2v_3)|\ge 1$, $|\f(v_1v_2)|\ge 2$, $|\f(v_1v_4)|\ge 3$ and $|\f(v_3v_4)|\ge 3$, so we can color $v_2v_3$, $v_1v_2$, and $v_1v_4$ in this order, always using a free color. We see that $v_3v_4$ still has at least one free color so $\f(v_3)\cap\f(v_4)\ne\emptyset$ as required.

Now assume that the edges of $E(N(S),V\setminus N[S])$ use all 6 colors. W.l.o.g.\ $\pi(v_1)=\{1,2\}$, $\pi(v_2)=\{3,4\}$, $\pi(v_3)=\{5,6\}$.
Then we color $v_2v_3$ with 1, 
              $v_2v_4$ with 2, 
              $v_1v_3$ with 3, 
              $v_3v_4$ with 4, 
              $v_1v_2$ with 5, and we get the color 6 free at $v_1$ and $v_4$, as required.

\mycase{3.2:} $|E(\{v_1\},S)| = |E(\{v_2\},S)| = 1$ and $|E(\{v_3\},S)| = |E(\{v_4\},S)| = 2$ (if Case 3.1 does not apply all the other cases are symmetric). 
Then $|E(\{v_1\},V\setminus N[S])|\le 2$, $|E(\{v_2\},V\setminus N[S])|\le 2$, $|E(\{v_3\},V\setminus N[S])|\le 1$ and $|E(\{v_4\},V\setminus N[S])|\le 1$.

First assume that the edges of $E(N(S),V\setminus N[S])$ use at most 5 colors (say, colors $1,\ldots,5$).
We color $v_1v_2$ and $v_3v_4$ with 6. Then we are left with coloring of the 4-cycle $v_1v_4v_2v_3v_1$ and each of its edges has at least two free colors.
Since even cycles are 2-edge-choosable~\cite{erdos1979choosability}, we can color it. Finally we uncolor one edge, say $v_1v_2$ and we get $\f(v_1)\cap\f(v_2)\ne\emptyset$, as required.

Now assume that the edges of $E(N(S),V\setminus N[S])$ use all 6 colors. W.l.o.g.\ $\pi(v_1)=\{1,2\}$, $\pi(v_2)=\{3,4\}$, $\pi(v_3)=\{5\}$ and $\pi(v_4)=\{6\}$.
Then we color $v_2v_3$ with 1, 
              $v_2v_4$ with 2, 
              $v_3v_4$ with 3, 
              $v_1v_4$ with 4, 
              $v_1v_2$ with 5, and we get the color 6 free at $v_1$ and $v_3$, as required.

\mycase{4:} $|N(S)|=3$.
For every $i=1,2,3$ we have $1\le |E(\{v_i\},S)| \le 4$. 
Assume w.l.o.g.\ that $|E(\{v_1\},S)|\le|E(\{v_2\},S)|\le|E(\{v_3\},S)|$.
There are two subcases to consider.

\mycase{4.1:} $|E(\{v_1\},S)|=1$.
Then $|E(\{v_1\},V\setminus N[S])|\le 3$.
We have either $|E(\{v_2\},S)|=1$ and $|E(\{v_3\},S)|=4$ or $|E(\{v_2\},S)|=2$ and $|E(\{v_3\},S)|=3$.
In the former case we have $|E(\{v_2\},V\setminus N[S])|\le 3$ and $|E(\{v_3\},V\setminus N[S])|=0$. 
In the latter case we have $|E(\{v_2\},V\setminus N[S])|\le 2$ and $|E(\{v_3\},V\setminus N[S])|\le 1$. 
Hence in both cases $|\f(v_2v_3)|\ge 3$ and $|\f(v_1v_3)|\ge 2$.
We color $v_2v_3$ and $v_1v_3$ with free colors and we still have at least one free color at $v_2v_3$, so $\f(v_2)\cap\f(v_3)\ne\emptyset$, as required.

\mycase{4.2:} $|E(\{v_1\},S)|\ge 2$.
Then $|E(\{v_1\},S)|=|E(\{v_2\},S)|=|E(\{v_3\},S)|=2$.
Hence, for every $i=1,2,3$ we have $|E(\{v_i\},V\setminus N[S])|\le 2$.
Hence $|\f(v_1v_2)|,|\f(v_2v_3)|,|\f(v_1v_3)|\ge 2$.
If $|\f(v_1v_2)|=|\f(v_2v_3)|=|\f(v_1v_3)|=2$ then the sets $\f(v_1v_2)$, $\f(v_2v_3)$ and $\f(v_1v_3)$ are pairwise disjoint so we just color $v_1v_2$ and $v_2v_3$ with free colors and $|\f(v_1)\cap\f(v_3)|\ge 2$.
Otherwise one of these sets, say $\f(v_1v_2)$, has cardinality at least 3. Then we color $v_2v_3$ and $v_1v_3$ with free colors and $v_1v_2$ still has a free color so $\f(v_1)\cap\f(v_2)\ne\emptyset$.

\mycase{5:} $|N(S)|=2$. We just put $\pi=\pi'$. Note that $|E(N(S),V\setminus N[S])|\le 2\cdot 6 - 2 - 6 = 4$. It follows that $|\f(v_1v_2)|\ge 2$, so $\f(v_1)\cap\f(v_2)\ne\emptyset$, as required.
\end{proof}

\begin{lemma}
\label{lem:7-colors-3-edges}
 Let $\Delta=7$ and let $Q$ be a 3-edge free component. Then, $\frac{\ch(A(Q))}{\ch(A(Q))+|E(Q)|} \ge \frac{22}{25}$.
 \end{lemma}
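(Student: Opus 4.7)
The plan is to imitate the proof of Lemma~\ref{lem:bound-5-colors-2-edges}, which handles the analogous ``almost full'' situation for $\Delta=5$, $|E(Q)|=2$. Since $\frac{x}{x+3}$ is increasing, it suffices to establish $\ch(A(Q)) \geq 22$. First I would record the easy structural facts: by Corollary~\ref{cor:num-uncolored-0} we have $|\f(Q)| \geq 2|E(Q)| = 6$, and since trivially $|\f(Q)| \leq \Delta = 7$, the only cases are $|\f(Q)|\in\{6,7\}$. By Lemma~\ref{lem:A_1}, regardless of whether $Q$ is a tree (so $|V(Q)|=4$, $|A_1(Q)|=2$) or a triangle (so $|V(Q)|=3$, $|A_1(Q)|=3$), we have $|A(Q)|=6$. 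Observe also that $|E(G[A(Q)])|\leq \binom{6}{2} = 15$ since $G$ is simple.

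For the case $|\f(Q)| = 7$, I would invoke Lemma~\ref{lem:degrees-full-component}: since $|\f(Q)| - 2\lfloor\Delta/2\rfloor = 7-6 = 1$, exactly one vertex of $A(Q)$ has degree $6$ in $G$ and the other five have degree $7$, so the sum of degrees of $A(Q)$ in $G$ is $41$. Consequently the number of colored edges incident with $A(Q)$ is at least $41 - |E(G[A(Q)])| - |E(Q)|$, and by Proposition~\ref{prop:bound-full-component} every such edge sends charge $1$ to $A(Q)$. Hence $\ch(A(Q)) \geq 41 - 15 - 3 = 23 \geq 22$.

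For the case $|\f(Q)| = 6$, now Lemma~\ref{lem:degrees-full-component} gives zero vertices of degree $6$, so all six vertices of $A(Q)$ have degree $7$ in $G$; the sum of their degrees is $42$. The set $D$ of colored edges incident with $A(Q)$ therefore has size $|D| = 42 - |E(G[A(Q)])| - |E(Q)| \geq 42 - 15 - 3 = 24$. Since the hypotheses of Lemma~\ref{lem:bound-almost-full-component} are satisfied ($\Delta = 7$, $|E(Q)| = (\Delta-1)/2 = 3$, $|\f(Q)| = \Delta - 1 = 6$), it yields $\ch(A(Q)) \geq |D| - \tfrac{|A(Q)|-2}{2} = |D| - 2 \geq 22$.

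Combining both cases gives $\ch(A(Q))\geq 22$ and therefore $\ch(A(Q))/(\ch(A(Q))+|E(Q)|)\geq 22/25$. I do not expect any serious obstacle: compared to the $\Delta=6$, $|E(Q)|=3$ case (Lemma~\ref{lem:6-colors-3-edges}), where the bound is tight exactly when $G[A(Q)]$ induces a $6$-clique and forces an auxiliary structural argument (Lemma~\ref{lem:6-colors-extend}), here the extra colour gives just enough slack that no pathological clique configuration needs to be handled separately---the inequality $\ch(A(Q))\geq 22$ holds even when $G[A(Q)]$ is a complete graph on $6$ vertices.
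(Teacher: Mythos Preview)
Your proof is correct and follows essentially the same route as the paper's: split into the cases $|\f(Q)|=7$ and $|\f(Q)|=6$, use Lemma~\ref{lem:degrees-full-component} to count the colored edges incident with $A(Q)$, and then apply Proposition~\ref{prop:bound-full-component} in the first case and Lemma~\ref{lem:bound-almost-full-component} in the second to obtain $\ch(A(Q))\ge 23$ and $\ch(A(Q))\ge 22$ respectively. The only cosmetic difference is that the paper writes the edge count as $7|A(Q)|-\binom{|A(Q)|}{2}$ (minus $1$ when $|\f(Q)|=7$) rather than first computing the degree sum explicitly.
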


\begin{proof}
By Corollary~\ref{cor:num-uncolored-0} we have $|\f(Q)|\ge 6$ and by Lemma~\ref{lem:A_1} we have $|A(Q)|= 6$.
Let $D$ be the set of colored edges incident with $A(Q)$.

Assume $|\f(Q)|=7$. 
By Lemma~\ref{lem:degrees-full-component} there are at least $7|A(Q)| -1 - {|A(Q)|\choose 2}$ edges incident with $A(Q)$, so $|D|\ge 7|A(Q)| -1 - {|A(Q)|\choose 2} -3= 23$. This, together with Proposition~\ref{prop:bound-full-component} gives the claim.

Finally assume $|\f(Q)|=6$. 
By Lemma~\ref{lem:degrees-full-component} there are at least $7|A(Q)| - {|A(Q)|\choose 2}$ edges incident with $A(Q)$, so $|D|\ge 7|A(Q)| - {|A(Q)|\choose 2} -3$ and by Lemma~\ref{lem:bound-almost-full-component} we have $\ch(Q)\ge \frac{13}2|A(Q)|- {|A(Q)|\choose 2} -2= 22$.
This gives the claim.
\end{proof}

\begin{corollary}
\label{cor:main-7}
Every simple graph $G$ of maximum degree $7$ has a $7$-edge-colorable subgraph with at least $\frac{22}{25}|E|$ edges.
\end{corollary}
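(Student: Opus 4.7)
The plan is to follow exactly the discharging/potential framework set up in Section 3, mirroring the proof of Corollary~\ref{cor:main-5}. I would start with a partial $7$-edge-coloring $\pi$ of $G$ which maximizes the potential $\Psi$ defined in Section 3.3. Since $\Psi$ takes only polynomially many distinct values, such a $\pi$ can be computed in polynomial time by repeatedly applying the recoloring operations from the structural lemmas whenever their hypotheses fail. The goal is then to bound the number of uncolored edges, i.e.\ the number of edges in the nontrivial free components of $(G,\pi)$.

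By Corollary~\ref{cor:num-uncolored}, every nontrivial free component $Q$ of $(G,\pi)$ has at most $\lfloor 7/2 \rfloor = 3$ edges, so the analysis splits into three cases according to $|E(Q)|\in\{1,2,3\}$. In each case the relevant lemma already gives a lower bound on the ratio $\ch(A(Q))/(\ch(A(Q))+|E(Q)|)$: Corollary~\ref{cor:bound-1-edge} handles $|E(Q)|=1$ and yields $\frac{211}{239}$; Corollary~\ref{cor:6-7-colors-2-edges} handles $|E(Q)|=2$ and yields $\frac{211}{239}$; and Lemma~\ref{lem:7-colors-3-edges} handles $|E(Q)|=3$ and yields exactly $\frac{22}{25}$. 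A direct numerical check confirms $\frac{211}{239}>\frac{22}{25}$, so the minimum of the three bounds is $\frac{22}{25}$.

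Since the sets $A(Q)$ for distinct nontrivial free components are pairwise disjoint (as shown after Lemma~\ref{lem:A_1}, using that the sets $S(Q)$ are disjoint by Lemma~\ref{lem:usually-fans-do-not-meet}), the proposition
\[\gamma_7(G)\;\ge\;\min_{Q\in\nfc(G,\pi)}\frac{\ch(A(Q))}{\ch(A(Q))+|E(Q)|}\]
together with the three case bounds above immediately yields $\gamma_7(G)\ge\frac{22}{25}$, as required. Unlike the cases $\Delta=4$ and $\Delta=6$, no exceptional graph $K_{\Delta+1}$ appears here, because $\Delta=7$ is odd and $K_8$ is not a relevant obstruction: the bound in the 3-edge case does not rely on avoiding a clique of size $\Delta+1$, so no analogue of Lemma~\ref{lem:K_(k+1)-e} or Lemma~\ref{lem:main-6} is needed.

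The only mild subtlety — and what would be the main obstacle if any of the preceding lemmas were weaker — is verifying that the three case bounds for the potential $\Psi$-maximal coloring are simultaneously attainable by the same coloring; this is immediate, though, because all three bounds apply to the same fixed $\pi$ and the sets $A(Q)$ are disjoint. Thus the proof consists of no more than invoking Corollary~\ref{cor:num-uncolored}, Corollary~\ref{cor:bound-1-edge}, Corollary~\ref{cor:6-7-colors-2-edges}, and Lemma~\ref{lem:7-colors-3-edges}, exactly as in the corresponding corollary for $\Delta=5$.
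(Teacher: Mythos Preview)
Your proposal is correct and follows exactly the same approach as the paper: invoke Corollary~\ref{cor:num-uncolored} to bound free components to at most three edges, then apply Corollary~\ref{cor:bound-1-edge}, Corollary~\ref{cor:6-7-colors-2-edges}, and Lemma~\ref{lem:7-colors-3-edges} for the three cases and take the minimum. The additional remarks you include about polynomial-time computability and the absence of a $K_8$ obstruction are accurate elaborations but not needed for the argument.
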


\begin{proof}
 By Corollary~\ref{cor:num-uncolored} every free component of a partially 7-edge-colored graph which maximizes the potential $\Psi$ has at most three edges.
 Hence, by Corollary~\ref{cor:bound-1-edge}, Corollary~\ref{cor:6-7-colors-2-edges} and Lemma~\ref{lem:7-colors-3-edges} the claim follows.
\end{proof}

\section{Approximation Algorithms}
\label{sec:OTW}

In this section we describe a meta-algorithm for the maximum $k$-edge-colorable subgraph problem.
It is inspired by a method of Kosowski~\cite{K09} developed originally for $k=2$.
In the end of the section we show that the meta-algorithm yields new approximation algorithms
 for $k=3$ in the case of multigraphs and for $k=3,\ldots,7$ in the case of  simple graphs.

Throughout this section $G=(V,E)$ is the input graph from a family of graphs $\G$ (later on, we will use $\mathcal{G}$ as the family of all simple graphs or of all multigraphs).
We fix a maximum $k$-edge-colorable subgraph $\OPT$ of $G$.

As many previous algorithms, our method begins with finding a maximum $k$-matching $F$ of $G$ in polynomial time.
Clearly, $|E(\OPT)| \le |E(F)|$.
Now, if we manage to color $\rho|E(F)|$ edges of $F$, we get a $\rho$-approximation.
Unfortunately, this way we can get a low upper bound on the approximation ratio. 
Consider for instance the case of $k=3$ and $\G$ being the family of multigraphs. 
Then, if  a connected component $Q$ of $F$ is isomorphic to $G_3$, we get $\rho \le \frac{3}{4}$.
In the view of Corollary~\ref{corollary-7-9} this is very annoying, since $G_3$ is the only graph which prevents us from obtaining the $\frac{7}{9}$ ratio there.
However, we can take a closer look at the relation of $Q$ and $\OPT$.
Observe that if $\OPT$ does not leave $Q$, i.e.\ $\OPT$ contains no edge with exactly one endpoint in $Q$ then $|E(\OPT)|=|E(\OPT[V\setminus V(Q)])|+|E(\OPT[V(Q)])|$. Note also that $|E(\OPT[V(Q)])|=3$, so if we take only three of the four edges of $Q$ to our solution we do not lose anything --- locally our approximation ratio is 1. It follows that if there are many components of this kind, the approximation ratio is better than $3/4$.
What can we do if there are many components isomorphic to $G_3$ with an incident edge of $\OPT$?
The problem is that we do not know $\OPT$. However, then there are many components isomorphic to $G_3$ with an incident edge of the input graph $G$. 
The idea is to add some of these edges in order to form bigger components (possibly with maximum degree bigger than $k$) which have larger $k$-colorable subgraphs than the original components.

In the general setting, we consider a family graphs $\F \subset \G$ such that for every graph $A\in \F$,
\begin{enumerate}[(F1)]
 \item $\Delta(A)=k$ and $A$ has at most one vertex of degree smaller than $k$,
 \item for every graph $G\in \G$, for every subgraph $H$ of $G$ with $|V(A)|$ vertices, $c_k(H)\le c_k(A)$,
 \item a maximum $k$-edge colorable subgraph of $A$ (together with its $k$-edge-coloring) can be found in polynomial time; similarly, for every edge $uv\in E(A)$ a maximum $k$-ECS of $A-uv$ (together with its $k$-edge-coloring) can be found in polynomial time,
 \item for a given graph $B$ one can check whether $A$ is isomorphic to $B$ in polynomial time,
 \item $A$ is 2-edge-connected,
 \item for every edge $uv\in A$, we have $c_k(A-uv)=c_k(A)$. 
\end{enumerate}
A family that satisfies the above properties will be called a {\em $k$-normal family}.
We assume there is a number $\alpha \in (0,1]$ and a polynomial-time algorithm $\mathcal{A}$ such that for every $k$-matching $H\not\in\F$ of a graph in $\G$,  the algorithm ${\mathcal A}$ finds a $k$-edge-colorable subgraph of $H$  with at least $\alpha |E(H)|$ edges.
Intuitively, $\F$ is a family of ``bad exceptions'' meaning that for every graph $A$ in $\F$, there is $c(A) < \alpha |E(A)|$, e.g.\ in the above example of subcubic multigraphs $\F=\{G_3\}$. We note that the family $\F$ needs not to be finite, e.g.\ in the work~\cite{K09} of Kosowski $\F$ contains all odd cycles.
We also denote
\[\beta = \min_{A,B\in\F\atop\text{$A$ is not $k$-regular}}\frac{c_k(A)+c_k(B)+1}{|E(A)|+|E(B)|+1}\text{,\quad } \gamma=\min_{A\in\F}\frac{c_k(A)+1}{|E(A)|+1}.\]
As we will see, the approximation ratio of our algorithm is $\min\{\alpha,\beta,\gamma\}$.

Let $\Gamma$ be the set of all connected components of $F$ that are isomorphic to a graph in $\F$.
\begin{observation}
\label{obs-degree-k}
Without loss of generality, there is no edge $xy\in E(G)$ such that for some $Q\in\Gamma$, $x\in V(Q)$, $y\not\in V(Q)$ and $\deg(y)<k$.
\end{observation}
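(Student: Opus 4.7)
The plan is to show that starting from any maximum $k$-matching $F$ of $G$, one can iteratively modify $F$ to reduce $|\Gamma|$ until no forbidden edge exists; equivalently, among all maximum $k$-matchings we fix one that minimizes $|\Gamma|$ and verify it has the desired property. Since each modification strictly decreases the nonnegative integer $|\Gamma|$ while preserving $|E(F)|$, the procedure terminates after at most $|V(G)|$ rounds, so it runs in polynomial time.

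First I would suppose for contradiction that $F$ is a maximum $k$-matching minimizing $|\Gamma|$, yet there is an edge $xy\in E(G)$ with $x\in V(Q)$ for some $Q\in\Gamma$, $y\notin V(Q)$, and $\deg_F(y)<k$. Because $F$ is maximum and $\deg_F(y)<k$, the only obstruction to $F+xy$ being a $k$-matching is $\deg_F(x)=k$, so $x$ has $k\ge 1$ neighbors inside $Q$. I would pick any such neighbor $e$ and set $F':=F-xe+xy$; by the degree accounting just given, $F'$ is again a $k$-matching of the same size as $F$, hence also maximum.

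Next I would analyze how the component structure changes. Let $R$ denote the $F$-component containing $y$ (possibly the trivial component $\{y\}$ if $\deg_F(y)=0$). By property (F5), $Q$ is 2-edge-connected, hence $Q-xe$ is still connected; the edge $xy$ then glues it to $R$ to form a single new $F'$-component $Q^*$ with vertex set $V(Q)\cup V(R)$ and edge set $(E(Q)\setminus\{xe\})\cup E(R)\cup\{xy\}$, while every other component of $F$ remains unchanged in $F'$. Crucially, $xy$ is a bridge of $Q^*$: once it is removed, $V(Q)$ and $V(R)$ lie in separate pieces, since every remaining edge of $F'$ stays inside one of these two sets. Hence $Q^*$ is not 2-edge-connected, and applying (F5) a second time yields $Q^*\notin\mathcal{F}$. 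Consequently $\Gamma(F')\subseteq\Gamma(F)\setminus\{Q\}$, giving $|\Gamma(F')|<|\Gamma(F)|$ and contradicting the choice of $F$.

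The only place where a nontrivial property of $\mathcal{F}$ is actually invoked is the conclusion $Q^*\notin\mathcal{F}$, and it relies exactly on the 2-edge-connectedness requirement (F5); the rest is routine bookkeeping with degrees and components. This is the one subtle point I expect could be tempting to miss, and it is precisely the reason why 2-edge-connectedness is built into the definition of a $k$-normal family.
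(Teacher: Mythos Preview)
Your proof is correct and uses the same swap operation as the paper: replace an edge of $Q$ incident with $x$ by $xy$, then invoke (F5). The one difference is the termination argument. The paper asserts that after the swap ``the number of connected components of $F$ increases''; this is actually false as stated, since by (F5) $Q-xe$ stays connected and the new edge $xy$ merges it with the component of $y$, so the component count never goes up. Your choice of monovariant, namely $|\Gamma|$, is the right one: the merged component $Q^*$ contains the bridge $xy$, hence by (F5) $Q^*\notin\mathcal F$, and no other component changes, so $|\Gamma|$ strictly decreases. So your argument is essentially the paper's intended argument, but with the termination step done correctly.
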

\begin{proof}
If such an edge exists, we replace in $F$ any edge of $Q$ incident with $x$ with the edge $xy$. The new $F$ is still a maximum $k$-matching in $G$. By (F5) the number of connected components of $F$ increases, so the procedure eventually stops with a $k$-matching having the desired property.
\end{proof}
When $H$ is a subgraph of $G$ we denote $\Gamma(H)$ as the set of components $Q$ in $\Gamma$ such that $H$ contains an edge $xy$ with $x\in V(Q)$ and $y\not\in V(Q)$. We denote $\overline{\Gamma}(H)=\Gamma\setminus\Gamma(H)$.
The following lemma, a generalization of Lemma 2.1 from~\cite{K09}, motivates the whole approach.

\begin{lemma}
 \label{lem-approx-lower-bound}
 $\displaystyle|E(\OPT)| \le |E(F)|-\sum_{Q\in\overline{\Gamma}(\OPT)}\ovc_k(Q)$.
\end{lemma}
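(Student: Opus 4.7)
The plan is to split $E(\OPT)$ according to which edges are trapped inside components from $\overline{\Gamma}(\OPT)$ and which lie outside, and then bound each part separately.

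Let me set $V_0 = \bigcup_{Q\in\overline{\Gamma}(\OPT)}V(Q)$. The first key observation is that because $Q\in\overline{\Gamma}(\OPT)$ means exactly that $\OPT$ has no edge with precisely one endpoint in $V(Q)$, every edge of $\OPT$ touching $V(Q)$ is actually contained in $G[V(Q)]$. In particular, distinct components from $\overline{\Gamma}(\OPT)$ contribute disjoint subsets of $E(\OPT)$, and we get the disjoint decomposition
\[
E(\OPT) \;=\; \Bigl(\bigsqcup_{Q\in\overline{\Gamma}(\OPT)} E(\OPT[V(Q)])\Bigr) \;\sqcup\; E(\OPT[V\setminus V_0]).
\]

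Now I bound each part. For each $Q\in\overline{\Gamma}(\OPT)$, the graph $\OPT[V(Q)]$ is a $k$-edge-colorable subgraph of $G[V(Q)]$, a subgraph of $G$ on $|V(Q)|=|V(A)|$ vertices (where $Q\cong A\in\F$). Property (F2) of a $k$-normal family therefore gives $|E(\OPT[V(Q)])|\le c_k(G[V(Q)])\le c_k(A)=c_k(Q)$. For the remaining part, observe that $F[V\setminus V_0]$ is a maximum $k$-matching of $G[V\setminus V_0]$: if there were a strictly larger $k$-matching $F'$ on $V\setminus V_0$, then $F[V_0]\cup F'$ would be a $k$-matching of $G$ larger than $F$, contradicting the maximality of $F$. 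Since $\OPT[V\setminus V_0]$ is a $k$-ECS and in particular a $k$-matching of $G[V\setminus V_0]$, we obtain $|E(\OPT[V\setminus V_0])|\le |E(F[V\setminus V_0])|$.

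Adding the two bounds and using $c_k(Q)=|E(Q)|-\ovc_k(Q)$ together with the fact that $F[V_0]$ is the disjoint union of the components $Q\in\overline{\Gamma}(\OPT)$ (so $\sum_{Q\in\overline{\Gamma}(\OPT)}|E(Q)|=|E(F[V_0])|$), we get
\[
|E(\OPT)|\le \sum_{Q\in\overline{\Gamma}(\OPT)}c_k(Q) + |E(F[V\setminus V_0])| \;=\; |E(F)|-\sum_{Q\in\overline{\Gamma}(\OPT)}\ovc_k(Q),
\]
which is the desired inequality.

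There is no serious obstacle; the only subtle point is the first observation, namely that $Q\in\overline{\Gamma}(\OPT)$ forces $\OPT$-edges that touch $V(Q)$ to be entirely inside $V(Q)$, which is precisely the definition of $\overline{\Gamma}(\OPT)$ and what allows us to treat each such component in complete isolation via property (F2).
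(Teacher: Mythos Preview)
Your proof is correct and follows essentially the same approach as the paper: decompose $E(\OPT)$ into the parts inside each $Q\in\overline{\Gamma}(\OPT)$ and the part on $V\setminus V_0$, bound the former via (F2) and the latter via the maximality of $F$ as a $k$-matching, then recombine. Your justification that $F[V\setminus V_0]$ is a maximum $k$-matching of $G[V\setminus V_0]$ (using that the $Q$'s are connected components of $F$, so $F$ has no edges between $V_0$ and its complement) is slightly more explicit than the paper's one-line ``for otherwise $F$ is not maximal,'' but the argument is the same.
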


\begin{proof}
Since for every component $Q\in\ovGamma(\OPT)$ the graph $\OPT$ has no edges with exactly one endpoint in $Q$,
\begin{equation}
\label{eq-zuzia}
|E(\OPT)| = |E(\OPT [V'])| + \sum_{Q\in\ovGamma(\OPT)}|E(\OPT[V(Q)])|,
\end{equation}
where $V'=V \setminus \bigcup_{Q\in\ovGamma(\OPT)}V(Q)$. By (F2), we
get 
\begin{equation}
\label{eq-fredzia}
|E(\OPT[V(Q)])|\le c_k(Q). 
\end{equation}
Since $\OPT$ is $k$-edge-colorable, $E(\OPT [V'])$ is a $k$-matching. Clearly $|E(\OPT [V'])| \le |E(F[V'])|$ for otherwise $F$ is not maximal.
This, together with~\eqref{eq-zuzia} and \eqref{eq-fredzia} gives the desired inequality as follows.
\begin{equation}
|E(\OPT)| \le |E(F [V'])| + \sum_{Q\in\ovGamma(\OPT)}c_k(Q)=|E(F)|-\sum_{Q\in\overline{\Gamma}(\OPT)}\ovc_k(Q). 
\end{equation}
\end{proof}

The above lemma allows us to leave up to $\sum_{Q\in\overline{\Gamma}(\OPT)}\ovc_k(Q)$ edges of components in $\Gamma$ uncolored for free, i.e.\ without obtaining approximation factor worse than $\alpha$. In what follows we ``cure'' some components in $\Gamma$ by joining them with other components by edges of $G$. We want to do it in such a way that the remaining, ``ill'', components have a partial $k$-edge-coloring with no more than $\sum_{Q\in\overline{\Gamma}(\OPT)}\ovc_k(Q)$ uncolored edges.
To this end, we find a $k$-matching $R\subseteq G$ which satisfies the following conditions:
  \begin{enumerate}[(M1)]
   \item for each edge $xy\in R$ there is a component $Q\in \Gamma$ such that $x\in V(Q)$ and $y\not\in V(Q)$,
   \item $R$ maximizes $\sum_{Q\in\Gamma(R)}\ovc_k(Q)$,
   \item $R$ is inclusion-wise minimal $k$-matching subject to (M1) and (M2).
  \end{enumerate}

\begin{lemma}
\label{lem:R-polynomial}
 $R$ can be found in polynomial time.
\end{lemma}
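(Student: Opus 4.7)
The plan is to reduce the problem of finding $R$ to maximum weight $b$-matching, which is polynomial-time solvable.

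First I would observe that, for the purpose of (M2), we may restrict attention to $k$-matchings $R$ with at most one edge leaving each component $Q\in \Gamma$: any extra such edge can be removed without changing $\Gamma(R)$, so (M1) and (M2) are preserved. Under this restriction, each vertex of a component $Q\in\Gamma$ is incident to at most one edge of $R$, so the only nontrivial degree constraint that remains is the bound $\deg_R(v)\le k$ at vertices outside all components in $\Gamma$.

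Building on this, I would construct an auxiliary multigraph $H$ with vertex capacities. The vertices of $H$ are a new node $q_Q$ for each $Q \in \Gamma$, together with the vertices in $V(G)\setminus\bigcup_{Q\in \Gamma}V(Q)$. For every edge $xy\in E(G)$ with $x\in V(Q_i)$ and $y\notin V(Q_i)$ for some $Q_i\in\Gamma$, I introduce a corresponding edge in $H$: a $q_{Q_i}q_{Q_j}$-edge of weight $\ovc_k(Q_i)+\ovc_k(Q_j)$ if $y\in V(Q_j)$ for some $Q_j\in\Gamma$, and a $q_{Q_i}y$-edge of weight $\ovc_k(Q_i)$ otherwise. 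I set $b(q_Q)=1$ for all $Q\in\Gamma$ and $b(v)=k$ for every outside vertex $v$. A $b$-matching in $H$ translates back to a set $R\subseteq E(G)$ satisfying (M1) and the $k$-matching property, with total weight equal to $\sum_{Q\in\Gamma(R)}\ovc_k(Q)$; conversely, any candidate $R$ satisfying the ``one edge per component'' restriction gives a $b$-matching in $H$ of the same weight. A maximum weight $b$-matching can be computed in polynomial time (e.g.\ by Edmonds' general matching algorithm; see \cite{schrijver}), so this yields $R$ satisfying (M1) and (M2).

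Finally, I would verify that (M3) is already enforced by the construction. Each $Q\in\Gamma(R)$ is hit by exactly one edge of $R$ because $b(q_Q)=1$; moreover property (F6) implies $\ovc_k(Q)\ge 1$ for every $Q\in\F$, so removing any edge of $R$ would strictly decrease $\sum_{Q\in\Gamma(R)}\ovc_k(Q)$ and violate (M2). The only technical subtlety will be formalising the ``one edge per component'' WLOG reduction and carefully checking the two-way correspondence between $H$-matchings and valid choices of $R$; no new combinatorial ideas beyond standard $b$-matching theory are needed.
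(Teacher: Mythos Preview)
Your reduction has a genuine gap: the ``at most one edge leaving each $Q\in\Gamma$'' restriction is not WLOG. An edge of $R$ that leaves $Q$ may simultaneously be the \emph{only} edge leaving some other component $Q'\in\Gamma$; removing it then drops $Q'$ from $\Gamma(R)$ and strictly decreases the objective. Concretely, take three components $Q_1,Q_2,Q_3\in\Gamma$ such that the only cross edges in $G$ are $e_{12}$ (between a vertex of $Q_1$ and a vertex of $Q_2$) and $e_{23}$ (between a different vertex of $Q_2$ and a vertex of $Q_3$). Then $R=\{e_{12},e_{23}\}$ is a $k$-matching with $\Gamma(R)=\{Q_1,Q_2,Q_3\}$, but in your auxiliary graph $b(q_{Q_2})=1$ forces you to pick at most one of these edges, so your $b$-matching can certify only two of the three components. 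In short, the objective $\sum_{Q\in\Gamma(R)}\ovc_k(Q)$ is not a linear function of the chosen edges (each $Q$ contributes once no matter how many edges of $R$ touch it), so it cannot be modelled by putting weights directly on the cross edges.

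The paper circumvents this by a gadget rather than a contraction: it keeps all original vertices, adds for each $Q\in\Gamma$ two new vertices $u_Q,w_Q$, edges $vu_Q$ for $v\in V(Q)$, and a single weighted edge $u_Qw_Q$ carrying $\ovc_k(Q)$; the lower bound $f(v)=1$ for $v\in V(Q)$ together with $g(u_Q)=|V(Q)|$ ensures that $u_Qw_Q$ can be selected in the $[f,g]$-factor iff at least one genuine edge leaves $Q$. This models the ``does $Q$ have an outgoing edge?'' indicator correctly while still allowing arbitrarily many edges to leave $Q$. Minimality (M3) is then enforced afterwards by greedily deleting redundant edges. Your plan can be repaired along these lines, but not with the contracted graph $H$ as stated.
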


\begin{proof}
We use a slightly modified algorithm from the proof of Proposition 2.2 in~\cite{K09}.
We define a graph $G'=(V',E')$ as follows.
Let $V'=V\cup\{u_Q,w_Q\ :\ Q\in\Gamma\}$.
Then, for each $Q\in\Gamma$, the set $E'$ contains three types of edges:
\begin{itemize}
 \item all edges $xy\in E(G)$ such that $x\in V(Q)$ and $y\not\in V(Q)$,
 \item an edge $vu_Q$ for every vertex $v\in V(Q)$, and
 \item an edge $u_Qw_Q$.
\end{itemize}
Next we define functions $f,g:V'\rightarrow\mathbb{N}\cup\{0\}$ as follows: 
for every $v\in \bigcup_{Q\in\Gamma}V(Q)$ we set $f(v)=1$, $g(v)=k$; 
for every $v\in V\setminus\bigcup_{Q\in\Gamma}V(Q)$ we set $f(v)=0$, $g(v)=k$; 
for every $Q\in\Gamma$ we set $f(u_Q)=0$, $g(u_Q)=|V(Q)|$ and $f(w_Q)=0$, $g(w_Q)=1$.
An {\em $[f,g]$-factor} $R'$ in $G'$ is a subgraph $R'\subseteq G'$ such that for every $v\in V(R')$ there is $f(v)\le\deg_{R'}(v)\le g(v)$.
All edges $u_Qw_Q$ have weight $\ovc_k(Q)$ while all the other edges have weight 0.
Then we find a maximum weight $[f,g]$-factor $R'$ in $G'$, which can be done in polynomial time (see e.g.~\cite{schrijver}).
It is easy to see that $R=E(R')\cap E(G)$ satisfies  (M1) and (M2).
Next, as long as $R$ contains an edge $xy$ such that $R-xy$ still satisfies (M1) and (M2), we replace $R$ by $R-xy$.
 \end{proof}

Assuming that the components from $\Gamma(R)$ will be ``cured'' by joining them to other components, the following lemma shows that we do not need to care about the remaining components, i.e. the components from $\ovGamma(R)$.
Informally, the lemma says that the number of uncolored edges in such components is bounded by the the number of uncolored edges in components in $\ovGamma(\OPT)$, which will turn out to be optimal thanks to property (F2).

\begin{lemma}
\label{lem-good-matching} 
$\displaystyle \sum_{Q\in\ovGamma(R)}\ovc_k(Q)\le\sum_{Q\in\ovGamma(\OPT)}\ovc_k(Q)$.
\end{lemma}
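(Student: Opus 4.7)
The plan is to show the contrapositive formulation: since $\Gamma = \Gamma(R) \cup \ovGamma(R)$ (disjoint union) and the same for $\OPT$, the desired inequality is equivalent to
\[\sum_{Q\in\Gamma(\OPT)}\ovc_k(Q)\ \le\ \sum_{Q\in\Gamma(R)}\ovc_k(Q).\]
So I would exhibit a $k$-matching $R'$, derived from $\OPT$, which satisfies conditions (M1) and (M2), and for which $\Gamma(R') = \Gamma(\OPT)$. Then the optimality condition (M2) on $R$ immediately gives the wanted bound.

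Concretely, let $R'$ be the set of edges $xy \in E(\OPT)$ such that there exists $Q \in \Gamma$ with $x \in V(Q)$ and $y \notin V(Q)$. First, $R'$ is a $k$-matching: it is a subgraph of $\OPT$, and $\OPT$ is $k$-edge-colorable, so every vertex has degree at most $k$ in $\OPT$, and hence in $R'$. Second, (M1) holds for $R'$ by construction. Third, I claim $\Gamma(R') = \Gamma(\OPT)$: the inclusion $\Gamma(R') \subseteq \Gamma(\OPT)$ is immediate from $R' \subseteq E(\OPT)$; conversely, if $Q \in \Gamma(\OPT)$, then by definition $\OPT$ contains some edge $xy$ with $x \in V(Q)$ and $y \notin V(Q)$, and this edge belongs to $R'$ by construction, so $Q \in \Gamma(R')$.

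Now I apply property (M2) of $R$: since $R'$ is a $k$-matching satisfying (M1),
\[\sum_{Q\in\Gamma(R)}\ovc_k(Q)\ \ge\ \sum_{Q\in\Gamma(R')}\ovc_k(Q)\ =\ \sum_{Q\in\Gamma(\OPT)}\ovc_k(Q).\]
Subtracting both sides from $\sum_{Q\in\Gamma}\ovc_k(Q)$ yields the lemma.

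I do not expect any real obstacle here: the only point requiring care is to verify that $R'$ really is a valid candidate for the maximization in (M2), which uses only the facts that $\OPT$ is $k$-edge-colorable (hence each $R' \subseteq \OPT$ has maximum degree at most $k$, i.e.\ is a $k$-matching) and the definition of $R'$. The minimality condition (M3) plays no role in this lemma; it is used elsewhere in the argument.
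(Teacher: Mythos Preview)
Your proof is correct and is essentially identical to the paper's own argument: the paper defines the very same auxiliary set (called $R_{\OPT}$ there), observes it is a $k$-matching because $\OPT$ is $k$-edge-colorable, applies (M2) to get $\sum_{Q\in\Gamma(R)}\ovc_k(Q)\ge\sum_{Q\in\Gamma(\OPT)}\ovc_k(Q)$, and then subtracts from $\sum_{Q\in\Gamma}\ovc_k(Q)$. Your write-up is slightly more explicit in verifying (M1) and the equality $\Gamma(R')=\Gamma(\OPT)$, but the substance is the same.
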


\begin{proof}
Let $R_{\OPT}=\{xy\in E(\OPT) : \text{for some $Q\in\Gamma$, } x\in Q \text{ and } y\not\in Q\}$. Since $\OPT$ is $k$-edge-colorable, $R_{\OPT}$ is a $k$-matching. By (M2) it follows that
\begin{equation}
\label{eq-rysia}
 \sum_{Q\in\Gamma(R)}\ovc_k(Q) \ge^{\rm (M2)} \sum_{Q\in\Gamma(R_{\OPT})}\ovc_k(Q)=\sum_{Q\in\Gamma(\OPT)}\ovc_k(Q),
\end{equation}
and next
\begin{equation}
 \sum_{Q\in\ovGamma(R)}\ovc_k(Q) = \sum_{Q\in\Gamma}\ovc_k(Q) - \sum_{Q\in\Gamma(R)}\ovc_k(Q) \le^{\eqref{eq-rysia}} \sum_{Q\in\Gamma}\ovc_k(Q) - \sum_{Q\in\Gamma(\OPT)}\ovc_k(Q) = \sum_{Q\in\ovGamma(\OPT)}\ovc_k(Q).
\end{equation}
\end{proof}
  
The following observation is immediate from the minimality of $R$, i.e.\ from condition (M3).  
  
\begin{observation}
\label{obs:stars}
 Let $H_F$ be a graph with vertex set $\{Q\ :\ Q \text{ is a connected component of $F$}\}$ and the edge set $\{PQ\ :\ \text{there is an edge $xy\in R$ incident with both $P$ and $Q$}\}$. Then $H_F$ is a forest, and every connected component of $H_F$ is a star.
\end{observation}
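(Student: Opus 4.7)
The plan is to show something slightly stronger: every edge of $H_F$ has an endpoint that is a leaf of $H_F$. From this, both conclusions follow immediately. Indeed, every cycle has minimum degree $2$, so no cycle can exist; and in any tree, an edge whose endpoints are both non-leaves would lie on an induced path of length $3$, so the tree must have diameter at most $2$, i.e., be a star.

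To establish the leaf property, I would fix an arbitrary edge $PQ \in E(H_F)$ and pick a witnessing edge $f = xy \in R$ with $x \in V(P)$, $y \in V(Q)$. The key is to exploit the minimality condition (M3). Removing $f$ from $R$ gives a $k$-matching $R' = R - f$ which still satisfies (M1) (since (M1) is a condition on each edge individually). Hence, by (M3), $R'$ must fail (M2), meaning
\[
\sum_{Q' \in \Gamma(R)} \ovc_k(Q') \;>\; \sum_{Q' \in \Gamma(R')} \ovc_k(Q').
\]
Since $\Gamma(R') \subseteq \Gamma(R)$, there exists some $Q_0 \in \Gamma(R) \setminus \Gamma(R')$ with $\ovc_k(Q_0) > 0$.

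Now I would unpack what $Q_0 \in \Gamma(R) \setminus \Gamma(R')$ means: $Q_0 \in \Gamma$, some edge of $R$ has exactly one endpoint in $V(Q_0)$, but after deleting $f$ no such edge remains. Because connected components of $F$ are vertex-disjoint, every edge of $R$ has at most one endpoint in $V(Q_0)$, and the above forces $f$ to be the \emph{unique} edge of $R$ incident with $V(Q_0)$. In particular $Q_0 \in \{P,Q\}$, and $Q_0$ has degree exactly $1$ in $H_F$ (every $H_F$-edge incident with $Q_0$ is realised by at least one $R$-edge touching $V(Q_0)$, and there is only one such $R$-edge). So $Q_0$ is the desired leaf endpoint of the edge $PQ$.

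The only slightly delicate point is insisting on $\ovc_k(Q_0) > 0$ rather than merely $Q_0 \in \Gamma(R)\setminus\Gamma(R')$, but this is given for free by the strict inequality above, and it is needed to rule out the degenerate case where $Q_0$ contributes $0$ to the sum. Beyond that, the argument is a clean application of (M1)--(M3), and the structural conclusion about stars is a direct combinatorial consequence of the leaf property.
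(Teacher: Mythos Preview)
Your proof is correct and is precisely the natural expansion of what the paper leaves implicit; the paper gives no argument beyond ``immediate from the minimality of $R$, i.e.\ from condition (M3),'' and your use of (M3) to show that every edge of $R$ is the unique $R$-edge incident with some $Q_0\in\Gamma$ (hence $Q_0$ is a leaf of $H_F$) is exactly the intended reasoning. One minor remark: you do not actually need $\ovc_k(Q_0)>0$ anywhere after establishing that $\Gamma(R)\setminus\Gamma(R')$ is nonempty---the positivity is just the mechanism by which the strict inequality forces nonemptiness, not a property used downstream.
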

  
In what follows, the components of $F$ corresponding to leaves in $H_F$ are called {\em leaf components}.
Now we proceed with finding a $k$-edge-colorable subgraph $S$ of $G$ together with its coloring, using the algorithm described below.
In the course of the algorithm, we maintain the following invariants:

\begin{invariant}
\label{inv-degrees}
For every $v\in V$, $\deg_R(v) \le \deg_F(v)$.
\end{invariant}

\begin{invariant}
\label{inv-no-new-gammas}
If $F$ contains a connected component $Q$ isomorphic to a graph in $\F$, then $Q\in\Gamma$, in other words a new component isomorphic to a graph in $\F$ cannot appear.
\end{invariant}

By Observation~\ref{obs:stars}, each edge of $R$ connects a vertex $x$ of a leaf component and a vertex $y$ of another component.
Hence $\deg_R(x)=1\le\deg_F(x)$. By Observation~\ref{obs-degree-k}, initially $\deg_F(y)=k$, so also $\deg_R(y)\le\deg_F(y)$.
It follows that Invariant~\ref{inv-degrees} holds at the beginning, as well as Invariant~\ref{inv-no-new-gammas}, the latter being trivial. 
Now we describe the coloring algorithm.

\begin{enumerate} [Step 1]
  \item Begin with the graph with no edges $S=(V,\emptyset)$.
  \item \label{step-jadzia}
  As long as $F$ contains a leaf component $Q\in\Gamma$ and a component $P$, such that
  \begin{itemize}
   \item there is an edge $xy\in R$ with $x\in Q$ and $y\in P$,
   \item there is an edge $yz\in E(P)$ such that no connected component of $P-yz$ is isomorphic to a graph in $\F$,
  \end{itemize}
  then we remove $xy$ from $R$ and both $Q$ and $yz$ from $F$. 
  Notice that if $z$ was incident with an edge $zw\in R$ then by Observation~\ref{obs:stars}, $w$ belongs to another leaf component $Q'$.
  Then we also remove $zw$ from $R$ and $Q'$ from $F$ (if there are many such edges $zw$ we perform this operation only for {\em one} of them). 
  It follows that Invariants~\ref{inv-degrees} and~\ref{inv-no-new-gammas} hold.
  \item \label{step-ula}
        As long as there is a leaf component $Q\in\Gamma(R)$ we do the following.
        Let $P$ be the component of $F$ such that there is an edge $xy\in R$ with $x\in Q$ and $y\in P$.
	Then, by Step~\ref{step-jadzia}, for each edge $yz\in E(P)$ in graph $P-yz$ there is a connected component isomorphic to a graph in $\F$.
	In particular, by (F1) every edge $yz\in E(P)$ is a bridge in $P$. By (F5), $P\not\in\Gamma$.
	Let $yz$ be any edge incident with $y$ in $P$, which exists by Invariant~\ref{inv-degrees}.
	Note that if $P-yz$ has a connected component $C$ isomorphic to a graph in $\F$ and containing $y$ then every edge of $C$
	incident with $y$ is a bridge in $C$; a contradiction with (F5).
	Hence $P-yz$ has exactly one connected component isomorphic to a graph in $\F$, call it $P_{yz}$, and $V(P_{yz})$ contains $z$.
	Assume $P_{yz}$ is incident with an edge of $R$, i.e. there is an edge $x'y'$ with $x'\in V(Q')$ for some leaf component $Q'\in\Gamma(R)$ and $y'\in P_{yz}$. By the same argument, $y'$ is incident with a bridge $y'z'$ in $P$ and $P-y'z'$ contains a connected component $P'_{y'z'}$ from $\F$, such that $z'\in V(P'_{y'z'})$. But since $P_{yz}$ has no bridges, $y'=z$ and $z'=y$, which implies that $P-yz$ has {\em two} connected components isomorphic to a graph in $\F$, a contradiction. Hence $P_{yz}$ is not incident with an edge of $R$.
	Then we remove $Q$, $yz$ and $P_{yz}$ from $F$ and $xy$ from $R$.
	The above discussion shows that Invariants~\ref{inv-degrees} and~\ref{inv-no-new-gammas} hold.
   \item Process each of the remaining components $Q$ of $F$, depending on its kind.
  \begin{enumerate}
   \item If $Q\in\Gamma$, it means that $Q\in\ovGamma(R)$, because otherwise there are leaf components in $\Gamma(R)$, which contradicts Step~\ref{step-ula}.
         Then we find a maximum $k$-edge-colorable subgraph $S_Q\subseteq Q$, which is possible in polynomial time by (F3), and add it to $S$ with the relevant $k$-edge-coloring.
   \item If $Q\not\in\Gamma$ we use the algorithm $\mathcal{A}$ to color at least $\alpha|E(Q)|$ edges of $Q$ and we add the colored edges to $S$.
   \item For every $Q$, $yz$ and $P_{yz}$ deleted in Step~\ref{step-ula}, we find the maximum $k$-edge-colorable subgraph $Q^*$ of $Q$ and $P^*$ of $P_{yz}$.
         Note that the coloring of $P^*$ can be extended to $P^*+yz$ since $\deg_{P^*}(z)<k$.
	 Next we add $Q^*$, $P^*$ and $yz$ to $S$ (clearly we can rename the colors of $P^*+yz$ so that we avoid conflicts with the already colored edges incident with $y$). To sum up, we added $c_k(Q)+c_k(P_{yz})+1$ edges to $S$, which is $\frac{c_k(Q)+c_k(P_{yz})+1}{|E(Q)|+|E(P_{yz})|+1}\ge\beta$ of the edges of $F$ deleted in Step~\ref{step-ula}.
   \item For every $xy$ and $Q$ deleted in Step~\ref{step-jadzia}, let $zw$ be any edge of $Q$ incident with $x$ and then we find the maximum $k$-edge-colorable subgraph $Q^*$ of $Q-zw$ using the algorithm guaranteed by (F3). Next we add $Q^*$ and $xy$ to $S$ (similarly as before, we can rename the colors of $Q^*+xy$ so that we avoid conflicts with the already colored edges incident with $y$). By (F6), $c_k(Q-zw)=c_k(Q)$.
   Recall that in Step~\ref{step-jadzia} two cases might happen: either we deleted only $Q$ and $yz$ from $F$, or we deleted $Q$, $yz$ and $Q'$.
   In the former case we add $c_k(Q)+1$ edges to $S$, which is $\frac{c_k(Q)+1}{|E(Q)|+1}\ge\gamma$ of the edges removed from $F$.
   In the latter case we add  $c_k(Q)+c_k(Q')+2$ edges to $S$, which is $\frac{c_k(Q)+c_k(Q')+2}{|E(Q)|+|E(Q')|+1}>\gamma$ of the edges removed from $F$.
  \end{enumerate}
\end{enumerate}

\begin{proposition}
 Our algorithm has approximation ratio of $\min\{\alpha,\beta,\gamma\}$.
\end{proposition}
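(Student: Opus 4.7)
The plan is to bound $|E(S)|$ from below in terms of $|E(F)|$ and then invoke Lemma~\ref{lem-approx-lower-bound} combined with Lemma~\ref{lem-good-matching}, which together yield
\[ |E(\OPT)|\le |E(F)|-\sum_{Q\in\ovGamma(R)}\ovc_k(Q). \]
Setting $\mu=\min\{\alpha,\beta,\gamma\}$, my target is the inequality
\[ |E(S)|\ge \mu|E(F)|+(1-\mu)\sum_{Q\in\ovGamma(R)}|E(Q)|-\sum_{Q\in\ovGamma(R)}\ovc_k(Q); \]
once this is established, using $\sum_{Q\in\ovGamma(R)}|E(Q)|\ge \sum_{Q\in\ovGamma(R)}\ovc_k(Q)$ and rearranging yields $|E(S)|-\mu|E(\OPT)|\ge (1-\mu)\sum_{Q\in\ovGamma(R)}c_k(Q)\ge 0$, which is the desired conclusion.

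To prove the target inequality I would partition $E(F)$ into four disjoint groups according to which subcase of Step~4 processes each edge: (i) edges inside $\ovGamma(R)$-components (processed by Step~4a), (ii) edges inside non-$\Gamma$ components of the final $F$ (processed by Step~4b), (iii) the edges of the triples $(Q,yz,P_{yz})$ dismantled in Step~4c, and (iv) the edges of the pairs $(Q,yz)$ or triples $(Q,yz,Q')$ dismantled in Step~4d. Groups (i) and (ii) contribute $\sum_{Q\in\ovGamma(R)}c_k(Q)$ and at least $\alpha$ times their own size respectively, the former by (F3) and the latter by the hypothesis on the algorithm $\mathcal{A}$ (valid because Invariant~\ref{inv-no-new-gammas} guarantees that the components in (ii) do not lie in $\F$). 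Group (iv) reduces directly to the definition of $\gamma$: a single-leaf action gives ratio $\frac{c_k(Q)+1}{|E(Q)|+1}\ge\gamma$, and a two-leaf action follows by summing the analogous inequalities for $Q$ and $Q'$ and observing $|E(Q)|+|E(Q')|+2\ge|E(Q)|+|E(Q')|+1$.

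The main technical step, and the one I expect to be the principal obstacle, is group (iii): we must verify
\[ \frac{c_k(Q)+c_k(P_{yz})+1}{|E(Q)|+|E(P_{yz})|+1}\ge\beta. \]
Since $\beta$ is defined over pairs $(A,B)\in\F\times\F$ with $A$ not $k$-regular, I need to confirm that at least one of $Q,P_{yz}$ is not $k$-regular. For $P_{yz}$ this follows directly from its construction: $P_{yz}$ is a connected component of $P-yz$ containing the endpoint $z$ of the removed edge, so $z$ has degree at most $k-1$ in $P_{yz}$, and (F1) then forces $P_{yz}$ to be not $k$-regular, so the pair $(P_{yz},Q)$ witnesses the desired bound. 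Once all four group bounds are in place, summing them and weakening each of $\alpha,\beta,\gamma$ to $\mu$ on groups (ii)--(iv) yields exactly the target inequality, completing the proof.
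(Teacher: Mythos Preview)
Your proposal is correct and follows essentially the same approach as the paper. The paper's proof is a terse chain of inequalities whose first line $|S|\ge \rho(|E(F)|-\sum_{Q\in\ovGamma(R)}|E(Q)|)+\sum_{Q\in\ovGamma(R)}c_k(Q)$ is exactly your target inequality (after expanding $c_k(Q)=|E(Q)|-\ovc_k(Q)$), and it is justified precisely by the four-group partition you describe, since the per-group ratios $\alpha,\beta,\gamma$ are already recorded in the description of Step~4(a)--(d). Your explicit verification that $P_{yz}$ is not $k$-regular (because $z$ loses the edge $yz$) is the point the paper leaves implicit; it is correct and is indeed what makes the pair $(P_{yz},Q)$ admissible in the definition of $\beta$.
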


\begin{proof}
Let $\rho = \min\{\alpha,\beta,\gamma\}$.
 \begin{eqnarray*}
  |S| & \ge & \rho (|E(F)| - \sum_{Q\in\ovGamma(R)}|E(Q)|) + \sum_{Q\in\ovGamma(R)}c_k(Q) \ge \\
      & & \rho (|E(F)| - \sum_{Q\in\ovGamma(R)}|E(Q)| + \sum_{Q\in\ovGamma(R)}c_k(Q)) = \\
      & & \rho (|E(F)| - \sum_{Q\in\ovGamma(R)}\ovc_k(Q)) \ge^{\text{(Lemma~\ref{lem-good-matching})}} \\
      & & \rho (|E(F)| - \sum_{Q\in\ovGamma(\OPT)}\ovc_k(Q)) \ge^{\text{(Lemma~\ref{lem-approx-lower-bound})}}  \rho |E(OPT)|.
 \end{eqnarray*}
\end{proof}

\begin{theorem}
\label{th:meta-algorithm}
Let $\G$ be a family of graphs and let $\F$ be a $k$-normal family of graphs.
Assume there is a polynomial-time algorithm such that for every connected $k$-matching $H \not \in {\mathcal F}$ of a graph in ${\mathcal G}$, the algorithm finds a $k$-edge-colorable subgraph of $H$ with at least $\alpha|E(H)|$ edges and a $k$-edge-coloring of it.
Moreover, let \[\beta = \min_{A,B\in\F\atop\text{$A$ is not $k$-regular}}\frac{c_k(A)+c_k(B)+1}{|E(A)|+|E(B)|+1}\text{,\quad } \gamma=\min_{A\in\F}\frac{c_k(A)+1}{|E(A)|+1}.\]
Then, there is an approximation algorithm for the maximum $k$-ECS problem for graphs in $\G$ with approximation ratio $\min\{\alpha,\beta,\gamma\}$.
\end{theorem}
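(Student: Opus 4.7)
The plan is to combine the algorithmic ingredients assembled above with the counting inequalities already proved. First, I would compute a maximum $k$-matching $F$ of $G$ in polynomial time (e.g.\ via~\cite{schrijver}) and, as in Observation~\ref{obs-degree-k}, preprocess $F$ so that every vertex adjacent to a component $Q\in\Gamma$ already has degree $k$ in $F$ (this can be achieved by swapping edges, and terminates since each swap increases the number of components of $F$). Then Lemma~\ref{lem-approx-lower-bound} gives the key upper bound $|E(\OPT)| \le |E(F)| - \sum_{Q\in\ovGamma(\OPT)}\ovc_k(Q)$, which says that we are allowed to leave this many edges of $\Gamma$-components uncolored without breaking the approximation guarantee.

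Next I would construct the auxiliary $k$-matching $R$ satisfying (M1)--(M3) in polynomial time via the $[f,g]$-factor reduction of Lemma~\ref{lem:R-polynomial}. The point of this $R$ is that Lemma~\ref{lem-good-matching} transfers the $\OPT$-based loss term into a term involving $\ovGamma(R)$, namely $\sum_{Q\in\ovGamma(R)}\ovc_k(Q)\le\sum_{Q\in\ovGamma(\OPT)}\ovc_k(Q)$. By Observation~\ref{obs:stars} the auxiliary forest $H_F$ of components joined by $R$ consists of stars, so in particular every $R$-edge incident with a component of $\Gamma(R)$ is incident to a leaf component; this is what makes the iterative ``cure'' in Steps~\ref{step-jadzia}--\ref{step-ula} well posed.

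Now I would run the four-step coloring procedure exactly as described, verifying at each iteration that Invariants~\ref{inv-degrees} and~\ref{inv-no-new-gammas} are preserved and that the case analysis in Step~\ref{step-ula} is valid (in particular that $P_{yz}$, the unique component of $P-yz$ that lies in $\F$, is not incident to any edge of $R$; this uses 2-edge-connectivity of members of $\F$ from (F5)). For the final bookkeeping, components $Q\in\ovGamma(R)$ are colored to optimum $c_k(Q)$ by (F3); components outside $\Gamma$ that remain in $F$ get at least $\alpha|E(Q)|$ edges colored by the assumed algorithm $\mathcal{A}$; each pair $(Q,P_{yz})$ processed in Step~\ref{step-ula} yields $c_k(Q)+c_k(P_{yz})+1$ colored edges out of $|E(Q)|+|E(P_{yz})|+1$, a ratio of at least $\beta$; and each pair (or triple) processed in Step~\ref{step-jadzia} yields ratio at least $\gamma$, where one uses (F6) to color $c_k(Q)$ edges of $Q-zw$ and extend to the bridge $xy$.

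Setting $\rho = \min\{\alpha,\beta,\gamma\}$, combining all four cases gives
\[
|S| \ge \rho\Bigl(|E(F)|-\sum_{Q\in\ovGamma(R)}|E(Q)|\Bigr) + \sum_{Q\in\ovGamma(R)} c_k(Q) \ge \rho\Bigl(|E(F)| - \sum_{Q\in\ovGamma(R)}\ovc_k(Q)\Bigr),
\]
where the last inequality uses $c_k(Q)\ge\rho\,c_k(Q)$. Applying Lemma~\ref{lem-good-matching} and then Lemma~\ref{lem-approx-lower-bound} yields $|S|\ge \rho\,|E(\OPT)|$, as required. The only subtle point—and the one I would write out most carefully—is the correctness of Step~\ref{step-ula}: showing that when no ``good'' edge $yz$ exists at a leaf $Q\in\Gamma(R)$, the chosen $P_{yz}$ in $\F$ must be $R$-isolated. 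Here the 2-edge-connectivity axiom (F5) is doing the real work, since it forces any second $R$-incidence on $P_{yz}$ to create two $\F$-components in $P-yz$, a contradiction.
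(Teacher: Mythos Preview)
Your proposal is correct and follows essentially the same approach as the paper: you assemble the maximum $k$-matching $F$, the auxiliary matching $R$ via Lemma~\ref{lem:R-polynomial}, run the Step~\ref{step-jadzia}/Step~\ref{step-ula} procedure, and finish with precisely the inequality chain the paper uses (Lemma~\ref{lem-good-matching} then Lemma~\ref{lem-approx-lower-bound}). The only cosmetic difference is that the paper states the final chain as a separate Proposition rather than inlining it, and your highlighted ``subtle point'' about $P_{yz}$ being $R$-isolated via (F5) is exactly the argument the paper gives inside Step~\ref{step-ula}.
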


The above theorem summarizes our discussion in this section.
Now we apply it to particular cases.

\begin{theorem}
The maximum $3$-ECS problem has a $\frac{7}{9}$-approximation algorithm for multigraphs.
\end{theorem}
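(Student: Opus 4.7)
The plan is to invoke Theorem~\ref{th:meta-algorithm} with $k=3$, the class $\G$ of all multigraphs, the exception family $\F=\{G_3\}$, and the subroutine $\mathcal{A}$ supplied by Corollary~\ref{corollary-7-9}. Since every connected component of a 3-matching is a connected subcubic multigraph, Corollary~\ref{corollary-7-9} gives, for any such component not isomorphic to $G_3$, a 3-ECS of size at least $\tfrac{7}{9}|E|$ in polynomial time, which yields $\alpha=\tfrac{7}{9}$.

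The first step I would carry out is the verification that $\{G_3\}$ is a 3-normal family. Properties (F1), (F4), and (F5) are immediate from the structure of $G_3$: it has maximum degree $3$ with exactly one vertex (the apex of the triangle opposite the doubled edge) of smaller degree, it is a fixed constant-size graph, and its two parallel edges together with the remaining triangle edges make it 2-edge-connected. Property (F3) is trivial because $G_3$ has only $4$ edges. Property (F2) reduces to the observation that any 3-vertex multigraph $H$ satisfies $c_3(H)\le 3$: each of the three color classes is a matching and on three vertices any matching has at most one edge, while equality is witnessed by $G_3$ itself since it contains a triangle. Property (F6) requires a short case check: deleting one of the two parallel edges leaves a plain $K_3$ with $c_3=3$, and deleting any triangle edge leaves a three-edge multigraph whose edges all share a common vertex and can therefore be properly colored with the three available colors.

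Next I would compute $\beta$ and $\gamma$ directly from $|E(G_3)|=4$ and $c_3(G_3)=3$. Because $G_3$ is not $3$-regular, the pair $(A,B)=(G_3,G_3)$ is admissible in the minimum defining $\beta$, and indeed it is the only option since $\F$ is a singleton. Thus
\[
\beta=\frac{c_3(G_3)+c_3(G_3)+1}{|E(G_3)|+|E(G_3)|+1}=\frac{3+3+1}{4+4+1}=\frac{7}{9},\qquad \gamma=\frac{c_3(G_3)+1}{|E(G_3)|+1}=\frac{3+1}{4+1}=\frac{4}{5}.
\]
Therefore $\min\{\alpha,\beta,\gamma\}=\tfrac{7}{9}$, and Theorem~\ref{th:meta-algorithm} immediately delivers the claimed $\tfrac{7}{9}$-approximation.

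The main obstacle, mild as it is, is the verification of (F2) and (F6); everything else is either a structural triviality about $G_3$ or a direct arithmetic calculation. Intuitively the algorithm works because the only graph preventing Corollary~\ref{corollary-7-9} from achieving ratio $\tfrac{7}{9}$ is $G_3$ itself, and the meta-algorithm's ``curing'' step handles leaf components isomorphic to $G_3$ by charging them together with either a neighboring component or another $G_3$-component, and both of these bookkeeping rates evaluate to exactly $\tfrac{7}{9}$.
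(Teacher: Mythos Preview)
Your proposal is correct and follows exactly the same approach as the paper: apply Theorem~\ref{th:meta-algorithm} with $\F=\{G_3\}$, $\alpha=\tfrac{7}{9}$ from Corollary~\ref{corollary-7-9}, and compute $\beta=\tfrac{7}{9}$, $\gamma=\tfrac{4}{5}$. The only difference is that you spell out the verification of the $3$-normality conditions (F1)--(F6), which the paper dismisses with ``it is easy to check''; your checks are all correct.
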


\begin{proof}
Let $\F=\{G_3\}$. It is easy to check that $\F$ is $3$-normal.
Now we give the values of parameters $\alpha, \beta$ and  $\gamma$ from  Theorem~\ref{th:meta-algorithm}.
By Corollary~\ref{corollary-7-9}, $\alpha=\frac{7}{9}$.
Notice that $c_3(G_3)=3$ and $|E(G_3)|=4$.
Hence, $\beta=\frac{7}{9}$ and $\gamma=\frac{4}{5}$.
By Theorem~\ref{th:meta-algorithm} the claim follows.
\end{proof}

\begin{theorem}
The maximum $3$-ECS problem has a $\frac{13}{15}$-approximation algorithm for simple graphs.
\end{theorem}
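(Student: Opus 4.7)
The plan is to apply the meta-algorithmic framework of Theorem~\ref{th:meta-algorithm} with $\mathcal{G}$ being the family of simple graphs, $k=3$, and the exception family $\mathcal{F}=\{B_3\}$. Note that among the three exceptions appearing in Corollary~\ref{corollary-13-15}, the graphs $G_3$ and $G^*_5$ both contain a multiple edge, so they cannot arise as components of a $3$-matching of a simple graph nor be isomorphic to one; thus $B_3$ is the only relevant exception.

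My first task will be to verify that $\{B_3\}$ is a $3$-normal family, i.e., to check properties (F1)--(F6). Most of these are immediate: $B_3$ has one vertex of degree $2$ and three of degree $3$, giving (F1); every simple graph on $5$ vertices has at most $6$ edges in any $3$-edge-colorable subgraph (each color class is a matching of size at most $\lfloor 5/2\rfloor = 2$), matching $c_3(B_3)=6$ and giving (F2); properties (F3) and (F4) are trivial for a fixed-size graph; and (F5) holds because subdividing a single edge of the $3$-edge-connected graph $K_4$ preserves $2$-edge-connectivity.

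The main obstacle, such as it is, will be (F6), which asserts $c_3(B_3-uv)=c_3(B_3)=6$ for every edge $uv\in E(B_3)$. I plan to handle this by noting that $B_3$ has only three edge-orbits under its automorphism group (an edge incident with the subdivision vertex, an outer $K_4$-edge adjacent to the subdivided edge, and the unique $K_4$-edge opposite to it) and exhibiting a proper $3$-edge-coloring of the remaining $6$ edges in each representative case. Each case reduces to coloring a $5$-vertex subcubic graph with $6$ edges and is resolved by direct inspection.

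With $\{B_3\}$ shown to be $3$-normal, Corollary~\ref{corollary-13-15} supplies the algorithm $\mathcal{A}$ of Theorem~\ref{th:meta-algorithm} with $\alpha=\tfrac{13}{15}$: every connected component of a $3$-matching of a simple graph is a connected subcubic simple graph, cannot contain $G_3$ or be $G^*_5$, and if it is not $B_3$ then Corollary~\ref{corollary-13-15} colors at least $\tfrac{13}{15}$ of its edges in polynomial time. Finally, using $c_3(B_3)=6$ and $|E(B_3)|=7$, and observing that $B_3$ is not $3$-regular, I compute
\[\beta = \frac{c_3(B_3)+c_3(B_3)+1}{|E(B_3)|+|E(B_3)|+1} = \frac{13}{15}, \qquad \gamma = \frac{c_3(B_3)+1}{|E(B_3)|+1} = \frac{7}{8},\]
so Theorem~\ref{th:meta-algorithm} yields an approximation ratio of $\min\{\alpha,\beta,\gamma\}=\tfrac{13}{15}$, as required.
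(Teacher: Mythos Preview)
Your proposal is correct and follows exactly the paper's approach: apply Theorem~\ref{th:meta-algorithm} with $\mathcal{F}=\{B_3\}$, invoke Corollary~\ref{corollary-13-15} for $\alpha=\tfrac{13}{15}$, and compute $\beta=\tfrac{13}{15}$, $\gamma=\tfrac{7}{8}$. One small slip: $B_3$ has \emph{four} vertices of degree~$3$ (and one of degree~$2$), not three; this does not affect (F1) or anything else in the argument.
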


\begin{proof}
Let $\F=\{B_3\}$. It is easy to check that $\F$ is $3$-normal.
Now we give the values of parameters $\alpha, \beta$ and  $\gamma$ from  Theorem~\ref{th:meta-algorithm}.
By Corollary~\ref{corollary-13-15}, $\alpha=\frac{13}{15}$.
Notice that $c_3(B_3)=6$ and $|E(B_3)|=7$.
Hence, $\beta=\frac{13}{15}$ and $\gamma=\frac{7}{8}$.
By Theorem~\ref{th:meta-algorithm} the claim follows.
\end{proof}

\begin{theorem}
The maximum $4$-ECS problem has a $\frac{9}{11}$-approximation algorithm for simple graphs.
\end{theorem}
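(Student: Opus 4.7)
The plan is to invoke Theorem~\ref{th:meta-algorithm} with $\F = \{K_5\}$, exactly in the style of the two preceding theorems. First I would verify that $\{K_5\}$ is a $4$-normal family. Properties (F1), (F4), (F5) are immediate ($K_5$ is $4$-regular and $4$-edge-connected), and (F3) is trivial because $K_5$ is of constant size. Property (F2) reduces to $c_4(H) \le c_4(K_5)$ for every subgraph $H \subseteq K_5$, which follows since any $4$-ECS of $H$ is a $4$-ECS of $K_5$. For (F6) I would use Lemma~\ref{lem:even-clique} with $k=4$: it yields a partial $4$-coloring of $K_5$ with $8$ colored edges whose uncolored edges form a matching, and by the vertex-transitivity of $K_5$ we may assume any prescribed edge $uv$ is one of the uncolored ones, so $c_4(K_5 - uv) \ge 8 = c_4(K_5)$; the reverse inequality follows from monotonicity under subgraphs. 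Hence $c_4(K_5 - uv) = c_4(K_5)$ for every $uv \in E(K_5)$.

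Next I would read off the three parameters. By Corollary~\ref{cor:main-4}, every connected simple graph of maximum degree $4$ different from $K_5$ admits in polynomial time a $4$-edge-colorable subgraph with at least $\tfrac{5}{6}|E|$ edges; since a connected $4$-matching is exactly a connected graph of maximum degree at most $4$, this supplies the algorithm $\mathcal{A}$ with $\alpha = \tfrac{5}{6}$. Since $\F = \{K_5\}$ and $K_5$ is $4$-regular, the index set in the definition of $\beta$ is empty, so $\beta = +\infty$ and this constraint is vacuous. For $\gamma$, we use $|E(K_5)| = 10$ and $c_4(K_5) = 8$ (from Lemma~\ref{lem:even-clique}), giving
\[\gamma = \frac{c_4(K_5)+1}{|E(K_5)|+1} = \frac{9}{11}.\]

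Finally, $\min\{\alpha,\beta,\gamma\} = \min\{\tfrac{5}{6}, +\infty, \tfrac{9}{11}\} = \tfrac{9}{11}$, so Theorem~\ref{th:meta-algorithm} yields the desired $\tfrac{9}{11}$-approximation. There is no genuine obstacle here: the entire argument is a routine verification that $\{K_5\}$ is $4$-normal together with the parameter computation, with all of the combinatorial work already packaged into Corollary~\ref{cor:main-4} and Lemma~\ref{lem:even-clique}.
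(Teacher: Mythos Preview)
Your proposal is correct and follows exactly the paper's approach: set $\F=\{K_5\}$, verify $4$-normality, read off $\alpha=\tfrac{5}{6}$ from Corollary~\ref{cor:main-4} (the paper cites Theorem~\ref{thm:main}, which packages the same bound), note $\beta=\infty$ since $K_5$ is $4$-regular, compute $\gamma=\tfrac{9}{11}$ from $c_4(K_5)=8$ and $|E(K_5)|=10$, and apply Theorem~\ref{th:meta-algorithm}. The only difference is that you spell out the verification of (F1)--(F6) where the paper simply asserts ``it is easy to check''.
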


\begin{proof}
Let $\F=\{K_5\}$. It is easy to check that $\F$ is $4$-normal.
Now we give the values of parameters $\alpha, \beta$ and  $\gamma$ from  Theorem~\ref{th:meta-algorithm}.
By Theorem~\ref{thm:main}, $\alpha=\frac{5}{6}$.
Observe that $\beta=\infty$, since $\F$ contains only $K_5$ which is $4$-regular.
Notice that $c_4(K_5)=8$ and $|E(K_5)|=10$.
Hence, $\gamma=\frac{9}{11}$.
By Theorem~\ref{th:meta-algorithm} the claim follows.
\end{proof}

\begin{theorem}
The maximum $6$-ECS problem has a $\frac{19}{22}$-approximation algorithm for simple graphs.
\end{theorem}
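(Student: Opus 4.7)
The plan is to mirror the template used for the earlier approximation theorems in this section and invoke Theorem~\ref{th:meta-algorithm} with the singleton exception family $\F = \{K_7\}$. This is the natural choice because Lemma~\ref{lem:main-6} (equivalently Theorem~\ref{thm:main}(c)) tells us that $K_7$ is the \emph{only} simple graph of maximum degree $6$ for which $\gamma_6$ fails to reach $\tfrac{19}{22}$; thus the hypothesis of the meta-algorithm on an algorithm $\mathcal{A}$ producing an $\alpha$-fraction $6$-edge-colorable subgraph for every connected $6$-matching outside $\F$ is satisfied with $\alpha = \tfrac{19}{22}$.

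Next I would verify that $\F = \{K_7\}$ is $6$-normal, i.e.\ check conditions (F1)--(F6). Conditions (F1), (F3), (F4), (F5) are immediate: $K_7$ is $6$-regular, a maximum $6$-edge-colorable subgraph of a fixed $7$-vertex graph is found in constant time, isomorphism to $K_7$ is trivially testable, and $K_7$ is $2$-edge-connected. For (F2), any simple graph on $7$ vertices has at most $\binom{7}{2}=21$ edges, and since in a $6$-edge-coloring every color class is a matching of size at most $\lfloor 7/2\rfloor = 3$, every such subgraph admits at most $18 = c_6(K_7)$ colored edges. For (F6) I would use Lemma~\ref{lem:odd-clique} applied to $k=7$: $K_8$ admits a proper $7$-edge-coloring, and deleting one vertex gives a proper $7$-edge-coloring of $K_7$ whose seven color classes are all matchings of size $3$. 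In particular, for any edge $uv$ we can drop the color class containing $uv$ and obtain $c_6(K_7-uv) = 18 = c_6(K_7)$.

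Finally I would compute the three parameters appearing in Theorem~\ref{th:meta-algorithm}. From Lemma~\ref{lem:main-6} we already have $\alpha = \tfrac{19}{22}$. Since the only member of $\F$ is $K_7$, which \emph{is} $6$-regular, the minimum defining $\beta$ is over the empty set, so $\beta = \infty$. And $\gamma = \tfrac{c_6(K_7)+1}{|E(K_7)|+1} = \tfrac{18+1}{21+1} = \tfrac{19}{22}$. Theorem~\ref{th:meta-algorithm} then gives an approximation ratio of $\min\{\alpha,\beta,\gamma\} = \tfrac{19}{22}$, proving the claim.

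The only mildly technical step is verifying (F6), which hinges on the fact that $K_7$ has a proper $7$-edge-coloring in which every color class is a perfect-matching minus one edge; no real obstacle is expected, since all of the heavy machinery (the combinatorial bound for graphs with $\Delta=6$ avoiding $K_7$, and the general meta-framework) is already in place from the earlier parts of the paper.
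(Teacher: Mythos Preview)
Your proposal is correct and follows exactly the same approach as the paper's own proof: take $\F=\{K_7\}$, verify it is $6$-normal, compute $\alpha=\beta=\gamma=\tfrac{19}{22}$ (with $\beta=\infty$), and apply Theorem~\ref{th:meta-algorithm}. In fact you supply more detail than the paper does, since the paper simply asserts that $\F$ is $6$-normal and cites Lemma~\ref{lem:even-clique} for $c_6(K_7)=18$, whereas you spell out the verification of (F1)--(F6) explicitly.
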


\begin{proof}
Let $\F=\{K_7\}$. It is easy to check that $\F$ is $6$-normal.
Now we give the values of parameters $\alpha, \beta$ and  $\gamma$ from  Theorem~\ref{th:meta-algorithm}.
By Theorem~\ref{thm:main}, $\alpha=\frac{19}{22}$.
Observe that $\beta=\infty$, since $\F$ contains only $K_7$ which is $6$-regular.
Notice that by Lemma~\ref{lem:even-clique}, $c_6(K_7)=18$ and $|E(K_7)|=21$.
Hence, $\gamma=\frac{19}{22}$.
By Theorem~\ref{th:meta-algorithm} the claim follows.
\end{proof}

Directly from Proposition~\ref{prop:approx} and from Theorem~\ref{thm:main} we get the following corollaries.

\begin{corollary}
The maximum $5$-ECS problem has a $\frac{23}{27}$-approximation algorithm for simple graphs.
 \end{corollary}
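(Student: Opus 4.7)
The plan is to apply the $k$-matching reduction of Feige, Ofek and Wieder (Proposition~\ref{prop:approx}) directly, since Theorem~\ref{thm:main}(b) gives the required combinatorial bound with \emph{no} exception graphs for $\Delta=5$. Concretely, given an input simple graph $G$, I would first compute a maximum $5$-matching $F$ of $G$ in polynomial time (using e.g.\ Schrijver's algorithm, as cited in the paper). Because every $5$-edge-colorable subgraph of $G$ is itself a $5$-matching, we have $|E(F)|\ge |E(\OPT)|$, where $\OPT$ is the maximum $5$-ECS of $G$.

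Next, I would observe that $F$ is a simple graph of maximum degree at most $5$, so Theorem~\ref{thm:main}(b) applies: we can find in polynomial time a $5$-edge-colorable subgraph $H\subseteq F$ with $|E(H)|\ge \tfrac{23}{27}|E(F)|$. Combining the two inequalities gives
\[
|E(H)| \;\ge\; \tfrac{23}{27}|E(F)| \;\ge\; \tfrac{23}{27}|E(\OPT)|,
\]
which is exactly a $\tfrac{23}{27}$-approximation. Since $H$ together with its $5$-edge-coloring is produced in polynomial time by Theorem~\ref{thm:main}, the whole procedure runs in polynomial time.

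There is essentially no obstacle here: unlike the cases $\Delta\in\{4,6\}$, where $K_5$ and $K_7$ appear as exception graphs and force us to invoke the full meta-algorithm framework of Theorem~\ref{th:meta-algorithm}, for $\Delta=5$ the combinatorial bound of Theorem~\ref{thm:main}(b) holds unconditionally. Thus the corollary follows by literally plugging $\rho=\tfrac{23}{27}$ and $k=5$ into Proposition~\ref{prop:approx}.
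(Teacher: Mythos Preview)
Your proposal is correct and matches the paper's own argument: the paper states that this corollary follows ``directly from Proposition~\ref{prop:approx} and from Theorem~\ref{thm:main}'', which is precisely the $k$-matching reduction you describe, exploiting that Theorem~\ref{thm:main}(b) has no exception graph for $\Delta=5$.
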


\begin{corollary}
The maximum $7$-ECS problem has a $\frac{22}{25}$-approximation algorithm for simple graphs.
 \end{corollary}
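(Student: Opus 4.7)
The plan is to apply Proposition~\ref{prop:approx} together with part (d) of Theorem~\ref{thm:main} directly; no additional machinery is needed. The crucial feature distinguishing the $\Delta=7$ case from $\Delta=3,4,6$ is that Theorem~\ref{thm:main}(d) gives the bound $\gamma_7(G)\ge\frac{22}{25}$ for \emph{every} connected simple graph of maximum degree $7$, with no exceptional graph to handle. Consequently the exception-curing framework of Theorem~\ref{th:meta-algorithm}, which was required in the $K_5$ and $K_7$ cases, is unnecessary here.

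Concretely, given an input simple graph $G$, I would first compute a maximum $7$-matching $F$ of $G$ in polynomial time (this is standard $b$-matching, as noted in the discussion of the Feige--Ofek--Wieder approach). Since every $7$-edge-colorable subgraph of $G$ is in particular a $7$-matching, we have $|E(F)|\ge |E(\OPT)|$, where $\OPT$ is a maximum $7$-ECS of $G$. Next I would feed $F$, which is a simple graph with $\Delta(F)\le 7$, into the polynomial-time algorithm guaranteed by Theorem~\ref{thm:main}(d). The algorithm produces a $7$-edge-colorable subgraph $H\subseteq F$ with $|E(H)|\ge \tfrac{22}{25}|E(F)|\ge \tfrac{22}{25}|E(\OPT)|$, which we return as the approximate solution.

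One minor point worth noting is that Theorem~\ref{thm:main} is phrased for connected graphs of maximum degree exactly $\Delta$, while the $7$-matching $F$ may be disconnected and may contain components of strictly smaller maximum degree. This is not a real obstacle: apply Theorem~\ref{thm:main}(d) to each connected component of $F$ whose maximum degree equals $7$, and for any component $C$ with $\Delta(C)\le 6$ apply Vizing's theorem to properly edge-color all of $E(C)$ using at most $7$ colors. The former contribute at least $\tfrac{22}{25}|E(C)|$ colored edges and the latter contribute $|E(C)|\ge \tfrac{22}{25}|E(C)|$, so summing over all components still yields at least $\tfrac{22}{25}|E(F)|$ colored edges in total. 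Thus there is no real obstacle; the corollary is essentially a restatement of Proposition~\ref{prop:approx} applied to Theorem~\ref{thm:main}(d).
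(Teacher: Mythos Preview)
Your proposal is correct and matches the paper's approach exactly: the paper derives this corollary directly from Proposition~\ref{prop:approx} combined with Theorem~\ref{thm:main}(d), with no recourse to the exception-curing framework of Theorem~\ref{th:meta-algorithm}. Your additional remark about handling components of $F$ with maximum degree strictly less than $7$ via Vizing's theorem is a valid clarification that the paper leaves implicit.
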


\section{Further Work}
The most important open problem seems to be to provide answers to Questions~\ref{q-odd} and~\ref{q-even} from Section~\ref{intro:combin} for all $\Delta\ge 8$. 
We think that although our techniques (with some hard work) might be sufficient to improve the Vizing bound when $\Delta=9$ or $\Delta=10$, for large values of $\Delta$ some new ideas are needed.

It would be also interesting to improve our bounds for $\Delta\le 7$. In particular the best upper bound for even $\Delta$, and for $G\ne K_{\Delta+1}$ we are aware of is $\frac{\Delta}{\Delta+1-2/\Delta}$, attained by $K_{\Delta+1}-e$, i.e. the $K_{\Delta+1}$ with one edge removed. The lemma below provides an upper bound for odd values of $\Delta$.

\begin{lemma}
 \label{lem:upper-odd}
 For every odd value of $\Delta$ there is a graph of maximum degree $\Delta$ such that \[\gamma_\Delta(G)=\frac{\Delta+1}{\Delta+2-\tfrac{1}\Delta}.\]
\end{lemma}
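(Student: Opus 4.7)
My plan is to exhibit an explicit witness $B_\Delta$. Take $V(K_{\Delta+2})=\{v_0,v_1,\ldots,v_{\Delta+1}\}$ and let
\[R \;=\; \{v_0v_1,\; v_0v_2\}\;\cup\;\{v_{2i+1}v_{2i+2}\;:\; 1\le i\le(\Delta-1)/2\},\]
a set of $(\Delta+3)/2$ edges forming the $2$-path $v_1v_0v_2$ together with a perfect matching on $\{v_3,\ldots,v_{\Delta+1}\}$. Set $B_\Delta:=K_{\Delta+2}-R$. A direct check shows $\deg(v_0)=\Delta-1$ and $\deg(v_i)=\Delta$ for $i\ge 1$, so the maximum degree of $B_\Delta$ equals $\Delta$, and $|E(B_\Delta)|=\binom{\Delta+2}{2}-(\Delta+3)/2=(\Delta^2+2\Delta-1)/2$.

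For the upper bound I would use that $|V(B_\Delta)|=\Delta+2$ is odd, so every matching of $B_\Delta$ has at most $(\Delta+1)/2$ edges, and hence every $\Delta$-edge-colorable subgraph has at most $\Delta(\Delta+1)/2$ edges. Dividing by $|E(B_\Delta)|$ gives
\[\gamma_\Delta(B_\Delta)\;\le\;\frac{\Delta(\Delta+1)}{\Delta^2+2\Delta-1}\;=\;\frac{\Delta+1}{\Delta+2-1/\Delta}.\]

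For the matching lower bound I would produce a proper $\Delta$-edge-coloring of $B_\Delta$ with $\Delta(\Delta+1)/2$ colored edges by covering $R$ with two color classes of a proper $(\Delta+2)$-edge-coloring of $K_{\Delta+2}$. Consider the two near-perfect matchings $A=\{v_0v_1,v_3v_4,v_5v_6,\ldots,v_\Delta v_{\Delta+1}\}$ (missing $v_2$) and $B=\{v_0v_2,v_1v_3,v_4v_5,\ldots,v_{\Delta-1}v_\Delta\}$ (missing $v_{\Delta+1}$), each of size $(\Delta+1)/2$. One verifies that $R\subseteq A\cup B$ and that $A\cup B$ is precisely the Hamiltonian path $v_2\,v_0\,v_1\,v_3\,v_4\cdots v_{\Delta+1}$ of $K_{\Delta+2}$ whose edges alternate between $B$ and $A$. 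If the residual graph $G:=K_{\Delta+2}-(A\cup B)$ admits a proper $\Delta$-edge-coloring, then adjoining the classes $A$ and $B$ produces a proper $(\Delta+2)$-edge-coloring of $K_{\Delta+2}$, and discarding $A,B$ leaves a $\Delta$-edge-coloring of $B_\Delta$ supported on $E(G)$ with $\binom{\Delta+2}{2}-(\Delta+1)=\Delta(\Delta+1)/2$ edges, matching the upper bound.

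The principal obstacle is thus to show that $G$---which is $K_n$ minus a Hamiltonian path for odd $n=\Delta+2$---is class~$1$, i.e.\ $\chi'(G)=\Delta$. Its maximum degree is $\Delta$ (attained exactly at the two path endpoints $v_2,v_{\Delta+1}$), its edge count $\Delta(\Delta+1)/2$ exactly saturates the matching bound $\Delta\lfloor n/2\rfloor$, and a $\Delta$-edge-coloring must partition $E(G)$ into $\Delta$ near-perfect matchings, one missing each interior vertex of the removed path. I would construct such a decomposition explicitly via a rotational difference pattern modeled on the modular $1$-factorization of $K_{\Delta+2}$, or alternatively by induction on $\Delta$ with the hand-verified cases $\Delta\in\{3,5\}$ serving as base. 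Combining the two bounds then yields the stated equality.
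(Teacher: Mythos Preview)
Your graph $B_\Delta$ is exactly the paper's graph, just described via its complement in $K_{\Delta+2}$ rather than built up from $K_{\Delta+1}$: in the paper's language, $v_0$ is the added vertex, $\{v_1,v_2\}$ are the two unmatched vertices of $K_{\Delta+1}$, and $\{v_{2i+1}v_{2i+2}\}$ is the removed matching. Your upper-bound argument (each color class has at most $(\Delta+1)/2$ edges since $|V|=\Delta+2$ is odd) is likewise identical to the paper's.

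The divergence is in the lower bound, and here you have made your life much harder than necessary and left a real gap. Establishing that $K_{\Delta+2}$ minus a Hamiltonian path is class~1 is a nontrivial statement; your proposed ``rotational difference pattern'' is not spelled out, and the ``induction on $\Delta$'' sketch has no indicated inductive step. By contrast, the paper's argument is two lines. In your notation: delete $v_0$. What remains is $K_{\Delta+1}$ on $\{v_1,\ldots,v_{\Delta+1}\}$ minus the matching $\{v_{2i+1}v_{2i+2}\}$. Since $\Delta$ is odd, $K_{\Delta+1}$ has a proper $\Delta$-edge-coloring (a 1-factorization); restrict it. Each removed matching edge $v_{2i+1}v_{2i+2}$ carried some color $c_i$, which is now free at both endpoints; assign $c_i$ to the edge $v_0v_{2i+1}$. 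This colors $\binom{\Delta+1}{2}-\tfrac{\Delta-1}{2}+\tfrac{\Delta-1}{2}=\Delta(\Delta+1)/2$ edges of $B_\Delta$, matching the upper bound.

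So: drop the Hamiltonian-path route entirely and use the 1-factorization of $K_{\Delta+1}$ instead.
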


\begin{proof}
Let $\Delta = 2\ell + 1$.
Begin with $K_{\Delta+1}$. Remove a matching $M$ of size $\ell$. Add a new vertex $v$ and add edges between $v$ and $V(M)$.
Denote the resulting graph by $B_\Delta$. Observe that the maximum degree of $B_\Delta$ is $\Delta$. 
We see that $|E(B_\Delta)|={\Delta+1\choose 2} + \ell$.
Consider a maximum $\Delta$-edge-colorable subgraph $H$ of $B_\Delta$.
Since each of the $\Delta$ color classes has at most $(\Delta+1)/2$ edges, $|E(H)| \le \Delta\cdot (\Delta+1)/2 = {\Delta\choose 2}$.
It is easy to see that actually $|E(H)|={\Delta\choose 2}$; just consider the coloring of $K_\Delta$ from Lemma~\ref{lem:odd-clique}, and for each of the removed edges, say $xy$, copy its color to one of the new edges incident with $xy$, say $vx$.
It follows that $\gamma_\Delta(B_\Delta)={\Delta+1\choose 2}/({\Delta+1\choose 2}+\ell)=\frac{\Delta+1}{\Delta+2-1/\Delta}$.
\end{proof}

Another interesting question is the following.

\begin{question}
\label{q:13-15}
Does $\gamma_3(G)\ge\frac{13}{15}+\varepsilon$ for some $\varepsilon>0$ when $G$ is a simple graph isomorphic neither to $B_3$ nor to the Petersen graph?
\end{question}

\section*{Acknowledgments}

We are very grateful to Adrian Kosowski for helpful remarks regarding the state-of-art of the $k$-ECS problem in multigraphs.
We also thank anonymous reviewers for careful reading and many helpful remarks.

\bibliographystyle{abbrv}

\end{document}